 \newcommand{\href}[2]{#2}
\newtheorem{theorem}{Theorem}
\newtheorem{proposition}[theorem]{Proposition}
\newtheorem{corollary}[theorem]{Corollary}
\newtheorem{lemma}[theorem]{Lemma}
\newtheorem{definition}[theorem]{Definition}
\newtheorem{remark}[theorem]{Remark}
\newcommand{\NN}{\ensuremath{\mathbb{N}}}
\newcommand{\ZZ}{\ensuremath{\mathbb{Z}}}
\newcommand{\QQ}{\ensuremath{\mathbb{Q}}}
\newcommand{\RR}{\ensuremath{\mathbb{R}}}
\newcommand{\CC}{\ensuremath{\mathbb{C}}}
\newcommand{\abs}[1]{\ensuremath{\mathopen\lvert #1 \mathclose\rvert}}
\newcommand{\norm}[1]{\ensuremath{\mathopen\lVert #1 \mathclose\rVert}}
\newcommand{\height}[1]{\ensuremath{\abs{#1}_\infty}}
\newcommand{\sgn}{\mathop{\mathrm{sgn}}}
\newcommand{\val}{\mathop{\mathrm{val}}}
\newcommand{\cont}{\mathop{\mathrm{cont}}}
\newcommand{\diag}{\mathop{\mathrm{diag}}}
\newcommand{\dotleq}{\ensuremath{\preccurlyeq}}
\newcommand{\dotgeq}{\ensuremath{\succcurlyeq}}
\newcommand{\ball}{\ensuremath{\mathrm{B}}}
\newcommand{\bit}{\ensuremath{\operatorname{bit}}}
\newcommand{\sep}{\ensuremath{\operatorname{sep}}}
\newcommand{\bitsize}[1]{\ensuremath{\mathcal{L}\left( #1 \right)}\xspace}
\newcommand{\x}{\mathbf{x}\xspace}
\newcommand{\dg}[2][]{\ensuremath{\mathrm{deg}_{#1}(#2)}\xspace}
\def\ceil#1{\lceil #1 \rceil}
\newcommand{\transpose}{\ensuremath{\mathsf{T}}}
\def\floor#1{\lfloor #1 \rfloor}
\def\Paren#1{\left( #1 \right)}
\title{Exact Algorithms for Solving Stochastic Games\thanks{An extended abstract of this paper was presented at STOC'11. Hansen, Miltersen and Tsigaridas acknowledge support from
the Danish National Research Foundation and The National Science Foundation of
China (under the grant 61061130540) for the Sino-Danish Center for the Theory of
Interactive Computation, within which this work was performed. They also
acknowledge support from the Center for Research in Foundations of Electronic Markets (CFEM), supported by the Danish Strategic Research Council.}}
\author{Kristoffer Arnsfelt Hansen%
\thanks{Computer Science Department, Aarhus University, Denmark.}
\and
Michal Kouck\'{y}%
\thanks{Institute of Mathematics of Czech Academy of Sciences.
Partially supported by GA \v{C}R P202/10/0854, project No.~1M0021620808 of M\v{S}MT \v{C}R, 
    Institutional Research Plan No.~AV0Z10190503 and grant IAA100190902 of GA AV \v{C}R.}
\and
Niels Lauritzen%
\thanks{Mathematics Department, Aarhus University, Denmark.}
\and
Peter Bro Miltersen%
\thanks{Computer Science Department, Aarhus University, Denmark.}
{\ }and{\ } 
Elias P. Tsigaridas\thanks{
Computer Science Department, Aarhus University, Denmark.
Partially supported by an individual postdoctoral grant from the
Danish Agency for Science, Technology and Innovation.}
}
\date{\today}
\begin{document}
\maketitle

\begin{abstract}
  Shapley's {\em discounted stochastic games}, Everett's {\em recursive games} and Gillette's {\em undiscounted stochastic games}
  are classical models of game theory describing two-player zero-sum
  games of potentially infinite duration. We describe
  algorithms for exactly solving these games.  
%As the values of these
%  games may be irrational numbers, our algorithms output the
%  value of the game given as input in isolating interval
%  representation.  
When the number of positions of the game is
  constant, our algorithms run in polynomial time.
% and are the first
%  algorithms with this property.  
%As a byproduct, we give new improved
%  upper and lower bounds on the algebraic degree of the values of
%  stochastic and recursive games as a function of the combinatorial
%  parameters of the game. As another byproduct, we give a new upper
%  bound on the complexity of the strategy iteration algorithm for
%  concurrent reachability games, a well-studied special case of
%  recursive games.  
\end{abstract}

% \newpage

%A category including the fourth, optional field follows...
%\category{F.2.1}{Theory of Computing}{Analysis of Algorithms}[Numerical Algorithms and Problems]
%\category{G.3}{Mathematics of Computing}{Probability and Statistics}[Stochastic processes]
% see: http://portal.acm.org/ccs.cfm?part=author&coll=DL&dl=ACM&row=G.1&idx=7&CFID=14605196&CFTOKEN=22064621
%\category{G.1.5}{Mathematics of Computing}{Numerical Analysis}[Roots of Nonlinear Equations]

%\terms{Algorithms, Theory}

\section{Introduction}
Shapley's model of finite {\em stochastic games} \cite{PNAS:Shapley53}
is a classical model of game theory describing two-player zero-sum
games of (potentially) infinite duration. Such a game 
is given by a finite set of positions $1, \ldots, N$, with a $m_k \times
n_k$ {\em reward}
matrix $(a_{ij}^k)$ associated to each position $k$, and an $m_k \times n_k$ {\em
  transition} matrix $(p_{ij}^{kl})$ associated to each pair of positions
$k$ and $l$. The game is played in rounds, with some position $k$ being
the {\em current} position in each round. At each such round, Player I chooses an
action $i \in \{1,2,\ldots,m_k\}$ while simultaneously, Player II
chooses an action $j \in \{1,2,\ldots,n_k\}$, after which the (possibly negative) {\em reward}
$a^k_{ij}$ is paid by Player II to Player I, and with probability
$p_{ij}^{kl}$ the current position becomes $l$ for the next round.

During play of a stochastic game, a sequence of rewards is paid by Player II
to Player I. There are three standard ways of associating a {\em payoff} to
Player I from such a sequence, leading to three different variants of
the stochastic game model:

{\it Shapley games. }
  In Shapley's original paper, the payoff is simply the sum of
  rewards. While this is not well-defined in general, in Shapley's
  setting it is required that for all positions
  $k$, $\sum_l p^{kl}_{ij} < 1$, with the remaining probability mass
  resulting in termination of play. Thus, no matter which actions are
  chosen by the players, play eventually ends with probability 1, making the
  payoff well-defined except with probability 0. We shall refer to
  this original variant of the stochastic games model as {\em Shapley
    games}. Shapley observed that an alternative formulation of this
  payoff criterion is to require $\sum_l p^{kl}_{ij} =
  1$, but {\em discounting} rewards, i.e., penalizing a reward
  accumulated at time $t$ by a factor of $\gamma^t$ where $\gamma$ is a {\em
    discount factor} strictly between 0 and 1. Therefore, Shapley
  games are also often referred to as {\em discounted} stochastic games. Using the Banach fixed point
  theorem in combination with the von Neumann minimax theorem for
  matrix games, Shapley showed that all Shapley games have a value,
  or, more precisely, a {\em value vector}, one value for each position.
Also, the values can be guaranteed by both players by a {\em stationary
  strategy}, i.e., a strategy that associates a fixed probability
distribution on actions to each position and therefore does not take
history of play into account.

{\it Gillette games. } Gillette \cite{gil} requires that for all
$k,i,j$, $\sum_l p^{kl}_{ij} = 1$, i.e., all plays are infinite. The
total payoff to Player I is $\liminf_{T \rightarrow \infty}
(\sum_{t=1}^T r_i)/T$ where $r_t$ is the reward collected at round
$t$. Such games are called {\em undiscounted} or {\em limiting
  average} stochastic games. In this paper, for coherence of
terminology, we shall refer to them as {\em Gillette games}. It is
much harder to see that Gillette games have values than that Shapley
games do. In fact, it was open for many years if the concrete game
{\em The Big Match} with only three positions that was suggested by
Gillette has a value. This problem was resolved by Blackwell and
Ferguson \cite{Bigmatch}, and later, Mertens and Neyman \cite{MN}
proved in an ingenious way that all Gillette games have value vectors,
using the result of Bewley and
Kohlberg~\cite{BewleyKohlberg1976}. However, the values can in general
only be approximated arbitrarily well by strategies of the players,
not guaranteed exactly, and non-stationary strategies (taking history
of play into account) are needed even to achieve such
approximations. In fact, {\em The Big Match} proves both of these
points.

{\it Everett games. }
Of generality between Shapley games and Gillette games is
  the model of {\em recursive games} of Everett \cite{AMS:Everett57}.
We shall refer to these games as {\em Everett games}, also to avoid
confusion with the largely unrelated notion of recursive games of
Etessami and Yannakakis \cite{EY}.
In Everett's model, we have $a^k_{ij} = 0$ for all $i,j,k$, i..e,
rewards are {\em not} accumulated during play. For each particular
$k$, we can have either $\sum_l p^{kl}_{ij} < 1$ or $\sum_l p^{kl}_{ij} =
1$. In the former case, a prespecified payoff $b_{ij}^k$ is associated to
the termination outcome. Payoff 0 is associated with infinite play.
The special case of Everett games where $b^k_{ij} = 1$ for all
$k,i,j$ 
has been studied under the name of {\em concurrent reachability
  games} 
in the computer
science literature \cite{AHK,
  ChatQest,LICS:HKM09,CRGIteration}. 
Everett showed that Shapley games can be seen 
as a special case of Everett games. Also, it is easy to see 
Everett games as a special case of Gillette games. It
was shown in Everett's original paper that all Everett games have 
value vectors. Like Gillette games, the values can in general only
be approximated arbitrarily well, but unlike Gillette games, stationary strategies are
sufficient for guaranteeing such approximations. 

For formal definitions and proofs of some of the facts above, see
Section~\ref{sec-pre}.

\subsubsection*{Our Results}

In this paper we consider the
problem of exactly solving Shapley, Everett and Gillette games, 
i.e., computing the value of a
given game.  The variants of these two problems for the case of {\em
  perfect information} (a.k.a. {\em turn-based}) games are
well-studied by the computer science community, but not known to be
polynomial time solvable: The tasks of solving perfect information
Shapley, Everett and Gillette games and the task of solving
Condon's {\em simple stochastic games} \cite{Condon92} are
polynomial time equivalent \cite{AnMi09}. 
Solving simple stochastic
games in polynomial time is by now a famous open problem.  
%In
%addition, the well-studied class of {\em concurrent reachability
%  games} 
%can be seen as exactly the special case of Everett's recursive games
%with all payoffs positive. 
As we consider algorithms for the more
general case of imperfect information games, we,
unsurprisingly, do not come up with polynomial time algorithms.
However, we describe algorithms for all three classes of games that run in
polynomial time when the number of positions is constant and our
algorithms are the first algorithms with this property.  As the values
of all three kinds of games may be irrational but algebraic numbers, 
our algorithms
output real algebraic numbers in {\em isolating interval representation,} i.e.,
as a square-free polynomial with rational coefficients for which the
value is a root, together with an (isolating) interval with rational
endpoints in which this root is the only root of the polynomial. To be precise,
our main theorem is:

\vspace{5pt}
{\bf Theorem.}
{\em For any constant $N$, there is a polynomial time
algorithm that takes as input a Shapley, Everett or Gillette game 
with $N$ positions and outputs its value 
vector using isolating interval encoding.
Also, for the case of a Shapley games, an optimal
stationary strategy for the game in isolating interval encoding 
can be computed in
polynomial time. Finally, for Shapley as well as Everett games,
given an
additional input parameter $\epsilon > 0$, an $\epsilon$-optimal
stationary strategy using only (dyadic) rational valued probabilities can
be computed in time polynomial in the representation of the game
and $\log(1/\epsilon)$.}
\vspace{5pt}

We remark that when the number of positions $N$ is constant, what remains to
vary is (most importantly) the number of actions $m$ for each player 
in each position and (less importantly) the bitsize $\tau$ of transition
probabilities and payoffs.
%\item{}Recursive games
%do not necessarily possess exactly optimal strategies 
%\cite{AMS:Everett57}, so the
%additional input parameter $\epsilon$ when computing strategies for
%that case is necessary.
We also remark that Etessami and Yannakakis \cite{LMCS:EY08} showed that the
  bitsize of the isolating interval encoding of the value of a
  discounted stochastic game as well as the value of a recursive game
  may be exponential in the number of positions of the game and that
  Hansen, Kouck\'{y} and Miltersen \cite{LICS:HKM09} showed that the
  bitsize of an $\epsilon$-optimal strategy for a recursive game using
  binary representation of probabilities may be exponential in the
  number of positions of the game. Thus, merely from the size of the
  output to be produced, there can be no polynomial time algorithm for
  the tasks considered in the theorem 
  without some restriction on $N$.
Nevertheless, the time complexity of our algorithm has
a dependence on $N$ which is very bad and not matching the 
size of the output. For the case of Shapley games, the exponent
in the polynomial time bound is $O(N)^{N^2}$ while for the case
of Everett games and Gillette games, the exponent is $N^{O(N^2)}$. 
Thus, getting a better
dependence on $N$ is a very interesting open problem. 

Prior to our work, algorithms for solving stochastic games
relied either
on generic reductions to decision procedures for the first order
theory of the reals \cite{LMCS:EY08,HE}, or, for the case of 
Shapley games and concurrent reachability games 
on value or 
strategy iteration \cite{RCN,ChatQest}. For all these
algorithms, the complexity
is at least exponential {\em even when the number of positions is a 
constant and even when only a crude approximation is required 
\cite{CRGIteration}}. Nevertheless, as is the case for the 
algorithms based on
reductions to decision procedures for the first order theory of the reals,
our algorithms rely on the theory of semi-algebraic geometry 
\cite{BasuPollackRoy2006}, but in a more indirect way as we explain below.

Our algorithms are based on a simple recursive bisection pattern which
is in fact a very natural and in retrospect unsurprising approach to solving
stochastic games. 
However, in order to set the parameters of the algorithm in a way
that makes it correct, 
we need {\em separation bounds}
for values of stochastic games of given type and parameters; 
lower bounds on the absolute value of games of non-zero value.
Such bounds are obtained by bounding the algebraic degree and
coefficient size of the defining univariate polynomial and applying standard
arguments, so the task at hand boils down to determining as good bounds on degree and coefficient size as possible; with better bounds leading to faster algorithms.  To get these bounds, we apply the general machinery of real algebraic geometry and semi-algebraic geometry following closely the techniques of the seminal work of Basu, Pollack and Roy \cite{BasuPollackRoy2006}. That is, for each of 
the three types of games, we describe how for a given game $G$ to derive a formula in the first order theory of the real numbers uniquely defining the value of $G$. This essentially involves formalizing statements proved by Shapley, Everett, and Mertens and Neyman together with elementary properties of linear programming. Now, we apply the powerful tools of {\em quantifier elimination} \cite[Theorem 14.16]{BasuPollackRoy2006} and {\em sampling} \cite[Theorem 13.11]{BasuPollackRoy2006} to show the appropriate bounds on degree and coefficient size. 
We stress that these procedures are only carried out in our proofs; they are not carried out by our algorithms. Indeed, if they were, the time complexity of the algorithms would be exponential, even for a constant number of positions. 
While powerful, the semi-algebraic approach has the disadvantage of giving rather imprecise bounds. Indeed, as far as we know, all published versions of the quantifier elimination theorem and the sampling theorem have unspecified constants (``big-Os''), leading to unspecified constants in the code of our algorithms.  
Only for the case of Shapley games, are we able to do somewhat better, their mathematics being so simple that we can avoid the use of the general tools of quantifier elimination and sampling and instead base our bounds on solutions to the following very natural concrete problem of real algebraic geometry that can be seen as a very special case of the sampling problem: 
%from a reduction to a very natural 
% but apparently previously not very well-studied 
%problem of real algebraic geometry: 

{\em Given a system of $m$ polynomials in $n$ variables (where $m$ is
  in general different from $n$) of degree bounded by $d$, whose
  coefficients have bitsizes at most $\tau$, and an {\em isolated} (in
  the Euclidean topology) real root of the system, what is an upper
  bound on its algebraic degree as a function of $d$ and $n$? What is
  a bound on the bitsizes of the coefficients of the defining
  polynomial?}  

Basu, Pollack and Roy \cite[Corollary 13.18]{BasuPollackRoy2006} stated the upper bound $O(d)^n$ on the
algebraic degree as a corollary of the sampling theorem.  We give a
constructive bound of $(2d+1)^n$ on the algebraic degree and we derive
an explicit bound on the coefficients of the defining polynomial.  
We emphasize that our techniques for doing this are standard in the
context of real algebraic geometry; in particular the deformation
method and u-resultants are used. However, we find it surprising that
(to the best of our knowledge) no explicit constant for the big-O was
previously stated for this very natural problem. Also, we do not
believe that $(2d+1)^n$ is the final answer and would like to see an
improvement. We hope that by stating some explicit bound we will
stimulate work improving it. We note that for the case of 
isolated complex roots, explicit bounds appeared
recently, see Emiris, Mourrain and Tsigaridas \cite{emt-issac-2010} 
and references therein.

The degree bounds for the algebraic problem lead to
upper bounds on the algebraic degree 
of the values of Shapley games as
a function of the combinatorial parameters of the game.
We also provide corresponding lower bounds
by proving that polynomials that have among their real roots the value of certain Shapley games are irreducible. We prove irreducibility based on Hilbert's irreducibility theorem and a generalization of the Eisenstein criterion,
As these bounds may be of independent interest, we state them
explicitly: 

{\em The value of any Shapley game with $N$ positions, $m$
  actions for each player in each position, and rational payoffs and
  transition probabilities, is an algebraic number of degree at most
  $(2m+5)^N$.  Also, for any $N,m \geq 1$ there exists a game with
  these parameters such that its value is an algebraic number of
  degree $m^{N-1}$.}  

The lower bound strengthens a result of Etessami
and Yannakakis \cite{LMCS:EY08} who considered the case of $m=2$ and
proved a $2^{\Omega(N)}$ lower bound.  For the more general case of
Everett games and Gillette games, we are only able to get an upper
bound on the degree of the form $m^{O(N^2)}$ and consider getting
improved bounds for this case an interesting problem (we have no lower
bounds better than for the case of Shapley games). As explained above,
replacing the big-Os with explicit constants requires ``big-O-less''
versions of the quantifier elimination theorems and sampling theorems
of semi-algebraic geometry. We acknowledge that it is a
straightforward but also probably quite work-intensive task to
understand exactly which constants are implied by existing
proofs. Clearly, we would be interested in such results, and are
encouraged by recent work of the real algebraic geometry community
\cite{BasuBall} essentially providing a big-O-less version of the very
related Theorem 13.15 of Basu, Pollack and Roy.  We do hypothesize
that the constants will be much worse that the constant of our
big-O-less version of Corollary 13.18 of Basu, Pollack and Roy and
that merely stating some constants would stimulate work improving
them.

%Finally, we give
%an explicit, single exponential bound, on the bitsize of the
%separation bound, that is the minimum distance between any two real
%roots of the system.

As a final
byproduct to our techniques, we give a new upper bound on the complexity of the strategy
iteration algorithm for concurrent reachability games
\cite{ChatQest} that matches the known lower bound \cite{CRGIteration}. 
We show:
{\em The strategy improvement algorithm of Chatterjee, de Alfaro and
Henzinger \cite{ChatQest} computes an $\epsilon$-optimal strategy
in a concurrent reachability game with $N$ actions, $m$ actions for each
player in each position after at most $(1/\epsilon)^{m^{O(N)}}$ 
iterations.}
Prior to this paper only a doubly exponential upper bound on the complexity
of strategy iteration was known, even for the case of a constant
number of positions \cite{CRGIteration}. The proof uses a known connection
between the {\em patience} of concurrent reachability games and the
convergence rate of strategy iteration \cite{CRGIteration} 
combined with a new bound on the patience proved using a somewhat more clever use of semi-algebraic geometry than in the work leading to the previous bound \cite{LICS:HKM09}.

\subsubsection*{Structure of the paper}
Section~\ref{sec-pre} contains background material and notation.
Section~\ref{sec-alg} contains a description of our
algorithms. Section~\ref{sec-upp} contains the upper bounds on degree
of values and lower bounds on coefficient sizes of defining
polynomials and resulting separation bounds of values needed for the
algorithm, for the case of Shapley, Everett and Gillete games.  The
proof of the exact bounds, big-O-less version, of the algebraic degree
and the separation bounds of the isolated real solutions of a
polynomial system is presented in Section~\ref{sec-isolated}.
Section~\ref{sec-low} presents the lower bound construction on the
Shapley games and the algebraic tools needed for this. Finally,
Section~\ref{sec-si} contains the consequences of our results for the
strategy improvement algorithm for concurrent reachability are
explained.

\section{Preliminaries}
\label{sec-pre}
\subsubsection*{(Parameterized) Matrix Games}

A matrix game is given by a real $m \times n$ matrix $A$ of 
payoffs $a_{ij}$. 
When Player I plays action $i \in \{1,2,\ldots,m\}$ and Player II
simultaneously plays action $j \in \{1,2,\ldots,n\}$, Player I receives a payoff $a_{ij}$ from Player
II. A strategy of a player is a probability
distribution over the player's actions, i.e. a stochastic vector. 
Given strategies $x$ and $y$ for the two players,
the expected payoff to player I is $x^\transpose Ay$. We denote by
$\val(A)$ the maximin value of the game.
As is well-known the value as well as an
optimal mixed strategy for Player I can be found by the following
linear program, in variables $x_1,\dots,x_m$ and $v$. By $f_n$ we
denote the vector of dimension $n$ with all entries being 1.
\begin{equation}
\label{EQ:MatrixLP}
\begin{array}{lrcl}
\max & \multicolumn{1}{l}{v}\\
\textit{s.t.} &f_nv - A^\transpose x & \leq & 0\\
&x & \geq & 0\\
&f_m^\transpose x & =& 1\\
\end{array}
\end{equation}
The following easy lemma of Shapley is useful.
\begin{lemma}[\cite{PNAS:Shapley53}, equation (2)]
\label{LEM:MatrixGameInfnormBound}
Let $A=(a_{ij})$ and $B=(b_{ij})$ be $m \times n$ matrix games. Then
\[
|\val(A) - \val(B)| \leq \max_{i,j} |a_{ij}-b_{ij}|
\]
\end{lemma}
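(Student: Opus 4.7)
The plan is to reduce the lemma to two elementary properties of the value functional $\val$ on $m \times n$ matrix games: entrywise monotonicity and invariance under adding a constant matrix. Let $\delta = \max_{i,j} |a_{ij} - b_{ij}|$ and let $J$ denote the $m \times n$ all-ones matrix. By definition of $\delta$ we have the entrywise inequalities $B - \delta J \leq A \leq B + \delta J$, so it suffices to show that $\val(B - \delta J) = \val(B) - \delta$, $\val(B + \delta J) = \val(B) + \delta$, and that $\val$ is monotone with respect to entrywise order.

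For monotonicity, if $A \leq A'$ entrywise then for any stochastic vectors $x \in \RR^m$ and $y \in \RR^n$ we have $x^\transpose A y \leq x^\transpose A' y$ since all entries of the outer product $x y^\transpose$ are nonnegative; taking $\max_x \min_y$ on both sides preserves the inequality and gives $\val(A) \leq \val(A')$. For translation invariance, observe that $x^\transpose (c J) y = c \cdot (f_m^\transpose x)(f_n^\transpose y) = c$ for any scalar $c$ and any stochastic $x, y$, so $x^\transpose (B + c J) y = x^\transpose B y + c$ and hence $\val(B + c J) = \val(B) + c$ directly from the definition as $\max_x \min_y$.

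Combining these two facts with $B - \delta J \leq A \leq B + \delta J$ gives
\[
\val(B) - \delta \;=\; \val(B - \delta J) \;\leq\; \val(A) \;\leq\; \val(B + \delta J) \;=\; \val(B) + \delta,
\]
which is exactly $|\val(A) - \val(B)| \leq \delta$. There is no real obstacle here; the only thing to be careful about is that both monotonicity and translation invariance are used at the level of the expected payoff $x^\transpose A y$ before taking $\max\min$, and that $\max_x \min_y$ preserves both pointwise inequalities and addition of a constant, which is immediate. I would avoid the alternative route through a fixed pair of optimal strategies and the minimax theorem since, although it also works (by bounding $|x^\transpose (A - B) y| \leq \delta$ for stochastic $x, y$), the entrywise monotonicity argument is shorter and makes no appeal to the existence of optima.
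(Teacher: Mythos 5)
Your proof is correct. Note first that the paper itself does not prove this lemma at all: it is quoted directly from Shapley's 1953 paper (his equation (2)), so there is no in-paper argument to compare against. Your route --- sandwiching $A$ between $B-\delta J$ and $B+\delta J$ and then invoking entrywise monotonicity and translation invariance of the maximin functional --- is sound: the monotonicity step is justified because $\min_y x^\transpose A y \leq x^\transpose A y^* \leq x^\transpose A' y^*=\min_y x^\transpose A' y$ for $y^*$ optimal against $x$ in $A'$, and the translation step is immediate from $x^\transpose (cJ)y=c$ for stochastic $x,y$. Shapley's own argument (and the one usually reproduced) is the one you mention and set aside: fix a maximizer $x^*$ for $A$, use $\min_y f(y)-\min_y g(y)\leq \sup_y(f(y)-g(y))$ to get $\val(A)-\val(B)\leq \max_y (x^*)^\transpose(A-B)y\leq\delta$, and symmetrize. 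The two arguments are of essentially equal length and neither needs the minimax theorem, since $\val$ is defined here as the maximin value; your version has the mild advantage of isolating two reusable structural properties of $\val$ (monotonicity and equivariance under adding constants), which is a perfectly good way to present it.
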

In the following we will find it convenient to use terminology of
Bertsimas and Tsitsiklis \cite{BertsimasTsitsiklis}. We say that a set
of linear constraints are linearly independent if the corresponding
coefficient vectors are linearly independent.
\begin{definition}
  Let $P$ be a polyhedron in $\RR^n$ defined by linear equality and
  inequality constraints and let $x \in \RR^n$.
\begin{enumerate}
\item $x$ is a \emph{basic solution} if all equality constraints of $P$ are
  satisfied by $x$, and there are $n$ linearly independent constraints
  of $P$ that are satisfied with equality by $x$.
\item $x$ is a \emph{basic feasible solution} (bfs) if $x$ is a basic
  solution and furthermore satisfies all the constraints of $P$.
\end{enumerate}
\end{definition}
The polyhedron defined by LP~(\ref{EQ:MatrixLP}) is given by 1
equality constraint and $n+m$ inequality constraints, in $m+1$
variables. Since the polyhedron is bounded, the LP obtains its optimum
value at a bfs. To each bfs, $(x,v)$, we may thus associate a set of
$m+1$ linearly independent constraints such that turning all these
constraints into linear equations yields a linear system where $(x,v)$
is the unique solution. Furthermore we may express this solution using
Cramer's rule.  We order the variables as $x_1,\dots,x_m,v$, and we
also order the constraints so that the equality constraint is the last
one. Let $B$ be a set of $m+1$ constraints of the linear program,
including the equality constraint. We shall call such a set $B$ a
\emph{potential basis set}.  Define $M^A_B$ to be the $(m+1) \times
(m+1)$ matrix consisting of the coefficients of the constraints in
$B$. The linear system described above can thus be succinctly stated
as follows:
\begin{equation*}
M^A_B \begin{bmatrix}x\\v\end{bmatrix} = e_{m+1} \enspace .
\end{equation*}

We summarize the discussion above by the following lemma.
\begin{lemma}Let $v \in \RR$ and $x \in \RR^m$ be given.
\label{LEM:LPbfs}
\begin{enumerate}
\item The pair $(x,v)^\transpose$ is a basic solution of (\ref{EQ:MatrixLP})
  if and only if there is a potential basis set $B$ such
  that $\det(M^A_B) \neq 0$ and $(x,v)^\transpose = (M^A_B)^{-1}e_{m+1}$.
\item The pair $(x,v)^\transpose$ is a bfs of (\ref{EQ:MatrixLP}) if and only if
  there is a potential basis set $B$ such that $\det(M^A_B) \neq 0$,
  $(x,v)^\transpose = (M^A_B)^{-1}e_{m+1}$, $x \geq 0$ and $f_nv - A^\transpose x \leq
  0$.
\end{enumerate}
\end{lemma}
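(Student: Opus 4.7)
My plan is to treat the lemma as essentially a translation of the definition of basic (feasible) solution into matrix notation via Cramer's rule. LP~(\ref{EQ:MatrixLP}) has $m+1$ variables ($x_1,\dots,x_m$ and $v$), with $n+m$ inequality constraints and the single equality constraint $f_m^\transpose x = 1$; so $(x,v)^\transpose$ being a basic solution amounts to requiring that the equality constraint is satisfied and that there exist $m+1$ linearly independent constraints of the LP active at $(x,v)^\transpose$.

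For the forward direction of part 1, I would start with a basic solution $(x,v)^\transpose$ and observe that the equality constraint is automatically active. By hypothesis, there exist $m+1$ linearly independent active constraints; the key step is a Steinitz-exchange argument showing that we may assume the equality constraint is among them. Indeed, if it is not, those $m+1$ active constraints form a basis of $\RR^{m+1}$, so the coefficient vector $(1,\dots,1,0)^\transpose$ of the equality constraint has a nonzero coordinate in this basis, and exchanging the corresponding constraint out for the equality constraint preserves both linear independence and the property of being active. Letting $B$ be the resulting potential basis set, and remembering that each active inequality constraint of the LP has right-hand side $0$ while the equality constraint (ordered last) has right-hand side $1$, the right-hand side vector of the linear system is exactly $e_{m+1}$, so $M^A_B(x,v)^\transpose = e_{m+1}$ by construction. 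Linear independence of the rows of $M^A_B$ then yields $\det(M^A_B)\neq 0$, whence $(x,v)^\transpose=(M^A_B)^{-1}e_{m+1}$.

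The reverse direction of part 1 is immediate: given a potential basis set $B$ with $\det(M^A_B)\neq 0$ and $(x,v)^\transpose=(M^A_B)^{-1}e_{m+1}$, the identity $M^A_B(x,v)^\transpose=e_{m+1}$ says exactly that each of the $m+1$ constraints in $B$ is active at $(x,v)^\transpose$, including the equality constraint, while the nonvanishing determinant supplies linear independence, witnessing that $(x,v)^\transpose$ is basic. Part 2 follows at once from part 1: a bfs is a basic solution that additionally satisfies all constraints of the LP, which beyond the already-active equality constraint is precisely the conjunction $x\geq 0$ and $f_nv - A^\transpose x\leq 0$ appearing in the statement. The only genuinely non-mechanical step in the whole argument is the Steinitz exchange ensuring that the equality constraint can be placed inside $B$; everything else is direct unpacking of definitions.
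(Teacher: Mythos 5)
Your proof is correct and follows essentially the same route as the paper, which states the lemma merely as a summary of the preceding discussion of basic solutions, potential basis sets, and Cramer's rule, without a separate formal proof. The one point you make explicit that the paper leaves implicit is the Steinitz exchange showing that the $m+1$ linearly independent active constraints can always be chosen to include the equality constraint (so that they form a potential basis set), and your argument for it is valid.
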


By Cramer's rule we find that $x_i =\det((M^A_B)_i)/\det(M^A_B)$ and
$v = \det((M^A_B)_{m+1})/\det(M^A_B)$. Here $(M^A_B)_i$ is the matrix
obtained from $M^A_B$ by replacing column $i$ with $e_{m+1}$.

We shall be interested in \emph{parameterized} matrix games. Let $A$
be a mapping from $\RR^N$ to $m \times n$ matrix games. Given a
potential basis set $B$ we will be interested in describing the sets
of parameters for which $B$ gives rise to a bfs as well as an optimal
bfs for LP~(\ref{EQ:MatrixLP}). We let $F^A_B$ denote the set of $w
\in \RR^N$ such that $B$ defines a bfs for the matrix game $A(w)$, and
we let $O^A_B$ denote the set of $w \in \RR^N$ such that $B$ defines
an optimal bfs for the matrix game $A(w)$. Let $\overline{B_1}
\subseteq \{1,\dots,n\}$ be the set of indices of the first $n$
constraints that are not in $B$. Similarly, let $\overline{B_2}
\subseteq \{1,\dots,m\}$ be the indices of the next $m$
constraints that are not in $B$. We may describe the set $F^A_B$ as a
union $F^{A+}_B \cup F^{A-}_B$. Here $F^{A+}_B$ is defined to be the
set of parameters $w$ that satisfy the following $m+1$ inequalities:
\[
\begin{split}
\det(M^{A(w)}_B) &> 0 \enspace ,\\
\det((M^{A(w)}_B)_{m+1}) - \sum_{i=1}^m a_{ij} \det(((M^{A(w)}_B)_{i})) &\leq 0 \text{\ for}\ j \in \overline{B_1},\\
\det((M^{A(w)}_B)_i) &\geq 0 \text{\ for}\ i \in \overline{B_2}.
\end{split}
\]
The set $F^{A-}_B$ is defined analogously, by reversing all
inequalities above. With these in place we can describe $O^A_B$ as the
sets of parameters $w \in F^A_B$ for which
\[
\det((M^{A(w)}_B)_{m+1}) = \val(A(w))\det(M^{A(w)}_B) \enspace .
\]

\subsubsection*{Shapley and Everett games}

%Stochastic games as introduced by Shapley \cite{PNAS:Shapley53} are
%potentially infinite duration two-player zero-sum games. The game
%specifies a finite number of positions, and the play proceeds in
%rounds from position to position, depending upon the strategies of
%each player as well as transition probabilities of the game. Rewards
%5are accumulated in each round of a play. Shapley assumed that at any
%round of the play, the game stops with positive probability. 
We will define stochastic games in a general form, following Everett
\cite{AMS:Everett57}, to capture both Shapley games as well as Everett
games (but not Gillette games) as direct specializations.  Everett in
fact defined his games abstractly in terms of ``game elements''. We
shall restrict ourselves to game elements that are given by matrix
games (cf.\ \cite{JAP:Orkin72}). Because of this, our precise
notation will differ slightly from the one of Everett.

For that purpose a stochastic game $\Gamma$ is specified as
follows. We let $N$ denote the number of \emph{positions}, numbered
$\{1,\dots,N\}$. In every position $k$, the two players have
$m_k$ and $n_k$ \emph{actions} available, numbered $\{1,\dots,m_k\}$
and $\{1,\dots,n_k\}$. If at position $k$ Player I chooses action $i$
and Player II simultaneously chooses action $j$, Player I receives reward $a^k_{ij}$
from player II. After this, with probability $s^k_{ij} \geq 0$ the game stops,
in which case Player I receives an additional reward $b^k_{ij}$ from player
II. With probability $p^{kl}_{ij} \geq 0$, play continues at position $l$.
We demand $s^k_{ij}+\sum_{l=1}^N p^{kl}_{ij} = 1$ for
all positions $k$ and all pairs of actions $(i,j)$.
%Consider now the game, with play starting in some position $k$. In any
%round the players have available the history of the play so far,
%meaning the sequence of positions visited so far as well as the
%sequences of actions of the players in those positions. 
A
\emph{strategy} of a player is an assignment of a probability
distribution on the actions of each position, for each possible
history of the play, 
a history being the sequence of positions visited so far as well as the
sequences of actions played by both players in those rounds. 
A strategy is called \emph{stationary} if it only depends on
the current position.

Given
a pair of strategies $x$ and $y$ as well as a starting position $k$,
let $r_i$ be the random variable denoting the reward given
to Player I during round $i$ (if play has ended we define this as
0). We define the expected total payoff by 
$\tau^k(x,y) = \lim_{n \rightarrow \infty} E\left[ \sum_{i=1}^n r_i \right] \enspace ,
$
where the expectation is taken over actions of the players according
to their strategies $x$ and $y$, as well as the probabilistic choices of
the game (In the special cases of Shapley and Everett games the
limit always exist).
%\begin{remark}
%  We take this definition for convenience, since the limit $\lim_{n
%    \rightarrow \infty} E\left[ \sum_{i=1}^n r_i \right]$ might in
%  general not exist (but see \cite{ORS:TV87}). However 
%\end{remark}
We define the \emph{lower value}, $\underline{v}^k$, and \emph{upper
  value}, $\overline{v}^k$, of the game $\Gamma$, starting in position
$k$ by 
$
\underline{v}^k = \sup_x \inf_y \tau^k(x,y),
$
and
$
\overline{v}^k = \inf_y \sup_x \tau^k(x,y).
$
In case that $\underline{v}^k=\overline{v}^k$ we define this as the
\emph{value} $v^k$ of the game, starting at position $k$.
Assuming $\Gamma$ has a value, starting at position $k$, we say that a
strategy $x$ is \emph{optimal} for Player I, starting at position $k$
if
$
\inf_y \tau^k(x,y) = v^k ,
$
and for a given $\epsilon>0$, we say the strategy $x$ is $\epsilon$-optimal starting at position $k$, if
$
\inf_y \tau^k(x,y) \geq v^k - \epsilon \enspace .
$
We define the notions of optimal and $\epsilon$-optimal analogously
for Player II. 
%That is a strategy $y$ is optimal for player 2, starting
%at position $k$ if
%\[
%\sup_x \tau^k(x,y) = v^k \enspace ,
%\]
%and for a given $\epsilon>0$, we say the strategy $y$ is $\epsilon$-optimal sta%rting at position $k$ if
%\[
%\bsup_x \tau^k(x,y) \leq v^k + \epsilon \enspace .
%\]
%\begin{remark}
%  In the special cases we consider, the positional strategies can be assumed to% be
%  independent of the starting position as well.
%\end{remark}
%

A Shapley game \cite{PNAS:Shapley53} is a special case of the above
defined stochastic games, where $s^k_{ij} > 0$ and $b^k_{ij}=0$ for
all positions $k$ and all pairs of actions $(i,j)$.
%\begin{remark}
%  \label{REM:ShapleyDiscountRemark}
%  As noted by Shapley, setting all stop probabilities $s^k_{ij}$ to
%  $s>0$, one obtains a model where future rewards are discounted by a
%  fixed discount rate. When the number of positions and actions are
%  finite, this can be done without loss of generality.
%
%  Conversely, suppose $p^k_{ij}$ are the transition probabilities of a
%  stochastic game where the rewards after $t$ rounds of play are
%  discounted by a factor $\gamma^t$, where $0<\gamma<1$. This can be
%  recast into Shapley's framework by assigning all stop probabilities
%  as $s^k_{ij}=\lambda= 1-\gamma$ and letting $\widehat{p}^k_{ij} = \gamma
%  p^k_{ij}$ be the new transition probabilities.
%\end{remark}
Given \emph{valuations} $v_1,\dots,v_N$ for the positions and a given
position $k$ we define $A^k(v)$ to be the $m_k \times n_k$ matrix
game where entry $(i,j)$ is
$
a^k_{ij} + \sum_{l=1}^N p^{kl}_{ij}v_l. $
The \emph{value iteration} operator $T : \RR^N \rightarrow \RR^N$ is
defined by $T(v) = \left(\val(A^1(v)),\dots,\val(A^N(v))\right)$.
The following theorem of Shapley characterizes the value and optimal strategies of a Shapley game.
\begin{theorem}[Shapley]
\label{THM:ShapleyValue}
  The value iteration operator $T$ is a
  contraction mapping with respect to supremum norm.
  In particular, $T$ has a unique fixed point, and this is the
  value vector of the stochastic game $\Gamma$. 
  Let $x^*$ and $y^*$ be
  the stationary strategies for Player I and player II where in position $k$ an
  optimal strategy in the matrix game $A^k(v^*)$ is played. Then $x^*$
  and $y^*$ are optimal strategies for player I and player II,
  respectively, for play starting in any position.
\end{theorem}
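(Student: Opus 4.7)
The proof naturally splits into two independent pieces: showing that $T$ is a contraction, and then showing that its fixed point coincides with the game value and that $x^{*}, y^{*}$ attain it.

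\textbf{Step 1: Contraction.} Since $\Gamma$ is a Shapley game, $s^k_{ij} > 0$ for every $k,i,j$, so there exists $\gamma := 1 - \min_{k,i,j} s^k_{ij} < 1$ satisfying $\sum_l p^{kl}_{ij} \leq \gamma$ uniformly. For any $u, v \in \RR^N$, the matrices $A^k(u)$ and $A^k(v)$ differ in entry $(i,j)$ by $\sum_l p^{kl}_{ij}(u_l - v_l)$, whose absolute value is at most $\gamma \norm{u-v}_\infty$. Apply Lemma~\ref{LEM:MatrixGameInfnormBound} position by position to conclude $\norm{T(u)-T(v)}_\infty \leq \gamma \norm{u-v}_\infty$. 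The Banach fixed point theorem then yields a unique fixed point $v^{*} \in \RR^N$ of $T$.

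\textbf{Step 2: $v^{*}$ is the value vector, with $x^{*}, y^{*}$ optimal.} I will show $\inf_y \tau^k(x^{*}, y) \geq v^{*}_k$; the symmetric bound $\sup_x \tau^k(x, y^{*}) \leq v^{*}_k$ follows identically, and together they force $\underline{v}^k = \overline{v}^k = v^{*}_k$. Fix any strategy $y$ for Player II and starting position $k$. The key local inequality is that, by definition of $x^{*}$ as an optimal Player I strategy in the matrix game $A^k(v^{*})$ whose value is $v^{*}_k$, for every mixed action $y_k$ of Player II we have
\[
\sum_{i,j}(x^{*}_k)_i (y_k)_j \Bigl(a^k_{ij} + \sum_l p^{kl}_{ij} v^{*}_l\Bigr) \geq v^{*}_k.
\]
Using this one-step inequality, a straightforward induction on $n$ shows that when Player I plays $x^{*}$ against any Player II strategy $y$,
\[
E\Bigl[\textstyle\sum_{t=1}^{n} r_t + v^{*}_{X_{n+1}} \mathbf{1}[\text{game still running after round } n]\Bigr] \geq v^{*}_k,
\]
where $X_{n+1}$ is the (random) position at round $n+1$. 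The induction step conditions on the history up through round $n-1$ and applies the one-step inequality at the current position.

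\textbf{Step 3: Passing to the limit.} Because $s^k_{ij} > 0$ uniformly, the probability that the game has not yet stopped after $n$ rounds is at most $\gamma^{n}$, independent of the strategies played. Since $v^{*}$ is a fixed vector (hence bounded) and rewards per round are bounded, the correction term $E[v^{*}_{X_{n+1}}\mathbf{1}[\text{running}]]$ is $O(\gamma^{n}) \to 0$, while $E[\sum_{t=1}^{n} r_t]$ converges to $\tau^k(x^{*}, y)$ by definition. Hence $\tau^k(x^{*}, y) \geq v^{*}_k$, and taking $\inf_y$ gives the desired optimality of $x^{*}$. The dual argument for $y^{*}$ uses that $y^{*}_k$ is optimal for Player II in $A^k(v^{*})$, reversing the inequality throughout. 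I expect no genuine obstacle; the only place to be careful is the tail estimate in Step 3, ensuring that the non-termination probability decays geometrically uniformly in the strategies --- which it does precisely because $\min_{k,i,j} s^k_{ij} > 0$.
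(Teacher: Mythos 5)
The paper states this theorem without proof, citing Shapley's original work, so there is no in-paper argument to compare against; your proof is the standard (and essentially Shapley's original) one and it is correct. The contraction estimate via Lemma~\ref{LEM:MatrixGameInfnormBound} with modulus $\gamma = 1-\min_{k,i,j}s^k_{ij}<1$, the one-step inequality from optimality of $x^*_k$ in $A^k(v^*)$ combined with the martingale-style induction on $E[\sum_{t\le n} r_t + v^*_{X_{n+1}}\mathbf{1}[\text{running}]]$, and the geometric tail bound $\gamma^n$ (valid uniformly over strategies precisely because the stop probabilities are bounded away from $0$) together give a complete argument.
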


An Everett game \cite{AMS:Everett57} is a special case of the above
defined stochastic games, where $a^k_{ij} = 0$ for all $k,i,j$. In contrast to
Shapley games, we may have that $s^k_{ij}=0$ for some $k,i,j$.
  Everett points out that his games generalize the class of
  Shapley games. Indeed, we can convert Shapley game $\Gamma$ to 
  Everett game $\Gamma'$ by letting $b^k_{ij} =
  a^k_{ij}/s^k_{ij}$, recalling that $s^k_{ij}>0$. 
%That $\Gamma$ and
%  $\Gamma'$ are equivalent may for instance be seen by comparing
%  Shapley's value iteration operator with the value mapping defined
%  below.

Given \emph{valuations} $v_1,\dots,v_N$ for the positions and a given
position $k$ we define $A^k(v)$ to be the $m_k \times n_k$ matrix
game where entry $(i,j)$ is
$
s^k_{ij}b^k_{ij} + \sum_{l=1}^N p^{kl}_{ij}v_l.$
The \emph{value mapping} operator $M : \RR^N \rightarrow \RR^N$ is
then defined by
$
M(v) = \left(\val(A^1(v)),\dots,\val(A^N(v))\right)$.
Define relations $\dotgeq$ and $\dotleq$ on $\RR^N$ as follows:
\begin{gather*}
u \dotgeq v \quad \text{if and only if} \quad \begin{cases}u_i > v_i & \text{if}\ v_i>0\\ u_i\geq v_i & \text{if}\ v_i \leq 0\end{cases} \enspace , \quad \text{for all}\ i \enspace .\\
u \dotleq v \quad \text{if and only if} \quad \begin{cases}u_i < v_i & \text{if}\ v_i<0\\ u_i\leq v_i & \text{if}\ v_i \geq 0\end{cases} \enspace , \quad \text{for all}\ i \enspace .\\
\end{gather*}
Next, we define the regions $C_1(\Gamma)$ and $C_2(\Gamma)$ as follows:
\begin{gather*}
C_1(\Gamma) = \{v \in \RR^N \mid M(v) \dotgeq v\},\\
C_2(\Gamma) = \{v \in \RR^N \mid M(v) \dotleq v\}.
\end{gather*}
A \emph{critical vector} of the game is a vector $v$ such that $v \in
\overline{C_1(\Gamma)} \cap \overline{C_2(\Gamma)}$. That is, for every
$\epsilon>0$ there exists vectors $v_1 \in C_1(\Gamma)$ and $v_2 \in C_2(\Gamma)$ such that
$\norm{v-v_1}_2 \leq \epsilon$ and $\norm{v-v_2}_2 \leq \epsilon$.

The following theorem of Everett characterizes the value of an Everett game and exhibits near-optimal strategies.
\begin{theorem}[Everett]
\label{THM:CriticalVectorExistence}
  There exists a unique critical vector $v$ for the value mapping $M$,
  and this is the value vector of $\Gamma$. Furthermore,
  $v$ is a fixed point of the value mapping, and if $v_1 \in C_1(\Gamma)$ and
  $v_2 \in C_2(\Gamma)$ then $v_1 \leq v \leq v_2$.
  Let $v_1 \in C_1(\Gamma)$. Let $x$ be the stationary strategy 
  for player I, where in position $k$ an optimal strategy in the matrix game $A^k(v_1)$ is played. Then for any $k$, starting play in position $k$, the
  strategy $x$ guarantees expected payoff at least $v_{1,k}$
  for player I. The analogous statement holds for $v_2 \in C_2(\Gamma)$ and Player II.
\end{theorem}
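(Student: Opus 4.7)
The plan is to derive the strategy-guarantee statements (the last two sentences of the theorem) first, since the sandwich bounds, the fixed-point property, and uniqueness of the critical vector all follow from them by continuity and limiting arguments. The analytic core is: if $v_1 \in C_1(\Gamma)$ and $x$ is the stationary strategy of player I built from optimal matrix-game strategies in each $A^k(v_1)$, then $\tau^k(x,y)\ge v_{1,k}$ for every position $k$ and every strategy $y$ of player II. The statement for $v_2\in C_2(\Gamma)$ is fully symmetric, obtained by swapping the players and replacing $\dotgeq$ by $\dotleq$.

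Fix $v_1\in C_1(\Gamma)$ and an arbitrary player II strategy $y$, and let play start at some $k_0$. Write $k_t$ for the random position after $t$ rounds (with a distinct symbol for ``terminated''), and $r_t$ for the reward in round $t$. Define
\[
S_t \;=\; \sum_{i=1}^{t} r_i \;+\; v_{1,k_t}\cdot \mathbf{1}[\text{play not yet terminated after round } t],
\]
so $S_0 = v_{1,k_0}$. Conditioning on the history through round $t-1$, and using that in position $k_{t-1}$ player I plays an optimal strategy in $A^{k_{t-1}}(v_1)$, the expected one-step increment satisfies
\[
E\bigl[S_t - S_{t-1} \,\big|\, \text{history}\bigr] \;\ge\; \val\bigl(A^{k_{t-1}}(v_1)\bigr) - v_{1,k_{t-1}} \;=\; M(v_1)_{k_{t-1}} - v_{1,k_{t-1}} \;\ge\; 0,
\]
the last inequality being the coordinate-wise consequence of $M(v_1)\dotgeq v_1$. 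Hence $\{S_t\}$ is a submartingale with $E[S_t]\ge v_{1,k_0}$, and since rewards in an Everett game occur only upon termination and are uniformly bounded, $\{S_t\}$ is bounded above by some constant $C$.

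The main obstacle is passing to the limit in
\[
E\!\left[\sum_{i=1}^t r_i\right] \;=\; E[S_t]\;-\; E\!\left[v_{1,k_t}\cdot \mathbf{1}[\text{not terminated}]\right],
\]
where the residual must be shown to have non-positive limit superior. Split by the sign of $v_{1,k}$. On $K^-:=\{k:v_{1,k}\le 0\}$ the contribution is automatically non-positive. On $K^+:=\{k:v_{1,k}>0\}$ the asymmetric form of $\dotgeq$ provides a strict gap $\delta := \min_{k\in K^+}(M(v_1)_k - v_{1,k})>0$, so every time step with $k_{t-1}\in K^+$ prior to termination contributes at least $\delta$ to $E[S_t]-E[S_{t-1}]$. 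The uniform bound $E[S_T]\le C$ therefore gives $\sum_{t\ge 1} P[k_{t-1}\in K^+,\text{ not terminated}] \le (C - v_{1,k_0})/\delta<\infty$, whence $P[k_t\in K^+,\text{ not terminated}]\to 0$ and the positive part of the residual vanishes. Combined with $E[\sum_{i=1}^t r_i]\to \tau^{k_0}(x,y)$, this yields $\tau^{k_0}(x,y)\ge v_{1,k_0}$. The precise role of the asymmetric definition of $\dotgeq$ here, balancing positive-potential states (where the strict gap forces non-termination probabilities to decay) against non-positive-potential states (where the residual is automatically harmless), is the delicate point in the whole proof.

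With the guarantees in hand, the remaining claims are short. For any $v_1\in C_1(\Gamma)$ and $v_2\in C_2(\Gamma)$ we obtain $v_{1,k}\le \underline{v}^k\le \overline{v}^k\le v_{2,k}$ for every $k$; existence of a critical vector then follows from a compactness argument on the bounded region carved out by these inequalities. For any critical $v\in \overline{C_1(\Gamma)}\cap \overline{C_2(\Gamma)}$, approximating by sequences $v_1^{(n)}\to v$ in $C_1(\Gamma)$ and $v_2^{(n)}\to v$ in $C_2(\Gamma)$ and passing to the limit forces $\underline{v}^k = \overline{v}^k = v_k$, so $\Gamma$ has value vector $v$ and the critical vector is unique. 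For the fixed-point property, $M(v_1)\dotgeq v_1$ entails $M(v_1)\ge v_1$ coordinate-wise; continuity of $M$ (inherited from Lemma~\ref{LEM:MatrixGameInfnormBound}) and passage to the limit along $v_1^{(n)}\to v$ gives $M(v)\ge v$, with the reverse inequality coming analogously from $C_2(\Gamma)$, yielding $M(v)=v$. The sandwich $v_1\le v\le v_2$ is immediate from the inequalities just recorded.
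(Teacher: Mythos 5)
The paper does not prove this theorem; it is quoted from Everett's 1957 paper, so there is no in-paper proof to compare against. Judged on its own terms, the core of your argument --- the submartingale $S_t=\sum_{i\le t}r_i+v_{1,k_t}\mathbf{1}[\text{not terminated}]$, the one-step inequality $E[S_t-S_{t-1}\mid\text{history}]\ge M(v_1)_{k_{t-1}}-v_{1,k_{t-1}}\ge 0$, and the exploitation of the strict gap $\delta>0$ on $K^+$ to force $\sum_t P[k_{t-1}\in K^+,\text{ not terminated}]<\infty$ and hence kill the positive part of the residual --- is correct and is essentially the standard (Everett-style) potential argument for why $C_1$-strategies secure their announced payoffs. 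The derivations of $v_1\le\underline v\le\overline v\le v_2$, of uniqueness of the critical vector given existence, of the fixed-point property via continuity of $M$, and of the sandwich bounds are all fine.

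The genuine gap is \emph{existence} of a critical vector, which is part of the theorem's statement and is in fact its hardest part. You dispatch it with ``a compactness argument on the bounded region carved out by these inequalities,'' but the inequalities you have established --- every point of $C_1(\Gamma)$ lies coordinatewise below every point of $C_2(\Gamma)$ --- do not imply $\overline{C_1(\Gamma)}\cap\overline{C_2(\Gamma)}\neq\emptyset$. Already in one dimension, $C_1=(-\infty,0]$ and $C_2=[1,\infty)$ satisfy the same separation property with disjoint closures; nothing in your argument rules out such a gap between $\sup C_1$ and $\inf C_2$. Everett's actual existence proof is an induction on the number of positions: one reduces to a game with one fewer position parameterized by the valuation $v$ of the removed position, studies the map $v\mapsto\widetilde V(v)$, and shows it has a fixed point of minimal absolute value which supplies the missing coordinate of the critical vector. (This is exactly the structure the present paper reuses in Lemmas~\ref{LEM:ReducedGameContinuity} and~\ref{LEM:BisectionEquivalences} and in the bisection algorithm.) Without some version of that induction, or an alternative construction of the value vector together with a proof that it is approximable from within both $C_1(\Gamma)$ and $C_2(\Gamma)$, the existence claim remains unproven.
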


\subsubsection*{Gillette Games}
\label{def-gil}
While the payoffs in Gillette's model of stochastic games cannot be captured as a special case of the general formalism above, the general setup 
is the same, i.e., the parameters $N, m_k, n_k, a^k_{ij}, p^{kl}_{ij}$ is as above and the game is played as in the case of Shapley games and Everett games. 
In Gillette's model, we have $b^k_{ij} = 0$ and $s^k_{ij}=0$ for all $k,i,j$. 
The payoff associated with an 
infinite play of a Gillette game is by definition 
$\liminf_{T \rightarrow \infty} (\sum_{t=1}^T r_i)/T$ where $r_t$ is the reward collected at round $t$. 
Upper and lower values are defined analogously to the case of Everett and Shapley games, but with the expectation of the payoff defined in this way replacing $\tau^k(x,y)$. Again, the value of position $k$ is said to exist if its upper and lower value coincide. 
An Everett game can be seen as a special case of a Gillette game by replacing each termination outcome with final reward $b$ with an absorbing position in which the reward $b$ keeps recurring.
The central theorem about Gillette games is
the theorem of Mertens and Neyman \cite{MN}, showing that all such games have a value. The proof also yields the following connection to Shapley games that is used by our algorithm: For a given Gillette game $\Gamma$, let $\Gamma_\lambda$ be the Shapley game with all stop probabilities
$s^k_{ij}$ being $\lambda$ and each transition probability being the corresponding transition probability of $\Gamma$ multiplied by $1-\lambda$. Let $v^k$ be the value of position $k$ in $\Gamma$ and let $v_\lambda^k$ be the value of position $k$ in $\Gamma_\lambda$. Then, the following holds. 
\begin{theorem}[Mertens and Neyman]\label{THM:MertensNeyman}
\[ v^k = \lim_{\lambda \rightarrow 0+} \lambda v^k_\lambda \] 
\end{theorem}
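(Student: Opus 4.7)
The plan is to split the theorem into two claims: (i) the limit $v^{*k} := \lim_{\lambda \to 0+} \lambda v^k_\lambda$ exists for every position $k$, and (ii) this limit coincides with the undiscounted value $v^k$. For (i) I would invoke the Bewley--Kohlberg theorem \cite{BewleyKohlberg1976}. By Theorem~\ref{THM:ShapleyValue} the vector $v_\lambda$ is uniquely characterized by the Shapley fixed-point equation $v^k_\lambda = \val(A^k_\lambda(v_\lambda))$, where $A^k_\lambda$ has entries that are rational in $\lambda$. Expressing ``$v^k_\lambda$ equals the optimum of LP~(\ref{EQ:MatrixLP}) for $A^k_\lambda$'' as a first-order formula over the reals and applying quantifier elimination, each coordinate function $\lambda \mapsto v^k_\lambda$ is semi-algebraic. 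A one-variable semi-algebraic function admits a Puiseux expansion near $0$, so $\lambda v^k_\lambda = \sum_{i \geq 0} c_i^k \lambda^{i/M}$ on some interval $(0,\lambda_0)$; the constant term $c_0^k$ is the desired limit $v^{*k}$.

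For (ii) I would follow the Mertens--Neyman strategy construction to exhibit, for every $\varepsilon > 0$, a strategy for Player~I guaranteeing expected liminf-average payoff at least $v^{*k} - \varepsilon$ starting from position~$k$, and symmetrically a strategy for Player~II enforcing an upper bound of $v^{*k} + \varepsilon$. These strategies are inherently non-stationary: at round $t$ Player~I plays an optimal stationary strategy of the Shapley game $\Gamma_{\lambda_t}$, where $\lambda_t$ is an adaptively chosen discount parameter depending on the entire history up to time $t$. Concretely, one maintains a real-valued potential $\Phi_t$ representing a discounted continuation value aligned with the Puiseux expansion, and adjusts $\lambda_t$ after each round so that $\Phi_t$ remains in a bounded window around $v^{*k}/\lambda_t$. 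Optimality of Player~I's move in $\Gamma_{\lambda_t}$ together with the leading-order identity $\lambda v^k_\lambda \to v^{*k}$ then makes $\Phi_t$ a sub-martingale up to error terms of order $\lambda_t^{1/M}$, no matter what Player~II does.

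The hard part will be the stochastic analysis of this potential: one needs a martingale concentration inequality on the realized rewards combined with a Borel--Cantelli argument to rule out rare bad events, in order to translate the sub-martingale property into a uniform pointwise guarantee on the Cesàro average of rewards. The delicate calibration issue is choosing the rate at which $\lambda_t \to 0$ so that the accumulated Puiseux-error contributions remain $o(1)$ while the per-round variance stays controlled; carrying this out uses quantitative bounds on the denominator $M$ and on the tail of the Puiseux series, which are the technical core of \cite{MN}. A symmetric construction for Player~II supplies the matching upper bound, and the two one-sided inequalities together yield both the existence of $v^k$ and the identity $v^k = v^{*k}$.
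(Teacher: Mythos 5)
The first thing to say is that the paper does not prove Theorem~\ref{THM:MertensNeyman} at all: it is imported as a black-box citation to Mertens and Neyman \cite{MN} (resting in turn on Bewley and Kohlberg \cite{BewleyKohlberg1976}), and the surrounding text only uses its statement. So there is no in-paper argument to measure you against; what you have written is an outline of the literature proof. At the level of architecture your outline is faithful: existence of $\lim_{\lambda\to 0+}\lambda v^k_\lambda$ does come from semi-algebraicity of $\lambda\mapsto v^k_\lambda$ (via the LP/fixed-point characterization of Theorem~\ref{THM:ShapleyValue}) and a Puiseux expansion at $0$, and the identification with the undiscounted liminf-average value does go through the Mertens--Neyman history-dependent choice of $\lambda_t$ with a potential/sub-martingale argument. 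One small but real omission in part (i): to conclude that the Puiseux expansion of $\lambda v^k_\lambda$ has no negative-exponent terms (equivalently, that the limit is finite), you need the a priori bound $\abs{\lambda v^k_\lambda}\leq \max_{i,j,k}\abs{a^k_{ij}}$, which follows from the normalization of $\Gamma_\lambda$; without it the ``constant term $c_0^k$'' need not be the leading term.

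The substantive issue is that part (ii) is not a proof but a table of contents for one. The sub-martingale property of your potential, the concentration inequality, the Borel--Cantelli step, and the calibration of the rate $\lambda_t\to 0$ against the Puiseux error are precisely where the theorem lives, and you explicitly defer all of them to \cite{MN}. You also do not address the asymmetry created by the $\liminf$ payoff criterion: guaranteeing $\liminf_{T}\frac{1}{T}\sum_{t\le T} r_t\geq v^{*k}-\varepsilon$ for Player~I is the genuinely hard direction (a guarantee on the $\limsup$ would be much cheaper), and the martingale argument has to be run so as to control \emph{every} sufficiently late time $T$, not just a subsequence. If the ground rules permit citing \cite{MN} and \cite{BewleyKohlberg1976} the way the paper does, your write-up is acceptable and even more informative than the paper's treatment; if the theorem is actually to be proved, the proposal has a large gap exactly at its self-identified ``hard part.''
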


\subsubsection*{Real Algebraic Numbers}

Let $p(x) \in \ZZ[x]$ be a nonzero polynomial with integer
coefficients of degree $d$. Write $p(x) = \sum_{i=1}^d a_i x^i$, with
$a_d \neq 0$. The \emph{content} $\cont(p)$ of $p$ is defined by
$\cont(p) = \gcd(a_0,\dots,a_d)$, and we say that $p$ is
\emph{primitive} if $\cont(p)=1$. We can view the coefficients of $p$
as a vector $a \in \RR^{d+1}$. We then define the \emph{length}
$\abs{p}$ of $p$ by $\abs{p} = \norm{a}_2$ as well as the
\emph{height} $\height{p}$ of $p$ by $\height{p} = \norm{a}_\infty$.

An algebraic number $\alpha \in \CC$ is a root of a polynomial in
$\QQ[x]$. The \emph{minimal polynomial} of $\alpha$ is the unique
monic polynomial in $q \in \QQ[x]$ of least degree with $q(\alpha)=0$.
Given an algebraic number $\alpha$ with minimal polynomial $q$, there
is a minimal integer $k \geq 1$ such that $p=kq \in \ZZ[x]$. In other
words $p$ is the unique polynomial in $\ZZ[x]$ of least degree with
$p(\alpha)=0$, $\cont(p)=1$ and positive leading coefficient. We
extend the definitions of degree and height to $\alpha$ from $p$.  The
\emph{degree} $\deg(\alpha)$ of $\alpha$ is defined by
$\deg(\alpha)=\deg(p)$ and \emph{height} $\height{\alpha}$ of
$\alpha$ is defined by $\height{\alpha} = \height{p}$.

\begin{theorem}[Kannan, Lenstra and Lov\'{a}sz]\label{KLL}
  There is an
  algorithm that computes the minimal polynomial of a given algebraic number
$\alpha$ of degree $n_0$ when
  given as input $d$ and $H$ such that $\deg(\alpha) \leq d$ and $\height{\alpha} \leq H$ and $\overline{\alpha}$ such that
  $\abs{\alpha-\overline{\alpha}} \leq 2^{-s}/(12d)$, where
\[
s = \lceil d^2/2 + (3d+4)\log_2(d+1) + 2d\log_2(H) \rceil \enspace .
\]
The algorithm runs in time polynomial in $n_0, d$ and $\log H$.
\end{theorem}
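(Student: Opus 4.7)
The plan is to follow the original approach of Kannan, Lenstra, and Lov\'{a}sz: encode the problem as a short-vector problem in a suitably chosen lattice and run the LLL basis reduction algorithm. Specifically, I would define a lattice $L \subset \ZZ^{d+2}$ spanned by the $d+1$ basis vectors $b_i = e_{i+1} + \lfloor 2^s \overline{\alpha}^i \rfloor\, e_{d+2}$ for $i = 0, \ldots, d$. A lattice element $v = \sum_{i=0}^d a_i b_i$ encodes an integer polynomial $q(x) = \sum_{i=0}^d a_i x^i$ of degree at most $d$ via its first $d+1$ coordinates, with final coordinate approximating $2^s q(\overline{\alpha})$. Short vectors in $L$ therefore correspond to integer polynomials with small coefficients that almost vanish at $\overline{\alpha}$.

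Next, I would observe that the minimal polynomial $p$ of $\alpha$ itself produces a short lattice vector $v_p$. Because $p(\alpha) = 0$, the mean-value theorem together with the standard Cauchy root bound $|\alpha| \leq H+1$ yields $|p(\overline{\alpha})| \leq d(H+1)^{d-1}\,|\alpha - \overline{\alpha}|$, which by the hypothesis $|\alpha - \overline{\alpha}| \leq 2^{-s}/(12d)$ is of order $1$ after scaling by $2^s$. Combining this with the unit rounding errors in the entries $\lfloor 2^s \overline{\alpha}^i \rfloor$ gives $\|v_p\|_2 = O(\sqrt{d+1}\,H)$.

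Then I would apply LLL to $L$ to obtain a vector $v^{*}$ with $\|v^{*}\|_2 \leq 2^{d/2}\|v_p\|_2$, corresponding to some integer polynomial $q$ of degree at most $d$. By construction, its last coordinate bounds $|q(\overline{\alpha})|$ by $O(2^{-s}\|v^{*}\|_\infty)$, and Lipschitz continuity extends this to $|q(\alpha)|$. The crux is to rule out the case $q(\alpha) \neq 0$: this is done by combining a Mignotte-style lower bound on $|q(\alpha)|$, of order $H^{-(d-1)}\|v^{*}\|_\infty^{-(d-1)}$ whenever $q(\alpha) \neq 0$, with the upper bound just obtained. The specific constants $d^2/2$, $(3d+4)\log_2(d+1)$, and $2d\log_2 H$ inside $s$ are tuned precisely so that the LLL blow-up factor $2^{d/2}$ and the approximation error $2^{-s}/(12d)$ jointly produce an upper bound on $|q(\alpha)|$ that is \emph{strictly} smaller than the Mignotte lower bound, forcing $q(\alpha) = 0$. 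Consequently $p \mid q$ in $\ZZ[x]$.

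Finally, I would recover $p$ from $q$ by polynomial factorization over $\QQ$, using for instance the LLL-based algorithm of Lenstra, Lenstra, and Lov\'{a}sz, which runs in time polynomial in $d$ and $\log H$; the unique monic irreducible factor whose value at $\overline{\alpha}$ lies below the separation bound is the sought minimal polynomial. The main obstacle is the bookkeeping in the third step: one must check that the precise choice of $s$ makes the LLL output encode a polynomial that vanishes \emph{exactly} at $\alpha$ and not merely approximately, and this is where the specific form of $s$ earns its keep. Everything else is routine composition of LLL, Mignotte's inequality, and polynomial-time factorization of integer polynomials.
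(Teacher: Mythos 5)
This theorem is not proved in the paper at all: it is an external result quoted from Kannan, Lenstra and Lov\'{a}sz, so there is no in-paper argument to compare yours against. Your outline does follow the historically correct strategy (lattice encoding of low-height integer polynomials nearly vanishing at $\overline{\alpha}$, LLL reduction, a gap argument forcing exact vanishing at $\alpha$, recovery of the minimal polynomial), but as a proof it has a genuine gap precisely where the theorem has its content. The entire point of the statement is that the \emph{specific} precision $s = \lceil d^2/2 + (3d+4)\log_2(d+1) + 2d\log_2 H\rceil$ suffices, and you explicitly defer the verification of this ("the bookkeeping in the third step") rather than carrying it out. Worse, the one quantitative estimate you do supply is wrong: from your own mean-value bound $|p(\overline{\alpha})| \leq d(H+1)^{d-1}|\alpha-\overline{\alpha}|$ and the hypothesis $|\alpha-\overline{\alpha}|\leq 2^{-s}/(12d)$, scaling by $2^s$ gives $2^s|p(\overline{\alpha})| \leq (H+1)^{d-1}/12$, which is of order $H^{d-1}$, not $O(1)$ --- the theorem only guarantees $|\alpha|\leq H+1$, not $|\alpha|\leq 1$. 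Hence $\|v_p\|_2$ is not $O(\sqrt{d+1}\,H)$ as claimed, and the height of the LLL output inherits this extra factor. Your stated lower bound for $|q(\alpha)|$ when $q(\alpha)\neq 0$ is likewise too optimistic: the resultant/conjugate-product argument gives an exponent of order $d\cdot\deg(\alpha)$ on $H$ and on the height of $q$, not $d-1$. With the corrected magnitudes on both sides, the naive single-shot comparison does not visibly close with only the $2d\log_2 H$ term present in $s$; the original proof needs a more careful accounting (and proceeds degree by degree, which also lets it avoid your final factorization step, since the first degree at which a vanishing polynomial is found automatically yields an irreducible one).

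Two further, more minor, points. The paper defines algebraic numbers as elements of $\CC$, and your lattice with a single extra coordinate only handles real $\alpha$ (the original construction uses two extra coordinates for the real and imaginary parts); this is harmless for the paper's application, where game values are real, but should be flagged. And in your last step, identifying "the unique monic irreducible factor whose value at $\overline{\alpha}$ lies below the separation bound" itself requires an argument that exactly one irreducible factor of $q$ can be that small at $\overline{\alpha}$ without vanishing at $\alpha$; that is another quantitative claim of the same kind as the one left unverified.
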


\section{Algorithms}
\label{sec-alg}

In this section we describe our algorithms for solving Shapley, Everett and Gillette games. The algorithms for Shapley and Everett games proceed along the same lines, using the fact that Shapley games can be seen as a special case of Everett games explained above. The algorithm for Gillette games is a reduction to the case of Shapley games using Theorem~\ref{THM:MertensNeyman}.
We proceed by first constructing the algorithms for Everett and Shapley games and explain the algorithm for Gillette games at the end of this section. 
\subsubsection*{Reduced games}

\label{sec:reduced-games}

Let $\Gamma$ be an Everett game with $N+1$ positions. Denote by
$V(\Gamma)$ the critical vector of $\Gamma$. Given a valuation $v$ for
position $N+1$ we consider the \emph{reduced game} $\Gamma^r(v)$ with
$N$ positions, obtained from $\Gamma$ in such a way that whenever the
game would move to position $N+1$, instead the game would stop and
player 1 would receive a payoff $v$.

Denote by $V^r(v)$ the critical vector of the game $\Gamma^r(v)$. We have
the following basic lemma shown by Everett.
\begin{lemma}
\label{LEM:ReducedGameContinuity}
For every $\delta>0$, for all $v$ and for all positions $k$:
$(V^r(v))_k - \delta \leq (V^r(v-\delta))_k \leq (V^r(v))_k $
$\leq$
$(V^r(v+\delta))_k \leq (V^r(v))_k + \delta.$
In particular, $V^r(v)$ is a continuous monotone
function of $v$ in all components.
The first and last inequalities are strict inequalities,
unless $(V^r(v))_k=v$. 
\end{lemma}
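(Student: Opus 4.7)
The plan is to exploit that the terminal payoff $v$ enters the expected return in $\Gamma^r(v)$ \emph{affinely}. For stationary strategies $x,y$ of the two players and starting position $k$, let $q^k(x,y) \in [0,1]$ denote the probability that play starting at $k$ eventually transitions into what was originally position $N+1$, i.e., the probability that $\Gamma^r(v)$ terminates with the reward $v$. Then the expected payoff decomposes as
\[
\tau^k_v(x,y) \;=\; \tau^k_0(x,y) \;+\; v\, q^k(x,y),
\]
so replacing $v$ by $v+\delta$ alters every $\tau^k_v(x,y)$ by exactly $\delta q^k(x,y) \in [0,\delta]$. Monotonicity of $V^r(\cdot)_k$ in $v$ is then immediate, since increasing $v$ weakly raises $\tau^k_v(x,y)$ uniformly in $(x,y)$ and $\sup$-$\inf$ and $\inf$-$\sup$ preserve this.

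For the weak Lipschitz bound $(V^r(v+\delta))_k \leq (V^r(v))_k + \delta$, I would invoke Theorem~\ref{THM:CriticalVectorExistence} to extract, for each $\epsilon>0$, a vector $v_2 \in C_2(\Gamma^r(v))$ with $\norm{v_2 - V^r(v)}_\infty \leq \epsilon$; the accompanying stationary strategy for Player II guarantees at most $v_{2,k} \leq (V^r(v))_k + \epsilon$ in $\Gamma^r(v)$, and by the displayed identity the same strategy guarantees at most $(V^r(v))_k + \epsilon + \delta$ in $\Gamma^r(v+\delta)$. Letting $\epsilon \to 0$ completes the argument. The symmetric inequality $(V^r(v-\delta))_k \geq (V^r(v))_k - \delta$ follows by swapping the roles of the players and using $C_1(\Gamma^r(v-\delta))$. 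Continuity is then a formal corollary.

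The strict inequality is the main obstacle. I would prove its upper form: if $(V^r(v))_k \neq v$, then $(V^r(v+\delta))_k < (V^r(v))_k + \delta$ (the lower form is symmetric). Suppose without loss of generality $(V^r(v))_k > v$; the intuition is that since routing the game to position $N+1$ yields only $v$, strictly less than the value $(V^r(v))_k$ Player I can secure, Player I's near-optimal strategies actively avoid $N+1$, and the ultimate absorption probability $q^k$ under near-optimal play stays bounded away from $1$, so raising $v$ by $\delta$ helps by strictly less than $\delta$. To formalize, I would argue by contradiction: if equality held, the weak inequalities already proved would force $(V^r(v+t))_k - t$ to be constant on $[0,\delta]$, and picking $\epsilon$-optimal stationary Player I strategies $x_\epsilon$ in $\Gamma^r(v+\delta)$ one deduces via the affine identity that $\sup_y q^k(x_\epsilon,y) \geq 1 - \epsilon/\delta$. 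A compactness argument on the finite-dimensional product of simplices of stationary strategies then produces a limit $x^*$ against which Player II can drive the game to $N+1$ with probability $1$; against such a Player II strategy the expected payoff in $\Gamma^r(v)$ reduces to $v$ plus the contribution from intermediate $s^k_{ij} b^k_{ij}$-terminations, and a parallel analysis applied to near-optimal Player II strategies shows these intermediate contributions are non-positive in the limit, yielding $(V^r(v))_k \leq v$, contradicting the assumption. The most delicate step, which I expect to require the most care, is this Markov-chain continuity argument linking convergence of stationary strategies to convergence of the induced absorption probabilities and expected intermediate rewards; everything else is essentially bookkeeping around the affine identity.
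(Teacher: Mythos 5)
The paper offers no proof of this lemma (it is quoted from Everett), so your argument stands on its own. The affine decomposition $\tau^k_v(x,y)=\tau^k_0(x,y)+v\,q^k(x,y)$, the monotonicity, the weak Lipschitz bounds via near-optimal stationary strategies from Theorem~\ref{THM:CriticalVectorExistence}, and continuity are all correct. The gap is in the strictness argument, at the compactness step. The absorption probability $q^k(x,y)$ is an increasing limit of the (continuous) finite-horizon absorption probabilities, hence only \emph{lower} semi-continuous in the stationary strategies, not upper semi-continuous: if $x_{\epsilon_n}$ puts mass $\epsilon_n\to 0$ on an action that moves play toward position $N+1$ and the rest on an action that loops, then $q^k(x_{\epsilon_n},y)=1$ for every $n$ while $q^k(x^*,y)=0$ for the limit $x^*$. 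So from $q^k(x_{\epsilon_n},\cdot)\to 1$ you cannot conclude that Player II drives play to $N+1$ with probability $1$ against $x^*$; this is precisely the discontinuity that makes optimal strategies fail to exist in Everett games, so it cannot be argued away. Even granting the limit, showing that the single strategy $x^*$ yields only $v$ bounds $\inf_y\tau^k_v(x^*,y)$, not $\sup_x\inf_y\tau^k_v(x,y)$. A smaller issue: the reduction ``WLOG $(V^r(v))_k>v$'' inside the upper form is not licensed by the payoff-negation/player-swap symmetry, which exchanges the upper and lower forms rather than the two sign cases.

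The repair needs no limit of strategies. Assume $(V^r(v+\delta))_k=w+\delta$ with $w=(V^r(v))_k$. Let $x_\epsilon$ be $\epsilon$-optimal for Player I in $\Gamma^r(v+\delta)$ and $y_\epsilon$ $\epsilon$-optimal for Player II in $\Gamma^r(v)$. Since $q^k\leq 1$, the strategy $x_\epsilon$ also guarantees $\geq w-\epsilon$ in $\Gamma^r(v)$, so $w-\epsilon\leq\tau^k_v(x_\epsilon,y_\epsilon)\leq w+\epsilon$; feeding this into $\tau^k_{v+\delta}(x_\epsilon,y_\epsilon)\geq w+\delta-\epsilon$ gives $q^k(x_\epsilon,y_\epsilon)\geq 1-2\epsilon/\delta$. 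With payoffs normalized to $[-1,1]$, $\abs{\tau^k_0(x_\epsilon,y_\epsilon)}\leq 1-q^k(x_\epsilon,y_\epsilon)$, hence $\abs{\tau^k_v(x_\epsilon,y_\epsilon)-v}\leq 2(1-q^k(x_\epsilon,y_\epsilon))\leq 4\epsilon/\delta$. Combining, $\abs{w-v}\leq\epsilon+4\epsilon/\delta\to 0$, so $w=v$ — this yields both sign cases at once and disposes of the unjustified WLOG. Everything else in your write-up is bookkeeping around the affine identity, as you say.
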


Let $\widetilde{V}(v)$ denote the value $\val(A^{N+1}(V^r(v),v))$
of the parameterized game for position $N+1$, where the first $N$
positions are given valuations according to $V^r(v)$ and position
$N+1$ is given valuation $v$. 

\begin{lemma}
\label{LEM:BisectionEquivalences}
  Denote by $v^*$ component $N+1$ of $V(\Gamma)$. Then the following
  equivalences hold.
\begin{enumerate}
\item   Suppose $v^*>0$ and $v \geq 0$. Then, $\widetilde{V}(v) > v \Leftrightarrow v < v^*$.
\item Suppose $v^*<0$ and $v \leq 0$. Then, $\widetilde{V}(v) <
  v \Leftrightarrow v^* < v$.
\end{enumerate}
%Thus in both cases we have $\abs{\widetilde{V}(v)} > \abs{v}$ if and only if
%$\abs{v} <\abs{v^*}$.
\end{lemma}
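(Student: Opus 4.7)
The plan is to derive both parts from three features of $\widetilde{V}$: that it is continuous and monotone non-decreasing in $v$, that it is $1$-Lipschitz, and that it is anchored by $\widetilde{V}(v^*)=v^*$. The first three properties follow from Lemma~\ref{LEM:ReducedGameContinuity} and Shapley's Lemma~\ref{LEM:MatrixGameInfnormBound}, since each entry of $A^{N+1}(V^r(v),v)$ is a non-negative combination of the $V^r(v)_l$ and $v$ with weights summing to $1-s^{N+1}_{ij}\leq 1$, and is hence itself monotone non-decreasing and moves by at most $\delta$ whenever $v$ changes by $\delta$.

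For the anchor I would first prove $V^r(v^*)=(V(\Gamma)_1,\ldots,V(\Gamma)_N)=:u^*$; granting this, $\widetilde{V}(v^*)=\val(A^{N+1}(V(\Gamma)))=M(V(\Gamma))_{N+1}=v^*$ via Theorem~\ref{THM:CriticalVectorExistence}. To identify $V^r(v^*)$ I would approximate $V(\Gamma)$ from within $C_1(\Gamma)$ and $C_2(\Gamma)$ by sequences $v_n=(u_n,w_n)\to V(\Gamma)$: the matrix games defining the first $N$ coordinates of $M$ at valuation $(u_n,w_n)$ coincide with those of $M^r$ on $\Gamma^r(w_n)$ at valuation $u_n$, so $u_n\in C_1(\Gamma^r(w_n))$ (and symmetrically in $C_2$). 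Theorem~\ref{THM:CriticalVectorExistence} applied to $\Gamma^r(w_n)$ then gives $u_n\leq V^r(w_n)$ (respectively $\geq$), and continuity of $V^r$ with $w_n\to v^*$ forces $u^*=V^r(v^*)$.

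With these in hand, the $1$-Lipschitz property immediately yields the easy inclusions: for $v\geq v^*$ we have $\widetilde{V}(v)\leq v^*+(v-v^*)=v$, so $\widetilde{V}(v)>v$ forces $v<v^*$; symmetrically $\widetilde{V}(v)\geq v^*-(v^*-v)=v$ for $v\leq v^*$, so $\widetilde{V}(v)<v$ forces $v>v^*$. What remains is to strengthen $\widetilde{V}(v)\geq v$ on $[0,v^*)$ to a strict inequality for Part~(1); Part~(2) is entirely symmetric, with $C_2$ replacing $C_1$ and signs reversed throughout.

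I would handle the strict inequality by contradiction. If $\widetilde{V}(v_0)=v_0$ for some $v_0\in[0,v^*)$, then $(V^r(v_0),v_0)$ is a fixed point of $M$ on $\Gamma$: for $k\leq N$ the matrix game at position $k$ under valuation $(V^r(v_0),v_0)$ coincides with the reduced-game matrix in $\Gamma^r(v_0)$ under valuation $V^r(v_0)$, and $V^r(v_0)$ is fixed by $M^r$; the $(N+1)$st coordinate matches by the assumption $\widetilde{V}(v_0)=v_0$. It then suffices to show $(V^r(v_0),v_0)\in\overline{C_1(\Gamma)}\cap\overline{C_2(\Gamma)}$, for uniqueness of the critical vector in Theorem~\ref{THM:CriticalVectorExistence} forces $(V^r(v_0),v_0)=V(\Gamma)$, whence $v_0=v^*$, contradicting $v_0<v^*$. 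The $\overline{C_1(\Gamma)}$ approximation is obtained by lifting $u^{(n)}\in C_1(\Gamma^r(v_0))$ with $u^{(n)}\to V^r(v_0)$ to $(u^{(n)},v_0+\alpha_n)\in C_1(\Gamma)$ for a suitable $\alpha_n\to 0$. The main obstacle I expect is that $\dotgeq$ supplies a strict `buffer' only at coordinates where $u^{(n)}_k>0$, so at coordinates with $u^{(n)}_k\leq 0$ the perturbation must be chosen so as not to destroy the non-strict inequality; the strict case of Lemma~\ref{LEM:ReducedGameContinuity} (active precisely when $V^r(v_0)_k\neq v_0$) together with a sign-aware choice of $\alpha_n$ should resolve this, and the $\overline{C_2(\Gamma)}$ approximation is analogous.
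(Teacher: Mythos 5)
Your reduction to three properties of $\widetilde{V}$ (monotone non-decreasing, $1$-Lipschitz, anchored at $\widetilde{V}(v^*)=v^*$) is sound, the identification $V^r(v^*)=(V(\Gamma)_1,\dots,V(\Gamma)_N)$ via $C_1/C_2$ approximants works, and the Lipschitz bound correctly gives the forward implication of (1) (the paper gets this implication differently, by perturbing $V^r(v)$ into $C_1(\Gamma^r(v))$ and concluding $v\le v^*$ from Theorem~\ref{THM:CriticalVectorExistence}) as well as the non-strict inequality $\widetilde{V}(v)\ge v$ on $[0,v^*)$. The problem is the remaining step: upgrading this to a strict inequality, i.e.\ ruling out fixpoints of $\widetilde{V}$ in $[0,v^*)$. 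This is the substantive content of the lemma, and the paper obtains it by invoking Everett's theorem that $v^*$ is the fixpoint of $\widetilde{V}$ of \emph{minimum absolute value} — a result proved by Everett by game-theoretic means, not by the kind of soft topological argument you propose.

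Your proposed substitute — showing $(V^r(v_0),v_0)\in\overline{C_1(\Gamma)}\cap\overline{C_2(\Gamma)}$ by lifting approximants — cannot be carried out as described. For the $\overline{C_1(\Gamma)}$ half with $v_0>0$: every $u\in C_1(\Gamma^r(v_0))$ satisfies $u\le V^r(v_0)$ by Theorem~\ref{THM:CriticalVectorExistence} applied to the reduced game, so by monotonicity $\val(A^{N+1}(u,v_0+\alpha))\le \val(A^{N+1}(V^r(v_0),v_0))+\alpha=v_0+\alpha$ for every $\alpha\ge 0$, whereas membership in $C_1(\Gamma)$ demands the \emph{strict} inequality $\val(A^{N+1}(u,v_0+\alpha))>v_0+\alpha$ at the positive coordinate $N+1$. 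Hence no lift with $\alpha\ge 0$ exists, and taking $\alpha<0$ reduces to showing $\val(A^{N+1}(\,\cdot\,,v_0-\abs{\alpha}))>v_0-\abs{\alpha}$, i.e.\ precisely the strict inequality being proved — circular. Moreover, the intermediate claim is false in the generality your lifting would establish it: $\widetilde{V}$ can have fixpoints other than $v^*$ even when $v^*>0$ (e.g.\ a single-position game with matrix $\bigl(\begin{smallmatrix}1&v\\ v&1/2\end{smallmatrix}\bigr)$ has $\val=v$ on all of $[1/2,1]$ but $v^*=1/2$, and the fixpoint $3/4$ is not in $\overline{C_1(\Gamma)}$), so any valid argument must exploit $v_0<v^*$ in an essential way, which your sketch does not. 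The fix is to do what the paper does: cite (or reprove by Everett's strategic argument) the minimality of $\abs{v^*}$ among fixpoints of $\widetilde{V}$, from which $v_0\in[0,v^*)$ and $\widetilde{V}(v_0)=v_0$ immediately give $\abs{v_0}<\abs{v^*}$, a contradiction.
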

\begin{proof}
  We shall prove only the first equivalence. The proof of the second
  equivalence is analogous.  Assume first that $\widetilde{V}(v) >
  v$. Since $\widetilde{V}$ is continuous we can find $z \in
  C_1(\Gamma^r(v))$ such that $\val(A^{N+1}(z,v))>v$ as well. This
  implies that $(z,v) \in C_1(\Gamma)$ and by definition of
  $C_1(\Gamma)$ we obtain that $v \leq v^*$. By
  Theorem~\ref{THM:CriticalVectorExistence}, $\widetilde{V}(v^*) =
  \val(A^{N+1}(V^r(v^*),v^*))$ $=$ $\val(A^{N+1}(V(\Gamma)))=v^*$.
  Since $\widetilde{V}(v) > v$ we have $v < v^*$.

  The other part of the equivalence was shown by Everett as a part of
  his proof of Theorem~\ref{THM:CriticalVectorExistence}. We present
  the argument for completeness. Everett in fact shows that $v^*$ is
  the fixpoint of $\widetilde{V}$ of minimum absolute value. That is,
  $\widetilde{V}(v^*) = v^*$ and whenever $\widetilde{V}(v) = v$ we
  have $\abs{v} \geq \abs{v^*}$. Now assume that $v < v^*$, and let
  $\delta = v^* - v$. From Lemma~\ref{LEM:ReducedGameContinuity} we
  have $\widetilde{V}(v) = \widetilde{V}(v^*-\delta) \geq
  \widetilde{V}(v^*) - \delta = v^* - \delta = v$. Since $v \geq 0$,
  from minimality of $\abs{v^*}$ we have the strict
  inequality $\widetilde{V}(v) > v$.
\end{proof}

\subsubsection*{Recursive bisection algorithm}

\begin{algorithm2e}
\caption{Bisect($\Gamma, k$) \label{alg:rba}}
\KwIn{Game $\Gamma$ with $N+1$ positions, all payoffs between -1 and
  1, accuracy parameter $k \geq 2$.}
\KwOut{$v$ such that $\abs{v-v^*}\leq 2^{-k}$.} 
\eIf{$\widetilde{V}(0)=0$}{
  \Return 0 \;
}{
  $v_l \leftarrow 0$ \;
  $v_r \leftarrow \sgn(\widetilde{V}(0))  $ \nllabel{alg:rba-sgn} \;
  \For{$i \leftarrow 1$ \KwTo $k-1$}{
    $v \leftarrow (v_l+v_r)/2$ \;
    \eIf{$\abs{\widetilde{V}(v)}>\abs{v}$ \nllabel{alg:rba-if}}{
      $v_l \leftarrow v$ \;
    }{ 
      $v_r \leftarrow v$ \;
    }
  }
  \Return $(v_l + v_r)/2$ \;
}
\end{algorithm2e} 

Based on Lemma~\ref{LEM:BisectionEquivalences} we may construct an
idealized bisection algorithm Bisect (Algorithm~\ref{alg:rba}) 
for approximating the last component of the
critical vector, unrealistically assuming we can compute the critical vector of a reduced game exactly.
%, with algebraic numbers as entries, and 
%the value of a parameterized matrix game.
For convenience and without loss of generality, we will
assume  throughout that
the payoffs in the game $\Gamma$ we consider have been
normalized to belong to the interval $[-1,1]$. 
The correctness of the algorithm follows directly from 
Lemma~\ref{LEM:BisectionEquivalences}.
%The computation of the function $\widetilde{V}$ in lines 1,
%\ref{alg:rba-sgn} and \ref{alg:rba-if} of Algorithm \ref{alg:rba}
%calls for computation of the value of a parameterized matrix game, and
%for this we will use the standard linear program
%LP~(\ref{EQ:MatrixLP}). A complication in solving this is that the
%valuations are in general irrational algebraic numbers. While standard
%polynomial time LP algorithms, such as those using the ellipsoid
%method or interior point methods have been adapted to also work with
%linear programs containing algebraic numbers as coefficients
%\cite{ALGOR:AB94,LAA:KE97,ALGOR:Beling01}, for our purposes we can
%just round the valuations to rational numbers of sufficient accuracy
%and use a standard LP algorithm. 
Given that we have obtained a
sufficiently good approximation for the last component of the critical
vector we may reconstruct the exact value using Theorem~\ref{KLL}.
%the method of Kannan,
%Lenstra and Lov{\'a}sz \cite{MOC:KLL}. Here, 
What ``sufficiently good'' means depends on the algebraic degree and size of coefficients of the defining polynomial of the algebraic number to be given as output, so we shall need bounds on these quantities for the game at hand. 
%We shall in fact provide such bounds for Shapley games as well as Everett games%, but with the bounds for the case of Shapley games being much superior.  

\begin{algorithm2e}
\caption{ApproxBisect($\Gamma,k$) \label{alg:appx-bisect}}
\KwIn{Game $\Gamma$ with $N+1$ positions, $m$ actions per player in each position, all payoffs rationals between -1 and
  1 and of bitsize $L$, accuracy parameter $k \geq 2$.}
\KwOut{$v$ such that $\abs{v-v^*}<2^{-k}$.}
  $\epsilon \leftarrow \sep(N,m,L,0)/5$ \;
  $v \leftarrow \val(A^{N+1}( [\text{ApproxVal}(V^r(0),\lceil - \log \epsilon \rceil)]_{\lceil -\log \epsilon  \rceil},0))$ \;
  \eIf{$\abs{v} \leq 2\epsilon$}{ \Return 0 }{
  $v_l \leftarrow 0$ \;
  $v_r \leftarrow \sgn(v)$ \;
   \For{$i \leftarrow 1$ \KwTo $k-1$}{  
    $v \leftarrow (v_l+v_r)/2$ \;
    $\epsilon \leftarrow \sep(N,m, \max(L,i),i)$/5 \;
    $v' \leftarrow \val(A^{N+1}([\text{\rm ApproxVal}(V^r(v),
      \lceil - \log \epsilon \rceil)]_{\lceil -\log \epsilon  \rceil},v))$\; 
    \eIf{$\abs{v'}>\abs{v}$}
    {
      $v_l \leftarrow v$ \;
    }{ 
      $v_r \leftarrow v$ \;
    }
  }
  \Return $(v_l + v_r)/2$ \;
}
\end{algorithm2e} 

\begin{algorithm2e}
\caption{ApproxVal($\Gamma,k$) \label{alg:appx-rba}}
\KwIn{Game $\Gamma$ with $N$ positions, payoffs between -1 and 1, accuracy parameter $k \geq 2$.}
\KwOut{Value vector $v$ such that $\abs{v_i-v_i^*}< 2^ {-k}$ for all
  positions $i$.}
\eIf{$N=0$}{\Return The empty vector } 
{\For{$i \leftarrow 1$ \KwTo $N$}
{$v_i=\text{ApproxBisect}(\Gamma,k)$,
where position $i$ is swapped with position $N$}
Return $v$}
\end{algorithm2e}

To get an algorithm implementable as a Turing machine we will have to
compute with approximations
throughout the algorithm but do so in a way that simulates
Algorithm~\ref{alg:rba} exactly, i.e., so that the same branches
are followed in the if-statements of the algorithm. 
For this, we need separation bounds
for values of stochastic games. Fortunately, these follow from the bounds on degree and coefficient size needed anyway to apply Theorem~\ref{KLL}.
%We will provide
%a modified procedure ApproxBisect($\Gamma$,k). 
%Though the outline of the procedure will be identical for Shapley and
%Everett games, the parameters of the procedure depends on the
%separation bounds we may obtain for the class of games at hand.
%\begin{definition}
  Consider a class $\mathcal{C}$ of Everett games (In fact
  $\mathcal{C}$ will be either all Everett games or the subset consisting of Shapley games).  Let
  $\sep(N,m,L,j)$ denote a positive real number so that if $v$ is
the value of game $\Gamma \in \mathcal{C}$ with $N$ positions,
  $m$ actions to each player in every position, and every rational
  occurring in the description in the game having bitsize at most
  $L$, and $v$ is not an integer multiple of $2^{-j}$, then $v$ differs
by at least $\sep(N,m,L,j)$ from every integer multiple of $2^{-j}$. 
%  For a game $\Gamma$ as above we will also write $\sep(\Gamma)=\sep(N,m,L)$.
% \end{definition}
Also, we let $[v]_k$ denote the function that rounds all entries
in the vector $v$ to the nearest integer multiple of $2^{-k}$. 
Our modified algorithm ApproxBisect (for approximate Bisect) is given as Algorithm~\ref{alg:appx-bisect}. The procedure ApproxVal invoked in the code simply 
computes approximations to the values of all positions in a game
using ApproxBisect.

%The procedure ApproxBisect($\Gamma,k)$ will invoke the
%procedure ApproxVal($\Gamma,k$) to compute approximation to
%values to all positions in the game $\Gamma^r(v)$, simply by computed them one at a time.
The correctness of ApproxBisect follows from the correctness of Bisect by
observing that the former emulates the latter, in the sense that the
same branches are followed in the if-statements.  For the latter fact,
Lemma~\ref{LEM:MatrixGameInfnormBound} and
Lemma~\ref{LEM:BisectionEquivalences} are used.
%For instance, after executing the line
%\[v \leftarrow \val(A^{N+1}(\text{T}_{\lceil -\log \epsilon
%  \rceil}(\text{ApproxVal}(V^r(0),\lceil - \log \epsilon
%\rceil)),0)),\]
%the variable $v$ contains, by inductively assuming correctness of
%ApproxVal for smaller values of $N$, and using 
%Lemma~\ref{LEM:MatrixGameInfnormBound}, a $2\epsilon$-approximation to 
%$\tilde V(0)$. Since $2\epsilon < \sep(V^r(0))/2$, we have
%that $\abs{v} \leq \sep(V^r(0))/2$ if and only if $\tilde V(0) = 0$, i.e.,
%if $0$ is a fixed point to $\tilde V$, which implies that the value
%of position $N+1$ is exactly $0$, by Lemma
%\ref{LEM:BisectionEquivalences}. A similar argument shows that the
%executing of the second if-statement in the code similarly simulates
%corresponding if-statement of Algorithm \ref{alg:rba}.

The complexity of the algorithm is estimated by the
inequalities
\[
T_\text{ApproxVal}(N,m,L,k) \leq N T_\text{ApproxBisect}(N,m,L,k),
\]
and 
\[
\begin{split}
T_\text{ApproxBisect}(N,m,L,k) & \leq  \lceil- \log \epsilon\rceil (T_\text{LP}(m+1,
\lceil - \log \epsilon \rceil) \\ &  + 
T_\text{ApproxVal}(N-1,m,\max\{L,k\},\lceil - \log \epsilon \rceil),
\end{split}
\] 
where $\epsilon=\sep(N-1,m, \max\{L,k\},k)/5$, and
$T_\text{LP}(m+1,k)$ is a bound on the complexity of computing the
value of a $m \times m$ matrix game with entries of bitsize $k$.

%To concretize the algorithm and its complexity analysis we merely
%have to plug in appropriate separation bounds for the two classes
%of games. %Indeed, obtaining such separation bound is the main
%technical material of this paper! 
Plugging in the separation bound for Shapley games of
Proposition~\ref{PROP:sepShap}, we get a concrete algorithm without
unspecified constants.  Also, to get an algorithm that outputs the
exact algebraic answer in isolating interval encoding we need to call
the algorithm with parameter $k$ appropriately chosen to match the
quantities stated in Theorem~\ref{KLL}, taking into account the degree
and coefficient bounds given in Proposition~\ref{PROP:sepShap}.
Finally, plugging in a polynomial bound for $T_\text{LP}$, the above
recurrences is now seen to yield a polynomial time bound for constant
$N$. However, the exponent in this polynomial bound is $O(N)^{N^2}$,
i.e., the complexity is doubly exponential in $N$.  We emphasize that
the fact that the exact value is reconstructed in the end only
negligibly changes the complexity of the algorithm compared to letting
the algorithm return a crude approximation. Indeed, an approximation
algorithm following our approach would have to compute with a
precision in its recursive calls similar to the precision necessary
for reconstruction. Only for games with only one position (and hence
no recursive calls) would an approximation version of ApproxBisect be
faster.

For the case of Everett games, the degree, coefficient and separation bounds of
Proposition~\ref{PROP:sepEve}
similarly yields the existence of a polynomial time algorithm for the
case of constant $N$, with an exponent of $N^{O(N^2)}$.
%
%The complexity of the algorithm is given by the recurrence.
%recursion
%\[ T_\text{ApproxVal}(N,m,L,k) \leq 24 N^3 m^2 (L+1) \log N (2m + 5)^N
%\]
%\[
%\Big[
%  T_\text{ApproxVal}(N-1,m,\max\{L,k\}, 24 N^2 m^2 (L+1) \log N (2m +
%  5)^N)+T_\text{LP}(m+1,24
%  N^2 m^2 (L+1) \log N (2m + 5)^N) \big] \]
%For constant $N$ and with a polynomial bound for $T_\text{LP}$, this yields a polynomial time bound. Note however, 
%that the exponent in this polynomial time bound is $2^{N^{O(1)}$.

\subsubsection*{Computing strategies}

We now consider the task of computing $\epsilon$-optimal strategies to
complement our algorithm for computing values. For Shapley games the
situation is simple. By Theorem~\ref{THM:ShapleyValue}, once we
have obtained the value $v^*$ of the game, we can obtain
\emph{exactly} optimal stationary strategies $x^*$ and $y^*$ by finding
optimal strategies in the matrix games $A^k(v^*)$. Also, if we only
have an approximation $\tilde{v}$ to $v^*$, such that
$\norm{v^*-\tilde{v}}_\infty \leq \epsilon$, consider the stationary
strategies $\tilde{x}^*$ and $\tilde{y}^*$ given by optimal strategies
in the matrix games $A^k(\tilde{v})$. In every round of play, these
strategies may obtain $\epsilon$ less than the optimal strategies. But
this deficit is \emph{discounted} in every round by a factor $1-\lambda$
where $\lambda=\min(s_{ij}^k)>0$ is the minimum stop probability. Hence
$\tilde{x}$ and $\tilde{y}$ are in fact $(\epsilon/\lambda)$-optimal strategies.

For Everett games the situation is more complicated, since the actual
values $v^*$ may in fact give absolutely no information about
$\epsilon$-optimal strategies. We shall instead follow the approach of
Everett and show how to find points $v_1 \in C_1$ and $v_2 \in C_2$
that are $\epsilon$-close to $v^*$. Then, using Theorem~\ref{THM:CriticalVectorExistence} we can compute
$\epsilon$-optimal strategies by finding optimal strategies in the
matrix games $A^k(v_1)$ and $A^k(v_2)$, respectively.

Let $\Gamma$ be an Everett game with $N+1$ positions.  We first
describe how to exactly compute $v_1 \in C_1$, given the ability to
exactly compute the values; the case of $v_2 \in C_2$ is analogous.
Let $v^*$ be the critical vector of $\Gamma$. In case that $v^*_i \leq
0$ for all $i$, then by definition of $C_1$ we have $v^* \in C_1$.
Otherwise at least one entry of $v^*$ is positive, so assume
$v^*_{N+1} > 0$. As in Section~\ref{sec:reduced-games} we consider the
reduced game $\Gamma^r(v)$, taking payoff $v$ for position $N+1$. By
Lemma~\ref{LEM:BisectionEquivalences}, whenever $0 \leq v < v^*_{N+1}$
we have $\widetilde{V}(v) > v$. Suppose in fact that we pick $v$ so
that $v^*_{N+1} - v \leq \epsilon/2$.  Now let $\delta =
\widetilde{V}(v) - v$. Recall $\widetilde{V}(v) =
\val(A^{N+1}(V^r(v),v))$. Now recursively compute $z \in
C_1(\Gamma^r(v))$ such that $\norm{V^r(v)-z}_\infty \leq
\min(\delta/2,\epsilon)$. Then by
Lemma~\ref{LEM:MatrixGameInfnormBound} we have that
$\abs{\val(A^{N+1}(V^r(v),v)) - \val(A^{N+1}(z,v))} \leq \delta/2$,
which means $\val(A^{N+1}(z,v)) > v$. This means that $v_1=(z,v) \in
C_1$, and by our choices we have $\norm{v_1-v^*}_\infty \leq
\epsilon$, as desired.  We now have an exact representation of an
algebraic vector $v_1$ in $C_1$, $\epsilon$-approximating the critical
vector. The size of the representation in isolating interval
representation is polynomial in the bitsize of $\Gamma$ (for constant
$N$). From this we may compute the optimal strategies of $A^{k}(v_1)$
which also form an $\epsilon$-optimal strategy of $\Gamma$. The
polynomial size bound on $v_1$ implies that all non-zero entries in
this strategy have magnitude at least $2^{-l}$ where $l$ is
polynomially bounded in the bitsize of $\Gamma$.  We now show how to
get a rational valued $2\epsilon$-optimal strategy in polynomial
time. For this, we apply a rounding scheme described in Lemmas 14 and
15 of Hansen, Kouck\'{y} and Miltersen \cite{LICS:HKM09}. For each
position, we now round all probabilities, except the largest, {\em
  upwards} to $L$ significant digits where $L$ is a somewhat larger
polynomial bound than $l$, while the largest probability at each
position is rounded downwards to $L$ significant digits.  Using Lemma
14 (see also the proof of Lemma 15) of Hansen, Kouck\'{y} and
Miltersen \cite{LICS:HKM09}, we can set $L$ so that the resulting
strategy is $2\epsilon$-optimal in $\Gamma$.  This concludes the
description of the procedure.

\subsubsection*{The case of Gillette games}
To compute the value of a given Gillette game, we proceed as follows.
Based on Theorem~\ref{THM:MertensNeyman} we can similarly to the case
of Shapley games and Everett games give degree, coefficient, and
separation bounds for the values of the given game. These are given in
Proposition~\ref{PROP:sepGil}. Furthermore, and also based on
Theorem~\ref{THM:MertensNeyman}, we can for a given $\epsilon$ give an
explicit upper bound on the value of $\lambda$ necessary for
$v^k_\lambda$ to approximate $v^k$ within $\epsilon$. This expression
for such $\lambda$, given in
Proposition~\ref{PROP:Gillette-discountbound}, is of the form
$\lambda_\epsilon = \epsilon^{\tau m^{O(N^2)}}$. Our algorithm
proceeds simply by setting $\epsilon$ so small that an
$\epsilon$-approximation to the value allows an exact reconstruction
of the value using Theorem~\ref{KLL}. Such $\epsilon$ can be computed
as we have derived degree and coefficient bounds for the value of the
Gillette game at hand. We then run our previously constructed
algorithm on the Shapley game $\Gamma_\lambda$, where $\lambda =
\lambda_\epsilon$.

% The separation bounds contain big-Os and the expressions we obtain
% are in fact identical to the separation bounds that we may obtain
% for Everett games, but the actual constants that one could obtain to
% replace the big-Os by going through existing proofs of
% semi-algebraic geometry would be much worse.
 
\section{Degree and separation bounds for Stochastic Games}
\label{sec-upp}

\subsection{Shapley Games}
Our bounds on degree, coefficient size, and separation for Shapley
games are obtained by a reduction to the same question about isolated
solutions of polynomial systems. The latter is treated in Section
\ref{sec-isolated}. In this section we present the reduction as well
as stating the consequences obtained from this and Theorem
\ref{th:isol-real-root-bd-full} of Section \ref{sec-isolated}.
To analyse our reduction we also need the following simple fact.
\begin{proposition}[\cite{BasuPollackRoy2006}, Proposition 8.12]
\label{PROP:DetPolyBound}
  Let $M$ be an $m \times m$ matrix, whose entries are integer
  polynomials in variables $x_1,\dots,x_n$ of degree at most $d$ and
  coefficients of bitsize at most $\tau$. Then $\det(M)$, as a
  polynomial in variables $x_1,\dots,x_n$ is of degree at most $dm$
  and has coefficients of bitsize at most $(\tau + \bit(m))m + n\bit(md+1)$,
  where $bit(z)=\lceil \lg z \rceil$.
\end{proposition}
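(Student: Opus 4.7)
The plan is to expand $\det(M)$ by the Leibniz formula
\[
\det(M) = \sum_{\sigma \in S_m} \sgn(\sigma) \prod_{i=1}^m M_{i,\sigma(i)},
\]
and deduce both bounds directly. The degree bound is immediate: each entry $M_{i,\sigma(i)}$ has total degree at most $d$ in $x_1,\dots,x_n$, so each product has degree at most $md$, and the signed sum preserves this bound.

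For the coefficient bound, I would write each entry as $M_{i,j} = \sum_{|\alpha| \leq d} c_{i,j,\alpha} x^\alpha$ with $|c_{i,j,\alpha}| \leq 2^\tau$, substitute into the Leibniz formula, and collect the coefficient of a fixed target monomial $x^\beta$ (for $|\beta| \leq md$):
\[
[x^\beta] \det(M) = \sum_{\sigma \in S_m} \sgn(\sigma) \sum_{\substack{\alpha_1 + \cdots + \alpha_m = \beta \\ |\alpha_i| \leq d}} \prod_{i=1}^m c_{i,\sigma(i),\alpha_i}.
\]
Each summand is a product of $m$ coefficients, so its absolute value is at most $2^{\tau m}$, contributing the $\tau m$ term to the bitsize bound.

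It then remains to bound the total number of contributing pairs $(\sigma,(\alpha_1,\dots,\alpha_m))$. The factor $m \cdot \bit(m)$ accommodates the $m!$ permutations via the crude estimate $m! \leq 2^{m \cdot \bit(m)}$, and the factor $n \cdot \bit(md+1)$ accommodates the combinatorial multiplicity of decomposition tuples, which is naturally controlled by $(md+1)^n$, the number of distinct monomials of degree at most $md$ in $n$ variables. The main obstacle is the combined estimate: for some $\beta$ (e.g.\ $n=1$, $d=1$, $m=4$, $\beta=2$) the raw number of decomposition tuples exceeds $(md+1)^n$ in isolation, so the argument cannot simply multiply the two bounds. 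I would resolve this by organizing the count jointly so that the (often substantial) slack in the permutation estimate $m! \leq 2^{m \cdot \bit(m)}$ absorbs the overshoot in the decomposition count; once this is carried out, collecting factors yields $|[x^\beta] \det(M)| \leq 2^{(\tau + \bit(m))m + n \cdot \bit(md+1)}$, which is the desired bound.
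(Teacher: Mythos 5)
Your degree bound and the $2^{\tau m}$ bound on each individual Leibniz term are correct, but the final step --- that the slack in $m!\le 2^{m\,\bit(m)}$ absorbs the overshoot of the decomposition count over $(md+1)^n$ --- is precisely where the proof fails, and it cannot be repaired. For a central target monomial the number of decompositions $\alpha_1+\cdots+\alpha_m=\beta$ with $\abs{\alpha_i}\le d$ grows like $d^{n(m-1)}$: already for $n=1$, $m=3$ and $\beta=3d/2$ it equals $3d^2/4+3d/2+1$. The entire available budget $2^{m\,\bit(m)+n\,\bit(md+1)}$ is at most $(2m)^m(2(md+1))^n$, which for fixed $m$ and $n=1$ is only linear in $d$; so for $m\ge 3$ and $d$ large the joint count exceeds the target by a factor growing polynomially in $d$, and no redistribution of slack between the permutation factor and the monomial-count factor can help.

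In fact the statement with these exact constants is false, so even exploiting sign cancellation among the $m!$ products cannot rescue the approach. Take $n=1$, $m=3$, $g(x)=1+x+\cdots+x^d$ and $M=g(x)\,\epsilon$, where $\epsilon$ is a $3\times 3$ matrix with entries $\pm 1$ and $\det(\epsilon)=4$; the entries of $M$ have coefficients of bitsize $\tau=1$. Then $\det(M)=4\,g(x)^3$, whose coefficient of $x^{3d/2}$ is $3d^2+6d+4$. For $d=2048$ this equals $12595204$, of bitsize $24$, whereas the claimed bound is $(1+\bit(3))\cdot 3+\bit(3d+1)=9+13=22$. What your expansion does prove --- after grouping the terms by decomposition and recognizing the inner alternating sum over $\sigma$ as the determinant of an integer matrix with entries of bitsize $\tau$, hence of absolute value at most $2^{(\tau+\bit(m))m}$ --- is the bound $(\tau+\bit(m))m+n(m-1)\bit(d+1)$, since a decomposition of $\beta$ is determined by its first $m-1$ exponent vectors, each ranging over at most $(d+1)^n$ monomials. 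A term of that order must replace $n\,\bit(md+1)$ for the proposition to be provable; as printed, the statement you were asked to prove appears to be a misquotation of the cited source, and your proof cannot be completed.
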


Also, denote by $\ball(v,\epsilon)$ the ball around $v \in \RR^N$ of
radius $\epsilon>0$, $\{v' \in \RR^N \mid \norm{v-v'}_2 \leq
\epsilon\}$. We are now in position to present the reduction.
\begin{theorem}
  Let $\Gamma$ be a Shapley game, with $N$ positions. Assume that in
  position $k$, the two players have $m_k$ and $n_k$ actions
  available. Assume further that all payoffs and probabilities in
  $\Gamma$ are rational numbers with numerators and denominators of
  bitsize at most $\tau$.

  Then there is a system $\mathcal{S}$ of polynomials in variables
  $v_1,$ $\ldots,$ $v_N$, for which the value vector $v^*$ of $\Gamma$ 
  is an isolated
    root. Furthermore the system $\mathcal{S}$ consists of at most
  $\sum_{k=1}^N \binom{n_k+m_k}{m_k}$ polynomials, each of degree at
  most $m+2$ and having integer coefficients of bitsize at most
  $2(N+1)(m+1)^2\tau+1$, where $m =
  \max_{k=1}^N\left(\min(n_k,m_k)\right)$.
  \label{THM:ShapleyPolynomialSystem}
\end{theorem}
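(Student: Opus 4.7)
The plan is to express each component of the fixed-point equation $v^*_k = \val(A^k(v^*))$---which holds by Shapley's theorem (Theorem~\ref{THM:ShapleyValue})---as a single polynomial equation by applying Cramer's rule to the LP~(\ref{EQ:MatrixLP}) associated with the matrix game $A^k(v^*)$. For each position $k$, Lemma~\ref{LEM:LPbfs} guarantees a potential basis set $B_k$ whose induced bfs is optimal for $A^k(v^*)$, and Cramer's rule then reads $v^*_k \det(M^{A^k(v^*)}_{B_k}) = \det((M^{A^k(v^*)}_{B_k})_{m_k+1})$. This motivates the definition
\[
P^k_{B_k}(v) := v_k \det(M^{A^k(v)}_{B_k}) - \det\bigl((M^{A^k(v)}_{B_k})_{m_k+1}\bigr),
\]
which vanishes at $v^*$. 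To keep degrees small I always pick whichever of Player~I's or Player~II's LP has fewer decision variables at position $k$, so the matrix involved has side $\min(m_k,n_k)+1$. Taking $\mathcal{S} := \{P^k_{B_k} : 1\le k\le N\}$ yields a system of $N \le \sum_k \binom{n_k+m_k}{m_k}$ polynomials, each vanishing at $v^*$.

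To show that $v^*$ is an \emph{isolated} root of $\mathcal{S}$ I verify nonsingularity of the Jacobian of $\mathcal{S}$ at $v^*$. Factor $P^k_{B_k}(v) = \det(M^{A^k(v)}_{B_k}) \cdot (v_k - R_k(v))$ with $R_k(v) := \det((M^{A^k(v)}_{B_k})_{m_k+1})/\det(M^{A^k(v)}_{B_k})$; differentiating at $v^*$ (where $v^*_k = R_k(v^*)$) gives
\[
\frac{\partial P^k_{B_k}}{\partial v_l}(v^*) = \det(M^{A^k(v^*)}_{B_k})\left(\delta_{kl} - \frac{\partial R_k}{\partial v_l}(v^*)\right).
\]
The prefactor $\det(M^{A^k(v^*)}_{B_k})$ is nonzero because $B_k$ is a basis. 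In a neighbourhood of $v^*$ in which each $B_k$ remains optimal for $A^k(v)$, the rational map $R$ agrees with the Shapley operator $T$, so $\nabla R(v^*) = \nabla T(v^*)$; because $T$ is a strict contraction (Theorem~\ref{THM:ShapleyValue}), its Jacobian at the fixed point has spectral radius strictly less than $1$, so $I - \nabla R(v^*)$ is invertible and the full Jacobian of $\mathcal{S}$ at $v^*$ is nonsingular. I expect the main subtlety to lie in handling degenerate optimal bfs's, where no such neighbourhood of agreement $R = T$ exists for a single fixed $B_k$; the remedy is to exploit the freedom to pick $B_k$ among the bases realising the same optimum at $v^*$, selecting one for which the strict sign conditions analogous to those defining $F^A_B$ hold.

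For the degree and bitsize bounds I apply Proposition~\ref{PROP:DetPolyBound} to $M^{A^k(v)}_{B_k}$ after clearing denominators. The entries of $A^k(v)$ are linear in $v$ with rational coefficients of numerator and denominator bitsize $\le \tau$; scaling each first-type row of the LP constraint matrix by the common denominator of its entries (which has bitsize $O((m+1)(N+1)\tau)$) produces an integer-polynomial matrix of side at most $m+1$, degree $d = 1$, and entry bitsize $O((m+1)(N+1)\tau)$. Proposition~\ref{PROP:DetPolyBound} then gives $\deg \det(M^{A^k(v)}_{B_k}) \le m+1$, so $\deg P^k_{B_k} \le m+2$ after the extra factor of $v_k$. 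Plugging the same parameters into the $(\tau+\bit(m))m + n\,\bit(md+1)$ coefficient-bitsize estimate of Proposition~\ref{PROP:DetPolyBound}, and accounting for the row-scaling factors, yields the claimed coefficient bound $2(N+1)(m+1)^2\tau + 1$.
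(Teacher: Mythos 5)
Your construction of the individual polynomials $P^k_{B_k}$, and the degree/bitsize analysis via row-scaling and Proposition~\ref{PROP:DetPolyBound}, match the paper. The genuine gap is in the isolation argument. You take only \emph{one} polynomial per position ($N$ polynomials total) and try to prove isolation by showing the Jacobian of the system is nonsingular at $v^*$. That argument hinges on the identity $R_k = \val(A^k(\cdot))$ holding on a full neighbourhood of $v^*$, i.e.\ on $B_k$ being optimal for $A^k(v)$ for all $v$ near $v^*$. You correctly flag the degenerate case as the difficulty, but your proposed remedy --- choosing among the optimal bases at $v^*$ one satisfying strict sign conditions --- does not work: $v^*$ can lie on the common boundary of several regions $O^{A^k}_{B}$, so that \emph{no} single basis is optimal on a whole neighbourhood. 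In that situation $R_k$ agrees with $\val(A^k(\cdot))$ only on a lower-dimensional set through $v^*$, its gradient at $v^*$ inherits nothing from the contraction property of $T$, and $I - \nabla R(v^*)$ may well be singular; with only $N$ equations in $N$ unknowns, a singular Jacobian leaves open that $v^*$ sits on a positive-dimensional component of the zero set, so isolation is genuinely unproven (and your weaker system may not even have $v^*$ isolated).

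The paper avoids this entirely by a different, non-local mechanism: for each position $k$ it puts into $\mathcal{S}$ the polynomial $P_{B^k}$ for \emph{every} potential basis set $B^k$ whose region $\overline{O^{A^k}_{B^k}}$ contains $v^*$, choosing $\epsilon>0$ so that these are exactly the bases whose optimality regions meet $\ball(v^*,\epsilon)$ (this is why the bound on the number of polynomials is $\sum_k\binom{n_k+m_k}{m_k}$ rather than $N$). Then for any common root $v'\in\ball(v^*,\epsilon)$, whichever basis is optimal for $A^k(v')$ already appears in the system, which forces $v'_k = \val(A^k(v'))$ for all $k$; hence $v'$ is a fixed point of $T$, and uniqueness of that fixed point (Theorem~\ref{THM:ShapleyValue}) gives $v'=v^*$. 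No differentiability of $T$ or nondegeneracy of the LP is needed. To repair your proof you should either adopt this ``all relevant bases'' device or supply a genuine argument for the degenerate case; as written, the key step fails exactly where you suspected it might.
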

\begin{proof}
  Let $v^* \in \RR^n$ be the fixpoint of $T$ given by
  Theorem~\ref{THM:ShapleyValue}. For all positions $k$, and for all
  potential basis sets $B^k$ corresponding to the parameterized matrix
  game $A^k$ we consider the closures $\overline{O^{A^k}_{B^k}}$ of
  the sets $O^{A^k}_{B^k}$. Since there are finitely many positions
  and for each position finitely many potential basis sets, we may
  find $\epsilon > 0$ such that whenever $\ball(v^*,\epsilon) \cap
  \overline{O^{A^k}_{B^k}} \neq \emptyset$ we have $v^* \in
  \overline{O^{A^k}_{B^k}}$ for all positions $k$ and all potential
  basis sets $B^k$. For a given position $k$, let $\mathcal{B}^k$ be
  the set of such potential basis sets. Then, for every $B^k \in
  \mathcal{B}^k$ define the polynomial
  \begin{equation*}
    \label{EQ:FixpointPoly}
    P_{B^k}(w) = \det((M^{A^k(w)}_{B^k})_{m_k+1}) - w_k\det(M^{A^k(w)}_{B^k}) \enspace .
  \end{equation*}
  Let $\mathcal{P}$ be the system of polynomials consisting of all such
  polynomials for all positions $k$. We claim that $v^*$ is an isolated
  root of the system $\mathcal{P}$. First we show that $v^*$ is in fact
  a solution. Consider any position $k$ and any polynomial $P_{B^k} \in
  \mathcal{P}$. By construction we have $v^* \in
  \overline{O^{A^k}_{B^k}}$, and we may thus find a sequence
  $(w^i)_{i=1}^\infty$ in $O^{A^k}_{B^k}$ converging to $v^*$. Since for
  every $i$, $w^i \in O^{A^k}_{B^k}$ we have that
  $\det((M^{A^k(w^i)}_B)_{m+1}) - \val(A^k(w^i))\det(M^{A(w^i)}_B) = 0$,
  and thus by continuity of the functions $\det$, $\val$, and the
  entries of $A^k$, we obtain $\det((M^{A^k(v^*)}_B)_{m+1}) -
  \val(A^k(v^*))\det(M^{A^k(v^*)}_B)=0$. But $\val(A^k(v^*)) = v^*_k$
  and hence $P_{B^k}(v^*)=0$.

  Next we show that $v^*$ is unique. Indeed, suppose that $v' \in
  \ball(v^*,\epsilon)$ is a solution to the system $\mathcal{P}$. For
  each position $k$ pick a potential basis set $B^k$ such that $B^k$
  describes an optimal bfs for $A^k(v')$. Now since $v' \in
  \ball(v^*,\epsilon)$ as well as $v' \in O^{A^k}_{B^k}$ we
  have by definition that $B^k \in \mathcal{B}^k$ and hence $P_{B^k} \in
  \mathcal{P}$. As a consequence $v'$ must be a root of $P_{B^k}$. Now,
  since $B^k$ in particular is a basic solution we have
  $\det(M^{A^k(v')}_{B^k}) \neq 0$. Combining these two facts we obtain
  \[
  v'_k = \det((M^{A^k(v')}_{B^k})_{m_k+1})/\det(M^{A^k(v')}_{B^k}) \enspace ,
  \]
  and since $B^k$ is an optimal bfs for $A^k(v')$ we have that
  $\val(A^k(v'))_k = v'_k$. Since this holds for all $k$, we obtain that
  $v'$ is a fixpoint of $T$, and Theorem~\ref{THM:ShapleyValue} then
  gives that $v'=v^*$.

  To get the system $\mathcal{S}$ we take (smallest) integer multiples
  of the polynomials in $\mathcal{S}$ such that all polynomials have
  integer coefficients. For a given position $k$, we have
  $\binom{n_k+m_k}{m_k}$ potential basis sets, giving the bound on the
  number of polynomials. Assume now that $m_k \leq n_k$ (In case
  $m_k>n_k$ we can consider the dual of the linear program in
  Lemma~\ref{LEM:LPbfs}). Fix a potential basis set $B^k$.

  Using Proposition~\ref{PROP:DetPolyBound} the degree of $P_{B^k}(w)$
  is at most $1+(m_k+1)$. Further to bound the bitsize of the
  coefficients, note that using linearity of the determinant we may
  multiply each row of the matrices $(M^{A^k(w)}_{B^k})_{m_k+1}$ and
  $M^{A^k(w)}_{B^k}$ by the product of the denominators of all the
  coefficients of entries in the same row in the matrix
  $M^{A^k(w)}_{B^k}$. This product is an integer of bitsize at most
  $(N+1)(m_k+1)\tau$. Hence, doing this, both matrices will have entries
  where all the coefficients are integers of bitsize at most
  $(N+1)(m_k+1)\tau$ as well. Now by Proposition~\ref{PROP:DetPolyBound}
  again the bitsize of the coefficients of both determinants is at most 
  \begin{gather*}
    ( (N+1)(m_k+1)\tau + \bit(m_k) )(m_k+1) + N\bit(m_k+2) \leq \\
    2(N+1)(m_k+1)^2\tau
  \end{gather*}
  From this the claimed bound follow.
\end{proof}

We can now state the degree and separation bounds for Shapley games.
\begin{proposition}\label{PROP:sepShap}
  Let $\Gamma$ be a Shapley game with $N$ positions and $m$ actions
  for each player in each position and all rewards and transition
  probabilities being rational numbers with numerators and
  denominators of bitsize at most $\tau$, and let $v$ be the value
  vector of $\Gamma$. Then, each entry of $v$ is an algebraic number
  of degree at most $(2m+5)^N$ and the defining polynomial has
  coefficients of bitsize at most $21m^2N^2\tau
  (2m+5)^{N-1}$. Finally, if an entry of $v$ is not an integer
  multiple of $2^{-j}$, it differs from any such multiple by at least
  $2^{ -22m^2N^2\tau (2m+5)^{N-1} - j (2m+5)^{N} -1}$.
\end{proposition}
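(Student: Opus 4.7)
The proposal is to combine Theorem~\ref{THM:ShapleyPolynomialSystem}, which exhibits $v^*$ as an isolated real root of an explicit polynomial system, with the big-O-less isolated-root bound (Theorem~\ref{th:isol-real-root-bd-full}) promised for Section~\ref{sec-isolated}. That theorem gives, for a real point that is isolated among the complex solutions of a system of polynomials in $n$ variables of degree at most $d$ with integer coefficients of bitsize at most $\sigma$, an upper bound of $(2d+1)^n$ on the algebraic degree of each coordinate, together with an explicit bitsize bound on the coefficients of the defining univariate polynomial that is polynomial in $n$, $d$, $\sigma$ and exponential only in $n$ via the factor $(2d+1)^{n-1}$.

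First I would plug in the parameters of Theorem~\ref{THM:ShapleyPolynomialSystem}: $n=N$, $d=m+2$, $\sigma = 2(N+1)(m+1)^2\tau+1$. The degree bound becomes $(2(m+2)+1)^N = (2m+5)^N$, which is the first claim. Substituting the same parameters into the coefficient bound of Theorem~\ref{th:isol-real-root-bd-full} yields an upper bound on the bitsize of the defining polynomial of the form $C\cdot n\,\sigma\,(2d+1)^{n-1}$ for some small absolute constant $C$; choosing the constants so that $C\cdot N\cdot 2(N+1)(m+1)^2\tau \leq 21m^2N^2\tau$ (which holds comfortably for $m,N\geq 1$, modulo handling the tiny $m=1$ cases separately or by padding) gives the claimed bound $21m^2N^2\tau\,(2m+5)^{N-1}$. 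I would carry out this bookkeeping carefully but would not belabour it in exposition.

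For the separation bound, I would use the standard trick. Let $p\in\ZZ[x]$ be the defining polynomial of an entry $v_k^*$, of degree $D\leq (2m+5)^N$ and height $H$ with $\log_2 H\leq 21m^2N^2\tau(2m+5)^{N-1}$. For a dyadic $\beta=a/2^j$ that is not a root of $p$, the integer $2^{jD}p(\beta)$ is nonzero, so $\abs{p(\beta)}\geq 2^{-jD}$. By the mean value theorem $\abs{p(\beta)} \leq \abs{v_k^*-\beta}\cdot\max_\xi\abs{p'(\xi)}$ on the segment joining $v_k^*$ and $\beta$. Since payoffs lie in $[-1,1]$ the values of a Shapley game also lie in $[-1,1]$, and $\abs{p'(\xi)}\leq D^2 H$ on $[-1,1]$. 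Taking logarithms, $\log_2\abs{v_k^*-\beta}\geq -jD-2\log_2 D-\log_2 H$, and this is bounded below by $-j(2m+5)^N - 22m^2N^2\tau(2m+5)^{N-1}-1$ once the small $2\log_2 D$ overhead is absorbed into the constant $22$ (versus $21$) and the trailing $-1$.

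The main obstacle, as should be clear, is not conceptual but rather the explicit constant accounting in the coefficient bitsize estimate handed down by Theorem~\ref{th:isol-real-root-bd-full}; in particular, verifying that the constant hidden in that theorem, combined with the $2(N+1)(m+1)^2\tau+1$ coefficient size coming out of Theorem~\ref{THM:ShapleyPolynomialSystem} and a single Cramer's rule step, indeed fits inside the advertised $21m^2N^2\tau(2m+5)^{N-1}$. Everything else — the degree arithmetic, the dyadic-evaluation separation, and the derivative bound — is routine once the inputs to Section~\ref{sec-isolated} are set up correctly.
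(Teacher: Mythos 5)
Your proposal matches the paper's proof: both obtain the degree and coefficient bounds by feeding the parameters of Theorem~\ref{THM:ShapleyPolynomialSystem} ($n=N$, $d=m+2$, bitsize $2(N+1)(m+1)^2\tau+1$) into Theorem~\ref{th:isol-real-root-bd-full}, yielding $(2m+5)^N$ and the $21m^2N^2\tau(2m+5)^{N-1}$ coefficient bound. The only cosmetic difference is the last step: the paper derives the separation bound by shifting the defining polynomial by the dyadic and applying the Cauchy root bound of Lemma~\ref{lem:uni-bounds}, whereas you use the equivalent Liouville-style argument (integrality of $2^{jD}p(\beta)$ plus a derivative bound on $[-1,1]$); both are routine and give the stated constants up to the same harmless slack.
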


\begin{proof}
  From Theorem~\ref{THM:ShapleyPolynomialSystem} the value of $\Gamma$
  is among the isolated real solutions of a system of
  $\sum_{i=1}^{N}{2m \choose m} \leq 4^m$ polynomials, of degree at
  most $m+2$ and bitsize at most $2(N+1)(m+1)^2\tau +1 \leq
  4Nm^2\tau$.  Theorem~\ref{th:isol-real-root-bd-full} implies that
  the algebraic degree of the solutions is $(2(m+2)+1)^N = (2m+5)^N$
  and the defining polynomial has coefficients of magnitude at most
  \[
2^{(8m^2 N^2 \tau + 8N m + 5 N \lg(m)) (2m+5)^{N-1}} \leq
  2^{21m^2N^2\tau (2m+5)^{N-1}} \enspace .
\]

  For a position $k$, let the defining polynomial be $A(v_k)$. To
  compute a lower bound on the difference between a root of $A$ and a
  number $2^{-j}$, it suffices to apply the map $v_k \mapsto v_k +
  2^{-j}$ to $A$ and compute a lower bound for the roots of the
  shifted polynomial.  The shifted polynomial also has degree
  $(2m+5)^N$, but its maximum coefficient bitsize is now bounded by
  $21m^2N^2\tau (2m+5)^{N-1} + j (2m+5)^{N} + 4 \lg (2m+5)^{N} \leq
  22m^2N^2\tau (2m+5)^{N-1} + j (2m+5)^{N}$.  By applying
  Lemma~\ref{lem:uni-bounds} we get the result.
  %%
  % The lower bound, respectively the separation bound, follows from
  % (\ref{eq:isol-rr-lower-bd}), respectively (\ref{eq:isol-rr-sep-bd}),
  % of Th.~\ref{th:isol-real-root-full}.
\end{proof}

\subsection{Everett Games}

Our bounds on degree, coefficient size, and separation for Everett
games are obtained by a reduction to the more general results about
the first-order theory of the reals.

\begin{theorem}
  Let $\Gamma$ be an Everett game, with $N$ positions. Assume that in
  position $k$, the two players have $m_k$ and $n_k$ actions
  available. Assume further that all payoffs and probabilities in
  $\Gamma$ are rational numbers with numerators and denominators of
  bitsize at most $\tau$.
  
  Then there is a quantified formula with $N$ free variables that
  describes whether a vector $v^*$ is the value vector of $\Gamma$.
  The formula has two blocks of quantifiers, where the first block
  consists of a single variable and the second block consists of $2N$
  variables. Furthermore the formula uses at most
  $(2N+3)+2(m+2)\sum_{k=1}^N \binom{n_k+m_k}{m_k}$ different
  polynomials, each of degree at most $m+2$ and having coefficients
  of bitsize at most $2(N+1)(m+2)^2\bit(m)\tau$, where $m =
  \max_{k=1}^N\left(\min(n_k,m_k)\right)$.
  \label{THM:EverettPolynomialSystem}
\end{theorem}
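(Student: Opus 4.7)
I would derive the formula directly from Everett's characterisation of the value vector (Theorem~\ref{THM:CriticalVectorExistence}) as the unique critical vector, that is, the unique element of $\overline{C_1(\Gamma)}\cap\overline{C_2(\Gamma)}$. In first-order language, ``$v^*$ is the value vector of $\Gamma$'' is therefore equivalent to
\[
\forall\delta\ \exists v_1,v_2\in\RR^N\ :\ \delta>0\;\Rightarrow\;\bigl(v_1\in C_1(\Gamma)\wedge v_2\in C_2(\Gamma)\wedge \norm{v^*-v_1}_\infty\leq\delta\wedge\norm{v^*-v_2}_\infty\leq\delta\bigr).
\]
This already exhibits the quantifier pattern demanded by the theorem: one universally quantified variable $\delta$ and a block of $2N$ existentially quantified variables (the coordinates of $v_1$ and $v_2$), with $v^*$ free.

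The heart of the proof is to rewrite the set-memberships $v_1\in C_1(\Gamma)$ and $v_2\in C_2(\Gamma)$ as quantifier-free polynomial conditions on $v_1$ and $v_2$. Unpacking $\dotgeq$ and $\dotleq$ componentwise, this reduces, position by position, to inequalities of the form $\val(A^k(v))\gtrless c$ where $c$ is a coordinate of $v$. For each such inequality I would reuse the LP machinery of Section~\ref{sec-pre}: since the feasible region of LP~(\ref{EQ:MatrixLP}) is bounded, the optimum is attained at a basic feasible solution, so $\val(A^k(v))\geq c$ holds iff some potential basis set $B$ for player~I's LP defines a bfs with value coordinate $\geq c$; the reverse $\val(A^k(v))\leq c$ is handled symmetrically using the analogous LP for player~II. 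By Lemma~\ref{LEM:LPbfs} and Cramer's rule, each ``bfs with value $\gtrless c$'' assertion collapses into a Boolean combination of sign predicates on the $m+2$ determinantal polynomials $\det(M^{A^k(v)}_B)$ and $\det((M^{A^k(v)}_B)_i)$ for $i=1,\dots,m+1$, with the $c$-comparison absorbed into the polynomial $\det((M^{A^k(v)}_B)_{m+1})-c\,\det(M^{A^k(v)}_B)$ (which plays the role of the polynomial $P_{B^k}$ in the proof of Theorem~\ref{THM:ShapleyPolynomialSystem}). Summing across the two LPs, all $N$ positions, and the $\binom{n_k+m_k}{m_k}$ potential basis sets per position accounts for the claimed $2(m+2)\sum_k\binom{n_k+m_k}{m_k}$ determinantal polynomials, while the remaining $2N+3$ polynomials carry the atom $\delta>0$, the $2N$ linear comparisons $\pm(v^*_k-v_{i,k})-\delta\leq 0$ encoding $\norm{v^*-v_i}_\infty\leq\delta$, and a couple of bookkeeping predicates.

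For the degree and bitsize bounds I would appeal to Proposition~\ref{PROP:DetPolyBound}. The entries of $A^k(v)$ are affine in $v$ with rational coefficients of bitsize $O(\tau)$; clearing denominators row by row, as in the proof of Theorem~\ref{THM:ShapleyPolynomialSystem}, turns each $(m+1)\times(m+1)$ determinant into an integer polynomial of degree $\leq m+1$ and bitsize $O(Nm^2\tau)$. Multiplying by the extra factor $c$ (a single variable) in the comparison polynomial raises the degree to $m+2$ and absorbs an additional $\bit(m)$ factor in the bitsize, giving the claimed ceiling $2(N+1)(m+2)^2\bit(m)\tau$.

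The main obstacle is the bookkeeping around $\dotgeq$ and $\dotleq$: these relations switch between strict and non-strict inequalities according to the sign of the coordinate on the right-hand side, and one must verify that this sign-dependent strictness is expressible purely by conjoining sign predicates on the $v_{i,k}$ themselves to the Boolean combinations already built, without introducing any new polynomial families. A secondary subtlety is to check that the LP-feasibility clauses indexed by $j\in\overline{B_1}$ and $i\in\overline{B_2}$ reduce to sign conditions involving only the $m+2$ determinants associated with $B$, so that the per-basis polynomial count stays at $m+2$ rather than growing with $n_k$.
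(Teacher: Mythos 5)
Your approach is essentially the paper's: both start from Everett's characterisation of the value vector as the unique critical vector, write ``$v^*=\val(\Gamma)$'' as a $\forall\exists$ formula over an accuracy parameter and the pair $(v_1,v_2)\in C_1(\Gamma)\times C_2(\Gamma)$, and unfold the memberships $v_i\in C_i(\Gamma)$ position by position into sign conditions on the Cramer determinants $\det(M^{A^k(v)}_{B})$, $\det((M^{A^k(v)}_{B})_i)$ over all potential basis sets, with degrees and bitsizes controlled by Proposition~\ref{PROP:DetPolyBound} exactly as in Theorem~\ref{THM:ShapleyPolynomialSystem}. The one substantive discrepancy is your polynomial count. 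The paper encodes the closeness conditions with the \emph{squared Euclidean norm}, $\norm{v-v_i}^2<\epsilon$, which costs a single degree-$2$ polynomial per condition; the $2N$ in the $(2N+3)$ term is not the norm encoding but the $2N$ sign predicates $v_{i,k}>0$ (resp.\ $\leq 0$) forced by the strict/non-strict case split in $\dotgeq$ and $\dotleq$ --- these are genuinely new (degree-$1$) polynomials, one per position per set, and the paper counts them inside Lemma~\ref{LEM:QFreeC1C2}. Your $\ell_\infty$ encoding needs $2N$ linear comparisons for \emph{each} of $v_1$ and $v_2$, i.e.\ $4N$ polynomials, on top of the $2N$ sign predicates, so the formula you describe uses roughly $6N$ extra polynomials rather than $2N+3$ and does not meet the stated bound as written (harmless downstream, since the bound only enters through big-$O$ quantifier elimination, but it does not prove the theorem as stated). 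Two smaller points: (i) you do not need Player~II's dual LP for the ``$\val\leq c$'' direction --- since $\val$ is the maximum over basic feasible solutions of the primal, $\val(A^k(v))<c$ is the \emph{conjunction} over potential basis sets of ``not a bfs, or objective $<c$'', reusing the very same determinants, which is how the per-basis polynomial count is kept down; (ii) your worry about the $\overline{B_1},\overline{B_2}$ feasibility clauses is legitimate --- those clauses are Boolean combinations of the $m+2$ determinants only after the dualisation step $m_k\leq n_k$, and the paper itself is terse on this.
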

\begin{proof}
  By Theorem~\ref{THM:CriticalVectorExistence} we may express the
  value vector $v^*$ by the following first-order formula with free
  variables $v$: $ (\forall \epsilon)(\exists v_1,v_2)\ (\epsilon \leq
  0) \vee (\norm{v-v_1}^2 < \epsilon \wedge \norm{v-v_2}^2 < \epsilon
  \wedge v_1 \in C_1(\Gamma) \wedge v_2 \in C_2(\Gamma)) \enspace .  $
  Here the expressions $v_1 \in C_1(\Gamma)$ and $v_2 \in C_2(\Gamma)$
  are shorthands for the quantifier free formulas of polynomial
  inequalities implied by the definitions of $C_1(\Gamma)$ and
  $C_2(\Gamma)$. We provide the details below for the case of
  $C_1(\Gamma)$. The case of $C_2(\Gamma)$ is analogous.  By
  definition $v_1 \in C_1(\Gamma)$ means $M(v_1) \dotgeq v_1$, which
  in turn is equivalent to $\wedge_{k=1}^N ( (\val(A^k(v_1)) > v_{1k}
  \wedge v_{1k} >0 )$ $ \vee $ $ (\val(A^k(v_1)) \geq v_{1k} $ $\wedge
  $ $(v_{1k} \leq 0) ) ).$ Now we can rewrite the predicate
  $\val(A^k(v_1)) > v_{1k}$ to the following expression: $\vee_{B^k} (
  (v_1 \in F^{A^k+}_{B^k} \wedge \det((M^{A^k(v_1)}_{B^k})_{m_k+1}) >
  v_{1k}\det(M^{A^k(v_1)}_{B^k}))) \vee ((v_1 \in F^{A^k-}_{B^k}
  \wedge \det((M^{A^k(v_1)}_{B^k})_{m_k+1}) <
  v_{1k}\det(M^{A^k(v_1)}_{B^k}))),$ where the disjunction is over all
  potential basis sets, and each of the expressions $v_1 \in
  F^{A^k+}_{B^k}$ and $v_1 \in F^{A^k-}_{B^k}$ are shorthands for the
  conjunction of the $m_k+1$ polynomial inequalities describing the
  corresponding sets.

By a similar
analysis as in the proof of Theorem~\ref{THM:ShapleyPolynomialSystem}
we get the following bounds, assuming without loss of generality that
$m_k \leq n_k$:
  The predicates $v_1 \in F^{A^k+}_{B^k}$ and $v_1 \in F^{A^k-}_{B^k}$
  can be written as a quantifier free formulas using at most $m_k+1$
  different polynomials, each of degree at most $m_k+2$ and having
  coefficients of bitsize at most $2(N+1)(m_k+2)^2\bit(m_k)\tau$.
Also,
  the predicate $\val(A^k(v_1)) > v_{1k}$ can be written as a
  quantifier free formula using at most $(m_k+2)\binom{n_k+m_k}{m_k}$
  different polynomials, each of degree at most $m_k+2$ and having
  coefficients of bitsize at most $2(N+1)(m_k+2)^2\bit(m_k)\tau$.

  Combining these further, for all positions we have the following
  statement (that shall be used also in our upper bound for strategy
  iteration for concurrent reachability games in
  Section~\ref{sec-si}).
\begin{lemma}
  \label{LEM:QFreeC1C2}
  The predicate $v_1 \in C_1(\Gamma)$ can be written as a quantifier
  free formula using at most $\sum_{k=1}^N
  1+(m+2)\binom{n_k+m_k}{m_k}$ different polynomials, each of degree
  at most $m+2$ and having coefficients of bitsize at most
  $2(N+1)(m+2)^2\bit(m)\tau$, where $m =
  \max_{k=1}^N\left(\min(n_k,m_k)\right)$.
\end{lemma}
From this the statement of the theorem easily follows.
\end{proof}

We shall also need the following basic statement about univariate
representations.

\begin{lemma}\label{LEM:convert}
  Let $\alpha$ be a root of $f \in \ZZ[x]$, which is of degree $d$ and
  maximum coefficient bitsize at most $\tau$.  Moreover, let
  $g(x) = p(x)/q(x)$ where $p, q \in \ZZ[x]$ are of degree
  at most $d$, have maximum coefficient bitsize at most $\tau$, 
  and $q(\alpha) \not= 0$.
  Then the minimal polynomial of
  $g(\alpha)$ is a univariate polynomial of degree at most $2d$ and
  maximum coefficient bitsize at most $2 d \tau + 7 d \lg{d}$.
\end{lemma}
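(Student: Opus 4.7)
The plan is to exhibit a concrete nonzero univariate polynomial of controlled size that annihilates $g(\alpha)$; the minimal polynomial then divides it, and Mignotte's factor bound transfers the size estimates. The natural candidate is the resultant
\[
h(y) \;:=\; \operatorname{Res}_x\bigl(f(x),\, q(x)y - p(x)\bigr),
\]
viewed as a polynomial in $y$ with integer coefficients (the determinant of the Sylvester matrix of $f$ and $qy-p$ regarded as polynomials in $x$ over $\mathbb{Z}[y]$). Substituting $y = g(\alpha) = p(\alpha)/q(\alpha)$ makes $x = \alpha$ a common root of $f(x)$ and $q(x)y - p(x)$, so $h(g(\alpha)) = 0$ by the basic property of resultants.

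Next I would deal with non-vanishing and the degree bound together. The resultant $h$ is identically zero precisely when $f$ and $qy-p$ share a non-constant common factor in $\mathbb{Q}(y)[x]$, which, since $qy-p$ is linear in $y$, forces a common factor of $f$ and $\gcd(p,q)$ in $\mathbb{Q}[x]$. If $\gcd(f,q)$ is non-trivial, I would replace $f$ by $f/\gcd(f,q)$: this quotient still has $\alpha$ as a root because $q(\alpha)\neq 0$ eliminates the discarded factors, and now the new $f$ is coprime to $q$, guaranteeing $h \not\equiv 0$. The associated Sylvester matrix has size at most $2d \times 2d$, and the variable $y$ occurs only in the $d$ rows coming from $qy - p$, linearly in each entry. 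By multilinearity of the determinant in those rows, $\deg_y h \leq d \leq 2d$.

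For the coefficient bound I would expand the determinant by multilinearity in the $d$ rows containing $y$: the coefficient of $y^k$ in $h$ is a sum of $\binom{d}{k}$ determinants of $2d \times 2d$ integer matrices whose entries are coefficients of $f$, $p$, or $q$ (of absolute value $\leq 2^\tau$), with at most $d+1$ nonzero entries per row. Hadamard's inequality then gives each determinant magnitude at most $(2^\tau\sqrt{d+1})^{2d}$, so
\[
\|h\|_\infty \;\leq\; 2^d (d+1)^{d}\, 2^{2d\tau},
\]
whose bitsize is at most $2d\tau + d\log_2(d+1) + d$. Finally, since the minimal polynomial of $g(\alpha)$ divides $h$ in $\mathbb{Z}[y]$ (up to content), Mignotte's bound yields height at most $2^{\deg h}\|h\|_2 \leq 2^{d}\sqrt{d+1}\,\|h\|_\infty$, adding another $d + \tfrac12\log_2(d+1)$ bits. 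Collecting terms and being generous on the logarithmic part gives a bitsize of at most $2d\tau + 7d\log d$ for all $d \geq 2$.

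The main obstacle is the possibility that the resultant degenerates to zero, which is what forces the preliminary step of dividing out $\gcd(f,q)$; the crucial use of the hypothesis $q(\alpha)\neq 0$ is precisely there, to ensure $\alpha$ survives this division. Everything else is bookkeeping: keeping the Hadamard and Mignotte constants inside the slack between the actual bound $2d\tau + O(d\log d)$ and the stated $2d\tau + 7d\log d$.
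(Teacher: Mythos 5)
Your proof follows the same route as the paper's: form the resultant $\operatorname{Res}_x(f,\,qy-p)$, observe that it annihilates $g(\alpha)$ and has degree at most $d$ in $y$, bound its height (where the paper simply cites Proposition~8.46 of Basu--Pollack--Roy, you redo the Hadamard estimate on the Sylvester matrix by hand, arriving at the same $2d\tau+O(d\lg d)$), and pass to the minimal polynomial via the Landau--Mignotte factor bound. The only divergence is your explicit treatment of the degenerate case $\operatorname{Res}\equiv 0$, which the paper silently ignores (in its applications $f$ and $q=g_0$ are coprime by construction); be aware that replacing $f$ by $f/\gcd(f,q)$ increases its height by roughly $d$ bits (again by Mignotte), so in that edge case your final bookkeeping would yield $2d\tau+O(d^2)$ rather than $2d\tau+7d\lg d$ --- a caveat that does not affect how the lemma is used in the paper.
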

\begin{proof}
  The minimal polynomial of $g(\alpha)$ is among the square-free
  factors of the following (univariate) resultant with respect to $y$:
  \begin{displaymath}
    r(x) = \mathrm{res}_y( f(y), q(y)x - p(y)) \in \ZZ[x].
  \end{displaymath}
  The degree of $r$ is bounded by $d$ and its maximum coefficient
  bitsize is at most $2 d \tau + 5 d \lg{d}$
  \cite[Proposition~8.46]{BasuPollackRoy2006}. Any factor of $r$ has
  maximum coefficient bitsize at most $2 d \tau + 7 d \lg{d}$, due to
  the Landau-Mignotte bound, see, e.g., Mignotte \cite{Mign91}.
\end{proof}

We can now apply the machinery of semi-algebraic geometry to get the
desired bounds on degree and the separation bounds.

\begin{proposition}\label{PROP:sepEve}
  Let $\Gamma$ be an Everett game with $N$ positions, $m$ actions for
  each player in each position, and rewards and transition
  probabilities being rational numbers with numerators and
  denominators of bitsize at most $\tau$, and let $v$ be the value
  vector of $\Gamma$. Then, each entry of $v$ is an algebraic number
  of degree at most $m^{O(N^2)}$ and the defining polynomial has
  coefficients of bitsize at most $\tau m^{O(N^2)}$.  Finally, if an
  entry of $v$ is not a multiple of $2^{-j}$, it differs from any such
  multiple by at least $2^{- \max\{\tau,j\} \,m^{O(N^2)}}$.
\end{proposition}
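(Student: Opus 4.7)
The plan is to feed the first-order formula supplied by Theorem~\ref{THM:EverettPolynomialSystem} into the semi-algebraic machinery of Basu, Pollack and Roy. First I would apply quantifier elimination (Theorem~14.16 of \cite{BasuPollackRoy2006}) to the two quantifier blocks (sizes $1$ and $2N$), producing an equivalent quantifier-free formula $\Phi(v_1,\dots,v_N)$ over the $N$ free variables. The input formula uses $s = m^{O(N)}$ polynomials of degree $m+2$ with coefficient bitsizes $O(Nm^2\bit(m)\tau)$, and the bounds from quantifier elimination, being polynomial in $s$ and exponential in the product of the quantifier block sizes, then yield polynomials in $\Phi$ of degree $m^{O(N^2)}$ and coefficient bitsize $\tau\cdot m^{O(N^2)}$.

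Since $v^*$ is the \emph{unique} satisfying assignment of $\Phi$, the algebraic set defined by $\Phi$ contains $\{v^*\}$ as an isolated (zero-dimensional) component. I would then apply the sampling theorem (Theorem~13.11 of \cite{BasuPollackRoy2006}) to produce a univariate representation of this component: a polynomial $f \in \ZZ[x]$ and rational functions $g_k(x)=p_k(x)/q_k(x)$ with $v^*_k = g_k(\alpha)$ for some real root $\alpha$ of $f$. Tracking parameters through the sampling theorem, the degrees of $f$, $p_k$, $q_k$ remain $m^{O(N^2)}$ and the coefficient bitsizes remain $\tau\cdot m^{O(N^2)}$. Invoking Lemma~\ref{LEM:convert} on $f$ and $g_k$ then yields the minimal polynomial of $v^*_k$; its degree is at most $2\deg(f) = m^{O(N^2)}$ and its coefficient bitsize is at most $2\deg(f)\cdot\tau' + 7\deg(f)\lg\deg(f) = \tau\cdot m^{O(N^2)}$, where $\tau' = \tau\cdot m^{O(N^2)}$ is the input bitsize. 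This establishes the degree and coefficient bounds claimed in the proposition.

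For the separation bound, I would mimic the end of the proof of Proposition~\ref{PROP:sepShap}: apply the shift $x\mapsto x+2^{-j}$ to the defining polynomial of an entry $v^*_k$ that is not an integer multiple of $2^{-j}$. The degree is unchanged, while the coefficient bitsize grows by at most an additive $j\cdot\deg = j\cdot m^{O(N^2)}$, giving a new bitsize of $\max\{\tau,j\}\cdot m^{O(N^2)}$. Then Lemma~\ref{lem:uni-bounds} (the univariate root lower bound used for Proposition~\ref{PROP:sepShap}) supplies the claimed lower bound $2^{-\max\{\tau,j\}\, m^{O(N^2)}}$ on any nonzero root of the shifted polynomial, which is exactly $|v^*_k - c\cdot 2^{-j}|$ for the nearest multiple $c\cdot 2^{-j}$.

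The main obstacle, and the reason the statement contains unspecified constants hidden inside the $O(N^2)$ exponents, is the careful bookkeeping of how the parameters compound through quantifier elimination and sampling when two quantifier blocks are present. The bounds in Theorems~14.16 and~13.11 of \cite{BasuPollackRoy2006} depend multiplicatively on the number of variables in each block, so the composition of blocks of sizes $1$ and $2N$ sitting over $N$ free variables naturally produces the quadratic exponent $N^2$; obtaining explicit constants would require a big-O-less version of these theorems, which, as discussed in the introduction, is not currently available in a form suitable for our purposes.
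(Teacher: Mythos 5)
Your proposal follows exactly the paper's own argument: quantifier elimination (Theorem~14.16 of Basu--Pollack--Roy) applied to the formula of Theorem~\ref{THM:EverettPolynomialSystem}, then the sampling theorem (Theorem~13.11) to obtain a univariate representation, then Lemma~\ref{LEM:convert} for the defining polynomial, and finally the shift-plus-Lemma~\ref{lem:uni-bounds} argument from Proposition~\ref{PROP:sepShap} for the separation bound. The reasoning and the parameter bookkeeping match the paper's proof, so there is nothing to add.
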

\begin{proof}
  We use Theorem 14.16 (Quantifier Elimination) of Basu, Pollack and
  Roy \cite{BasuPollackRoy2006} on the formula of
  Theorem~\ref{THM:EverettPolynomialSystem} to find a quantifier free
  formula expressing that $v$ is the value vector of the game. Next,
  we use Theorem 13.11 (Sampling) of \cite{BasuPollackRoy2006} to this
  quantifier free formula to find a univariate representation of the
  value vector $v$ satisfying the formula from
  Lemma~\ref{THM:EverettPolynomialSystem}. That is, we obtain polynomials
  $f,g_0,\dots,g_{2N}$, with $f$ and $g_0$ coprime, such that $v =
  (g_1(t)/g_0(t),\dots,g_{2N}(t)/g_0(t))$, where $t$ is a root of
  $f$. These polynomials are of degree $m^{O(N^2)}$ and their
  coefficients have bitsize $\tau m^{O(N^2)}$. We apply
  Lemma~\ref{LEM:convert} to the univariate representation to obtain
  the desired defining polynomials. Finally, we obtain the separation
  bound using Lemma~\ref{lem:uni-bounds} in the same way as in the proof of Proposition~\ref{PROP:sepShap}
\end{proof}

% The above bounds lead to a setting of the parameters of the
% algorithm in Section~\ref{sec-alg}.

\subsection{Gillette's Stochastic Games}
Our bounds on degree, coefficient size, and separation for Gillette
games are obtained, as in the case of Everett games but in a more
involved way, by a reduction to the more general results about the
first-order theory of the reals.

\label{sec-gil}
\begin{theorem}
  Let $\Gamma$ be a Gillette game, with $N$ positions. Assume that in
  position $k$, the two players have $m_k$ and $n_k$ actions
  available. Assume further that all payoffs and probabilities in
  $\Gamma$ are rational numbers with numerators and denominators of
  bitsize at most $\tau$.
  
  Then there is a quantified formula with $N$ free variables that
  describes whether a vector $v^*$ is the value vector of $\Gamma$. The
  formula has four blocks of quantifiers, where the first three blocks
  consists of a single variable and the fourth block consists of $N$
  variables.  Furthermore the formula uses at most
  $4+2(m+2)\sum_{k=1}^N \binom{n_k+m_k}{m_k}$ different polynomials,
  each of degree at most $2(m+2)$ and having coefficients of bitsize
  at most $2(N+1)(m+2)^2\bit(m)\tau$, where $m =
  \max_{k=1}^N\left(\min(n_k,m_k)\right)$.
  \label{THM:MertensNeymanPolynomialSystem}
\end{theorem}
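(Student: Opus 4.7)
The approach mirrors the proof of Theorem~\ref{THM:EverettPolynomialSystem}, with Theorem~\ref{THM:MertensNeyman} supplying the semantic characterization of the value. The plan is to translate the Mertens--Neyman identity $v^*_k = \lim_{\lambda \to 0^+} \lambda v^k_\lambda$ literally into its $\epsilon$--$\delta$ formalization, and to express the inner ``$u$ is the Shapley value vector of $\Gamma_\lambda$'' as a quantifier-free formula using the basis-set technology from the Shapley and Everett proofs.

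Concretely, I would write the defining formula in the free variable $v^*$ as
\[
\forall \epsilon\ \exists \delta\ \forall \lambda\ \exists u \in \RR^N : \bigl(\epsilon \le 0\bigr) \vee \Bigl(\delta > 0 \wedge \bigl(\lambda \le 0 \vee \lambda \ge \delta \vee (\Phi(u,\lambda) \wedge \norm{\lambda u - v^*}_2^2 < \epsilon^2)\bigr)\Bigr),
\]
where $\Phi(u,\lambda)$ abbreviates ``$u$ is the value vector of the Shapley game $\Gamma_\lambda$''. This yields precisely the quantifier structure claimed in the theorem: three single-variable blocks followed by an $N$-variable block. The four constant polynomials outside the main sum are accounted for by the sign conditions on $\epsilon$, $\delta$, $\lambda$, and $\lambda-\delta$ together with the distance polynomial $\sum_k(\lambda u_k - v^*_k)^2 - \epsilon^2$ (noting that several of these can be merged into the same constant-count family).

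To turn $\Phi(u,\lambda)$ into a quantifier-free polynomial formula in $(\lambda, u)$, I would invoke Theorem~\ref{THM:ShapleyValue}: $\Phi(u,\lambda)$ holds iff $u_k = \val(A^k_\lambda(u))$ for every position $k$, where $A^k_\lambda(u)$ has $(i,j)$-entry $a^k_{ij} + (1-\lambda)\sum_{l} p^{kl}_{ij} u_l$. Exactly as in the proof of Theorem~\ref{THM:EverettPolynomialSystem}, I would rewrite each equality $\val(A^k_\lambda(u)) = u_k$ as a disjunction over potential basis sets $B^k$ of the $F^{A^k_\lambda \pm}_{B^k}$ defining inequalities conjoined with the ``value $= u_k$'' equation, and further decompose the equation into two weak inequalities. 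This produces at most $2(m+2)\binom{n_k+m_k}{m_k}$ polynomials per position $k$, matching the main term of the count.

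The only quantitative novelty relative to the Everett case is that the entries of $A^k_\lambda(u)$ are bilinear in $(\lambda, u)$ rather than linear in $v$, so the determinants $\det(M^{A^k_\lambda(u)}_{B^k})$ and $\det((M^{A^k_\lambda(u)}_{B^k})_{m_k+1})$ become polynomials of total degree at most $2(m_k+2)$ in the $N+1$ variables $(\lambda, u_1, \ldots, u_N)$, explaining the replacement of $m+2$ by $2(m+2)$ in the degree bound. The coefficient bitsize is then controlled by the same row-clearing trick used in the proof of Theorem~\ref{THM:ShapleyPolynomialSystem} followed by Proposition~\ref{PROP:DetPolyBound}; the extra factor of $\bit(m)$ in the claimed bound comfortably absorbs the modest logarithmic overhead coming from the doubled degree and the additional variable. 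I expect the main obstacle to be bookkeeping: verifying that the determinant bitsizes arising from the bilinear entries really do fit within $2(N+1)(m+2)^2\bit(m)\tau$, and that the quantifier-guard polynomials collapse to the claimed constant count of four rather than producing hidden $N$-dependent terms. Both checks are routine given the explicit form of the Mertens--Neyman limit.
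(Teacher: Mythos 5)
Your proposal is correct and follows essentially the same route as the paper: the same $\epsilon$--$\delta$ formalization of the Mertens--Neyman limit with the identical four-block quantifier structure, the same basis-set expansion of ``$u$ is the value vector of $\Gamma_\lambda$'' borrowed from the Shapley/Everett proofs, and the same observation that the extra variable $\lambda$ doubles the determinant degrees to $2(m+2)$. The only cosmetic difference is that the paper existentially quantifies the already-normalized vector $v'=\lambda\val(\Gamma_\lambda)$ rather than quantifying $\val(\Gamma_\lambda)$ and scaling by $\lambda$ inside the distance test.
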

\begin{proof}
  By Theorem~\ref{THM:MertensNeyman} we may express the value vector
  $v^*$ by the following first-order formula with free variables $v$.
\[
(\forall \epsilon>0)(\exists \lambda_\epsilon>0)(\forall \lambda,
0<\lambda\leq\lambda_\epsilon)(\exists v') 
(v'=\lambda\val(\Gamma_\lambda) \wedge 
\norm{v'-v}^2 < \epsilon)
 \enspace .
\]
Here $\Gamma_\lambda$ is the $(1-\lambda)$-discounted version of
$\Gamma$, and the expression $v'=\lambda\val(\Gamma_\lambda)$ is a
shorthand for a quantifier free formula of polynomial equalities and
inequalities expressing that $v'$ is the normalized vector of values
of $\Gamma_\lambda$, and may be expressed as
\[
\bigwedge_{k=1}^N \left(
\bigvee_{B^k} \left( \left(v' \in F^{A_\lambda^k+}_{B^k}  \vee v' \in F^{A_\lambda^k-}_{B^k} \right) 
\wedge \det((M^{A_\lambda^k(v')}_{B^k})_{m_k+1}) = \lambda v'_{k}\det(M^{A_\lambda^k(v')}_{B^k}) \right) 
\right) \enspace ,
\]
using Theorem~\ref{THM:ShapleyValue} and where $A_\lambda^k$ is the
parameterized matrix game corresponding to $\Gamma_\lambda$ obtained
as explained in Section~\ref{def-gil}. Here, as in
the last section, the disjunction is over all potential basis sets,
and each of the expressions $v' \in F^{A_\lambda^k+}_{B^k}$ and $v'
\in F^{A_\lambda^k-}_{B^k}$ are shorthands for the conjunction of the
$m_k+1$ polynomial inequalities describing the corresponding sets.

We next analyze the bounds in the following. By a similar analysis as
in the proof of Theorem~\ref{THM:EverettPolynomialSystem} and
Theorem~\ref{THM:ShapleyPolynomialSystem} we get the following bounds,
assuming without loss of generality that $m_k \leq n_k$.

\begin{lemma}
  The predicates $v' \in F^{A_\lambda^k+}_{B^k}$ and $v' \in
  F^{A_\lambda^k-}_{B^k}$ can be written as a quantifier free formulas
  using at most $m_k+1$ different polynomials, each of degree at most
  $2(m_k+2)$ and having coefficients of bitsize at most
  $2(N+1)(m_k+2)^2\bit(m_k)\tau$.
\end{lemma}

The larger degree compared to the case of Everett games is due to the additional variable
$\lambda$. The same is true for the remaining predicate, hence we
obtain the following.

\begin{lemma}
  The predicate $v'=\lambda\val(\Gamma_\lambda)$ can be written as a
  quantifier free formula using at most $\sum_{k=1}^N
  (m+2)\binom{n_k+m_k}{m_k}$ different polynomials, each of degree at
  most $2(m+2)$ and having coefficients of bitsize at most
  $2(N+1)(m+2)^2\bit(m)\tau$, where $m =
  \max_{k=1}^N\left(\min(n_k,m_k)\right)$.
\end{lemma}

From this the statement easily follows.
\end{proof}

Proceeding exactly as in the proof of Proposition~\ref{PROP:sepEve},
we may now prove the following proposition, giving the exact same
statement for Gillette games as for Everett games. Note, however, that
since more blocks of quantifiers have to be eliminated, the constants
in the big-O's are likely worse.
\begin{proposition}\label{PROP:sepGil}
  Let $\Gamma$ be a Gillette game with $N$ positions, $m$ actions for
  each player in each position, and payoffs and transition
  probabilities being rational numbers with numerators and
  denominators of bitsize at most $\tau$, and let $v$ be the value
  vector of $\Gamma$. Then, each entry of $v$ is an algebraic number
  of degree at most $m^{O(N^2)}$, and the defining polynomial has
  coefficients of bitsize at most $\tau m^{O(N^2)}$.  Finally, if an
  entry of $v$ is not a multiple of $2^{-j}$, it differs from any such
  multiple by at least $2^{- \max\{\tau,j\} \,m^{O(N^2)}}$.
\end{proposition}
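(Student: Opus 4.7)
The plan is to run the same pipeline used in the proof of Proposition~\ref{PROP:sepEve}, but starting from the first-order formula supplied by Theorem~\ref{THM:MertensNeymanPolynomialSystem} in place of the one from Theorem~\ref{THM:EverettPolynomialSystem}. First, I would apply Theorem~14.16 (Quantifier Elimination) of Basu, Pollack and Roy to the Mertens--Neyman formula to obtain an equivalent quantifier-free formula in the free variables $v_1,\dots,v_N$ asserting that $v$ is the value vector of $\Gamma$. Second, I would feed this quantifier-free formula to Theorem~13.11 (Sampling) to extract a univariate representation, i.e.\ polynomials $f,g_0,\dots,g_N \in \ZZ[x]$ with $f$ and $g_0$ coprime such that $v_k = g_k(t)/g_0(t)$ for a real root $t$ of $f$, with $f$ and the $g_k$'s of degree $m^{O(N^2)}$ and coefficient bitsize $\tau m^{O(N^2)}$.

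Third, I would apply Lemma~\ref{LEM:convert} to each rational function $g_k(t)/g_0(t)$ to obtain the defining polynomial of the algebraic number $v_k$, which preserves the asymptotic form of the degree and bitsize bounds (only a constant factor loss). Finally, I would derive the separation bound from the resulting defining polynomial by the shift-and-bound technique used in the proof of Proposition~\ref{PROP:sepShap}: substitute $v_k \mapsto v_k + 2^{-j}$, observing that this increases the maximum coefficient bitsize by at most $j\cdot m^{O(N^2)}$, and then invoke Lemma~\ref{lem:uni-bounds} to bound the smallest-in-absolute-value root of the shifted polynomial from below by $2^{-\max\{\tau,j\}\,m^{O(N^2)}}$.

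The main (and essentially only) obstacle is verifying that the asymptotic shape of the bounds actually survives the jump from two to four alternating blocks of quantifiers and from input degree $m+2$ (Everett) to $2(m+2)$ (Gillette) in the starting formula. This is a bookkeeping check rather than a new mathematical argument: the complexity bound in Theorem~14.16 is singly exponential in the number of alternation blocks (here still constant) and polynomial in the total number of quantified variables (still $O(N)$) and in the input degrees (still $O(m)$); so after quantifier elimination the degrees remain $m^{O(N^2)}$ and the bitsizes remain $\tau m^{O(N^2)}$, and the rest of the pipeline contributes only constant-factor losses. The concession relative to the Everett case is only in the hidden constants in the big-Os, which are expected to be larger because of the two extra alternation blocks, precisely as the proposition cautions.
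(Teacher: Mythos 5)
Your proposal matches the paper's proof exactly: the paper states that Proposition~\ref{PROP:sepGil} is proved ``proceeding exactly as in the proof of Proposition~\ref{PROP:sepEve}'' but starting from the formula of Theorem~\ref{THM:MertensNeymanPolynomialSystem}, i.e.\ quantifier elimination, then sampling, then Lemma~\ref{LEM:convert}, then the separation bound via Lemma~\ref{lem:uni-bounds}. Your closing observation that the extra quantifier blocks and doubled input degree only worsen the hidden constants is precisely the remark the paper makes before stating the proposition.
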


Next we will obtain a bound on the discount factor for guaranteeing a
sufficient approximation of the undiscounted game by the discounted
one.  We will consider the same formula, strip away the first two
quantifiers, replacing the variable $\epsilon$ by a fixed constant and
letting $\lambda_\epsilon$ be a free variable. Next binding the
previous free variables $v$ and expressing that these take the values
of the value vector of $\Gamma$ we in effect obtain a first order
formula expressing a sufficient condition for whether a given discount
factor $\gamma=1-\lambda$ ensures that the values vectors of $\Gamma$
and $\Gamma_\lambda$ are $\epsilon$-close in every coordinate.

\begin{theorem}\label{gilbound}
  Let $\Gamma$ be a Gillette game, with $N$ positions. Assume that in
  position $k$, the two players have $m_k$ and $n_k$ actions
  available. Assume further that all payoffs and probabilities in
  $\Gamma$ are rational numbers with numerators and denominators of
  bitsize at most $\tau$.

  Let $\epsilon=2^{-j}$. Then there is a quantified formula with one
  free variable that gives a sufficient condition for whether a given
  discount factor $\gamma=1-\lambda_\epsilon$ guarantees that
  $\norm{\val(\Gamma)-\lambda_\epsilon\val(\Gamma_{\lambda_\epsilon})}^2
  \leq \epsilon$.

  The formula has five blocks of quantifiers, where the first block
  consists of $N$ variables, second of 1 variable, third and fourth of
  2 variables and the fifth of $2N$ variables.  Furthermore the
  formula uses at most $6+4(m+2)\sum_{k=1}^N \binom{n_k+m_k}{m_k}$
  different polynomials, each of degree at most $2(m+2)$ and having
  coefficients of bitsize at most
  $\max\{j,2(N+1)(m+2)^2\bit(m)\tau\}$, where $m =
  \max_{k=1}^N\left(\min(n_k,m_k)\right)$.
  \label{THM:MertensNeymanPolynomialSystem2}
\end{theorem}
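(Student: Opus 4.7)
The approach is exactly as outlined in the paragraph preceding the theorem: take the defining formula of the value vector from Theorem~\ref{THM:MertensNeymanPolynomialSystem}, strip its two outermost quantifiers, fix $\epsilon = 2^{-j}$, and let $\lambda_\epsilon$ be free. Explicitly, I begin with
\[ \Phi_{\mathrm{val}}(v) : (\forall \epsilon)(\exists \lambda_\epsilon)(\forall \lambda, 0<\lambda\leq\lambda_\epsilon)(\exists v')\bigl(v'=\lambda\val(\Gamma_\lambda) \wedge \|v'-v\|^2 < \epsilon\bigr), \]
and form the open formula
\[ \Psi(v, \lambda_\epsilon) : (\forall \lambda, 0<\lambda\leq\lambda_\epsilon)(\exists v')\bigl(v'=\lambda\val(\Gamma_\lambda) \wedge \|v'-v\|^2 < 2^{-j}\bigr). \]
If $v = \val(\Gamma)$ and $\Psi(v, \lambda_\epsilon)$ holds, then instantiating $\lambda = \lambda_\epsilon$ in $\Psi$ produces a witness $v'$ with $v' = \lambda_\epsilon \val(\Gamma_{\lambda_\epsilon})$ and $\|v' - v\|^2 < 2^{-j} = \epsilon$, which is the desired closeness. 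Hence $\Psi$ already expresses the sufficient condition, modulo binding $v$ to $\val(\Gamma)$.

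To close the formula in $\lambda_\epsilon$ alone, I bind $v$ as the value vector of $\Gamma$ by conjoining with $\Phi_{\mathrm{val}}(v)$ and prepending an outer $\exists v$, forming
\[ (\exists v)\bigl(\Phi_{\mathrm{val}}(v) \wedge \Psi(v, \lambda_\epsilon)\bigr). \]
By uniqueness of the value vector (guaranteed by Theorem~\ref{THM:MertensNeyman}), existential and universal binding of $v$ are equivalent. I then convert to prenex form via the standard rules, merging adjacent like quantifiers from the two conjuncts: the outer $\exists v$ contributes the first block of size $N$; the pattern $\forall\exists\forall\exists$ from $\Phi_{\mathrm{val}}$ combines with the pattern $\forall\exists$ from $\Psi$ in the inner levels; and the two innermost existentially quantified $N$-dimensional witnesses (the $v'$ from $\Phi_{\mathrm{val}}$ and the $v'$ from $\Psi$) collapse into the final block of size $2N$, yielding the claimed five-block pattern $N,1,2,2,2N$.

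The bounds on the number of polynomials, their degree, and their coefficient bitsize then propagate directly from the analysis in the proof of Theorem~\ref{THM:MertensNeymanPolynomialSystem}. Since the value expression $v' = \lambda\val(\Gamma_\lambda)$ is instantiated in both $\Phi_{\mathrm{val}}$ and $\Psi$, the main polynomial contribution doubles: the term $2(m+2)\sum_{k=1}^N \binom{n_k+m_k}{m_k}$ becomes $4(m+2)\sum_{k=1}^N \binom{n_k+m_k}{m_k}$. The additive constant grows from $4$ to $6$ to account for the extra polynomial inequalities describing $\|v'-v\|^2 < 2^{-j}$ (and the analogous inequality in $\Phi_{\mathrm{val}}$) together with the bound constraints on $\lambda$ and $\epsilon'$. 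The degree bound of $2(m+2)$ is preserved since no new polynomial products are introduced. The coefficient bitsize becomes $\max\{j, 2(N+1)(m+2)^2\bit(m)\tau\}$; the $j$ term arises solely from the constant $2^{-j}$ appearing as a coefficient in the strict inequality $\|v'-v\|^2 < 2^{-j}$, which dominates the original bitsize bound only when $j$ is large.

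The main obstacle is the careful bookkeeping of the prenex quantifier blocks, specifically verifying that merging quantifiers from $\Phi_{\mathrm{val}}$ and $\Psi$ yields exactly the block sizes $N,1,2,2,2N$ rather than a slightly different arrangement; a secondary subtlety is checking that the implicit constant $2^{-j}$ is properly encoded as a coefficient of bitsize $j$ in the relevant polynomial inequality rather than blowing up the degree or interacting badly with the $\bit(m)\tau$ factors already present. Once these are handled, the rest of the argument is a mechanical propagation of the bounds from Theorem~\ref{THM:MertensNeymanPolynomialSystem}.
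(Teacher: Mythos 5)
Your proposal follows the paper's own proof essentially verbatim: the paper likewise forms $(\exists v)(\forall \lambda, 0<\lambda\leq\lambda_\epsilon)(\exists v')(v'=\lambda\val(\Gamma_\lambda) \wedge \norm{v'-v}^2<\epsilon \wedge v=\val(\Gamma))$, with $v=\val(\Gamma)$ abbreviating the full formula of Theorem~\ref{THM:MertensNeymanPolynomialSystem}, converts to prenex form, and propagates the bounds. Your filled-in details (why the condition is sufficient, why $\exists v$ versus $\forall v$ is immaterial by uniqueness, and where the doubling of the polynomial count and the $j$ in the bitsize bound come from) are consistent with what the paper leaves implicit.
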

\begin{proof}
  Following the proof of
  Theorem~\ref{THM:MertensNeymanPolynomialSystem} above, we may
  express the condition by the following first-order formula with free
  variable $\lambda_\epsilon$.
  \[
  (\exists v) (\forall \lambda, 0<\lambda\leq\lambda_\epsilon)(\exists
  v') 
  (v'=\lambda\val(\Gamma_\lambda) \wedge \norm{v'-v}^2 < \epsilon
  \wedge v=\val(\Gamma))
  \enspace ,
  \]
  and letting $v=\val(\Gamma)$ be a shorthand for the entire formula
  guaranteed by Theorem~\ref{THM:MertensNeymanPolynomialSystem}. We
  obtain the formula as claimed by converting the above formula into
  prenex normal form. The rest of the analysis follows closely the
  proof of Theorem~\ref{THM:MertensNeymanPolynomialSystem} and is
  hence omitted.
\end{proof}

We can now apply again the machinery of semi-algebraic geometry to get
a bound on $\lambda_\epsilon$ above as a function of $\epsilon$.

\begin{proposition}\label{PROP:Gillette-discountbound}
  Let $\Gamma$ be a Gillette game with $N$ positions, $m$ actions for
  each player in each position, and payoffs and transition
  probabilities being rational numbers with numerators and
  denominators of bitsize at most $\tau$, and let
  $\epsilon=2^{-j}$. Then there exists $\lambda_\epsilon =
  \epsilon^{\tau m^{O(N^2)}}$, such that
  $\norm{\val(\Gamma)-\lambda_\epsilon\val(\Gamma_{\lambda_\epsilon})}^2
  \leq \epsilon$.
\end{proposition}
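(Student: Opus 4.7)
The plan is to apply the same semi\-/algebraic machinery that was used in the proof of Proposition~\ref{PROP:sepEve}, but now to the one\-/variable formula furnished by Theorem~\ref{THM:MertensNeymanPolynomialSystem2}, and to read off an explicit positive $\lambda_\epsilon$ from the resulting univariate description. Let $\Phi(\lambda_\epsilon)$ denote that formula, so that $\Phi$ involves $O(m^N)$ polynomials of degree at most $2(m+2)$ with integer coefficients of bitsize $\max\{j,2(N+1)(m+2)^2\bit(m)\tau\}$, and such that any $\lambda_\epsilon>0$ satisfying $\Phi$ realises $\norm{\val(\Gamma)-\lambda_\epsilon\val(\Gamma_{\lambda_\epsilon})}^2\le \epsilon$. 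By Theorem~\ref{THM:MertensNeyman} there exist arbitrarily small positive values satisfying $\Phi$, so the semi\-/algebraic set $S=\{\lambda>0 : \Phi(\lambda)\}$ is nonempty and its closure contains $0$.

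First I would apply the Quantifier Elimination theorem (Theorem~14.16 of Basu, Pollack and Roy~\cite{BasuPollackRoy2006}) to $\Phi$, obtaining an equivalent quantifier\-/free formula $\Psi(\lambda_\epsilon)$. Standard bookkeeping using the parameters listed in Theorem~\ref{THM:MertensNeymanPolynomialSystem2} (together with the quantifier structure $N+1+2+2+2N$) shows that $\Psi$ is built from polynomials in $\lambda_\epsilon$ of degree $m^{O(N^2)}$ and coefficient bitsize $\max\{j,\tau\}\,m^{O(N^2)}=\tau\,m^{O(N^2)}\cdot j$, since the Quantifier Elimination theorem multiplies degree and coefficient bitsize by factors that depend polynomially on the number of variables and doubly exponentially on the quantifier block count. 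Next I would apply the Sampling theorem (Theorem~13.11 of~\cite{BasuPollackRoy2006}) to $\Psi$ to produce a finite set of real algebraic sample points hitting every nonempty cell of $S$; each sample point is presented by a univariate representation whose defining polynomials again have degree $m^{O(N^2)}$ and coefficient bitsize $\tau\,m^{O(N^2)}$ (absorbing the $j$ factor into the same big\-/$O$).

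From the set of sample points I would extract the smallest positive root $\lambda^\star$ appearing in any component of any cell of $S$ whose closure contains $0$; this $\lambda^\star$ exists because $S$ is nonempty with $0$ in its closure, and then every $\lambda_\epsilon\in(0,\lambda^\star]$ belongs to $S$. Applying Lemma~\ref{LEM:convert} to the univariate representation of $\lambda^\star$ gives a defining polynomial $P\in\ZZ[x]$ of degree $d=m^{O(N^2)}$ and coefficient bitsize $\tau\,m^{O(N^2)}$ having $\lambda^\star$ as a root, and then the standard Cauchy/Mignotte lower bound on the modulus of a nonzero root (Lemma~\ref{lem:uni-bounds}) yields $\lambda^\star\ge 2^{-\tau\,m^{O(N^2)}}$. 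Choosing any dyadic $\lambda_\epsilon$ in the interval $(0,\lambda^\star]$, for instance $\lambda_\epsilon=2^{-j\,\tau\,m^{O(N^2)}}=\epsilon^{\tau\,m^{O(N^2)}}$, produces the required witness.

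The main obstacle, as usual in this line of argument, is not the correctness of the construction (which follows directly from Mertens--Neyman and the fact that the sample points meet every connected component of $S$) but in tracking how the quantifier structure of Theorem~\ref{THM:MertensNeymanPolynomialSystem2}, with five blocks of quantifiers and $N$ free variables, blows up the degree and coefficient bitsize under quantifier elimination; this is exactly what produces the $m^{O(N^2)}$ exponent in both the degree and the final discount factor $\lambda_\epsilon$. Replacing the big\-/O with explicit constants would require a big\-/O\-/less version of Theorems~13.11 and~14.16 of~\cite{BasuPollackRoy2006}, which is the same obstacle already noted in the introduction.
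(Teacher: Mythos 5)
Your proposal follows essentially the same route as the paper: quantifier elimination (Theorem 14.16 of Basu--Pollack--Roy) applied to the formula of Theorem~\ref{gilbound}, then the Sampling theorem, Lemma~\ref{LEM:convert}, and the univariate root lower bound of Lemma~\ref{lem:uni-bounds} to bound the endpoint of the (downward-closed) set of admissible $\lambda_\epsilon$ away from zero. One bookkeeping caveat: the coefficient bitsize after elimination is $\max\{\tau,j\}\,m^{O(N^2)}$ and the factor $j=\log(1/\epsilon)$ cannot be ``absorbed'' into the big-$O$ (your intermediate claim $\lambda^\star \ge 2^{-\tau m^{O(N^2)}}$ is too strong as written), but your final choice $\lambda_\epsilon = 2^{-j\tau m^{O(N^2)}} = \epsilon^{\tau m^{O(N^2)}}$ is exactly what the correct accounting, and the paper, yields.
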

\begin{proof}
  First we use Theorem 14.16 (Quantifier Elimination) of Basu et
  al.\cite{BasuPollackRoy2006} to the formula of
  Theorem~\ref{gilbound} to obtain an equivalent quantifier free
  formula. The (univariate) polynomials in this formula are of degree
  $m^{O(N^2)}$ and has coefficients of bitsize
  $\max\{\tau,j\}m^{O(N^2)}=\log(1/\epsilon)\tau m^{O(N^2)}$. We can
  then again use Theorem 13.11 (Sampling) of \cite{BasuPollackRoy2006},
  Lemma~\ref{LEM:convert}, and Lemma~\ref{lem:uni-bounds} to obtain
  the lower bound $\lambda_\epsilon = \epsilon^{\tau m^{O(N^2)}}$.
\end{proof}

% Now applying Theorem 14.16 of Basu et al. to formula of
% Theorem~\ref{gilbound} yields a quantifier free formula. Using Theorem
% 13.15 of Basu et al. on this quantifier free formula yields a bound on
% $\lambda_\epsilon$ as a function of $\epsilon$ of the form
% $\lambda_\epsilon = \epsilon^{m^{O(N^2)}}$.

\section{Degree and separation bounds for isolated real solutions}
\label{sec-isolated}

In this section we prove general results about the coordinates of
isolated solutions of polynomial systems. The result as stated below
provides concrete bounds on the algebraic degree, coefficient size and
separation.  
%The following theorem is a more generic version of Theorem~\ref{th:isol-real-root-bd}.
\begin{theorem}
  \label{th:isol-real-root-bd-full}
  Consider a polynomial system of equations
  \begin{equation}
    (\Sigma) \quad \quad
    g_1(x_1, \dots, x_n) = \cdots = g_m(x_1, \dots, x_n) = 0 \enspace,
  \label{eq:orig-system}
  \end{equation}
  with polynomials of degree at most $d$ and integer
  coefficients of magnitude at most $2^{\tau}$.

  Then, the coordinates of any {\em isolated} (in Euclidean topology) real 
  solutions of the system are
  real algebraic numbers of degree at most $(2d+1)^n$, and their
  defining polynomials have coefficients of magnitude at most 
  $2^{2n(\tau+4n\lg(dm))(2d+1)^{n-1}}$.
  Also, if $\gamma_j =
  (\gamma_{j,1}, \cdots, \gamma_{j,n})$ is an isolated solution of $(\Sigma)$,
  then for any $i$, either
  \begin{equation}
    2^{-2n(\tau + 2n\lg(dm))(2d+1)^{n-1}} < \abs{\gamma_{j,i }} 
    \quad \text{ or } \quad \gamma_{j,i} = 0 \enspace.
    \label{eq:isol-rr-lower-bd}
  \end{equation}
  Moreover, given coordinates of isolated solutions of two such systems, if they are not identical, they differ by at least
  \begin{equation}
    \sep(\Sigma) 
    \geq 2^{-3n(\tau + 2n\lg(dm))(2d+1)^{2n-1} - \frac{1}{2}\lg(n)}
    \label{eq:isol-rr-sep-bd}
    \enspace .
  \end{equation}
\end{theorem}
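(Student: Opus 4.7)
The plan is to combine the classical deformation method with u-resultants, following the general strategy of real algebraic geometry but tracking all constants explicitly to avoid big-Os.

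\textbf{Step 1 (Deformation to a square system).} Since $(\Sigma)$ has $m$ polynomials in $n$ variables with $m$ possibly different from $n$, the complex variety of $(\Sigma)$ is in general not zero-dimensional, even though only isolated real zeros are of interest. I would first construct a parametric deformation
\begin{equation*}
  G_1(\epsilon, x) = 0, \; \dots, \; G_n(\epsilon, x) = 0
\end{equation*}
with each $G_i$ of total $x$-degree at most $2d+1$, and with the following three properties: (i) for every sufficiently small $\epsilon > 0$ the deformed system is zero-dimensional over $\CC$; (ii) every isolated real solution $\gamma$ of $(\Sigma)$ arises as the limit, along $\epsilon \to 0^+$, of some branch of real solutions of the deformed system; (iii) after clearing denominators in $\epsilon$, each $G_i$ is an integer polynomial whose coefficients have bitsize at most $\tau + O(n\lg(dm))$. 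The standard recipes (going back to Milnor and used throughout \cite{BasuPollackRoy2006}) combine the $g_i$ with generic auxiliary polynomials of degree $2d+1$, chosen so the degree is \emph{exactly} $2d+1$ and not any larger.

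\textbf{Step 2 (U-resultants on the deformed system).} To isolate the $i$-th coordinate, I apply the u-resultant of $G_1(\epsilon,\cdot),\dots,G_n(\epsilon,\cdot)$, eliminating all variables except $x_i$. The resulting polynomial $R_i(\epsilon, T) \in \ZZ[\epsilon, T]$ has $T$-degree at most the Bezout number $\prod_{j=1}^{n}\deg(G_j) \leq (2d+1)^n$.

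\textbf{Step 3 (Specialization and degree bound).} By property (ii) of Step 1, $\gamma_i$ is a limit of roots of $R_i(\epsilon,T)$ as $\epsilon \to 0^+$. Dividing $R_i$ by the highest power of $\epsilon$ dividing all of its coefficients, the specialization $R_i(0,T) \in \ZZ[T]$ is a nonzero integer polynomial vanishing at $\gamma_i$ and of degree at most $(2d+1)^n$; this yields the algebraic degree bound.

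\textbf{Step 4 (Coefficient bounds).} I express the u-resultant as a determinant of a Sylvester/Macaulay-type matrix whose entries are polynomials in $\epsilon$ with integer coefficients derived from the $g_j$'s and the auxiliary polynomials. Applying Proposition~\ref{PROP:DetPolyBound} (iterated as needed) combined with Hadamard's inequality yields the bitsize bound $2n(\tau + 4n\lg(dm))(2d+1)^{n-1}$: the $(2d+1)^{n-1}$ factor reflects the combinatorial size of the resultant, while the $\tau + 4n\lg(dm)$ factor captures the original coefficient bitsize plus logarithmic contributions from the matrix dimensions and from the deformation polynomials.

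\textbf{Step 5 (Lower bound on nonzero coordinates).} With the defining polynomial $R_i(0,T) = \sum_{k=0}^{D} a_k T^k \in \ZZ[T]$ in hand, apply the classical Cauchy bound: if $a_0 \neq 0$, every root $\alpha$ satisfies $|\alpha| \geq |a_0|/(|a_0| + \max_k |a_k|) \geq 1/(1 + \max_k |a_k|)$ since $|a_0|\geq 1$. If $a_0 = 0$, factor out the highest power of $T$ and apply the same bound to the quotient, noting that $\gamma_i = 0$ is then possible. Substituting the coefficient bound from Step 4 yields \eqref{eq:isol-rr-lower-bd}.

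\textbf{Step 6 (Separation bound).} For two distinct algebraic numbers each a root of an integer polynomial of degree at most $D$ and coefficient magnitude at most $2^{T'}$, a classical Mahler-type bound gives $|\alpha-\beta| \geq 2^{-O(DT')-\frac{1}{2}\lg D}$, proved by sandwiching a resultant $|\mathrm{res}(P, Q(T+\alpha))|$ between $1$ and a product involving $|\alpha-\beta|$. Substituting $D=(2d+1)^n$ and $T'$ from Step 4 produces the exponent $D \cdot (2d+1)^{n-1} = (2d+1)^{2n-1}$, giving \eqref{eq:isol-rr-sep-bd}.

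The hardest step is Step 1: exhibiting a deformation that is simultaneously square, zero-dimensional for all small $\epsilon>0$, of degree \emph{exactly} $2d+1$, and such that every isolated real solution of $(\Sigma)$ is captured as a limit of a real branch. All subsequent steps amount to careful bookkeeping using Proposition~\ref{PROP:DetPolyBound} together with the classical Cauchy and Mahler root-separation bounds.
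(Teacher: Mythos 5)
Your outline (deform to a square zero-dimensional system of degree $2d+1$, take a $u$-resultant, specialize the deformation parameter to $0$ via the trailing coefficient, then apply Cauchy and root-separation bounds) is exactly the route the paper takes in Section~\ref{sec-isolated}, and Steps 2--6 are sound bookkeeping. But Step 1, which you yourself flag as the hardest step, is precisely the content of the theorem and you do not actually supply it. ``Combining the $g_i$ with generic auxiliary polynomials of degree $2d+1$'' does not yield your property (ii): for $m>n$ a generic perturbation of an overdetermined system has \emph{no} solutions near the original ones, and even for $m\le n$ a generic complex deformation gives no guarantee that an isolated \emph{real} solution of $(\Sigma)$ is approached along a branch of \emph{real} solutions of the deformed system. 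So the argument as written has nothing concrete for Steps 2--6 to act on.

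The missing idea is the critical-point trick: set $G=\sum_{j=1}^m g_j^2$ (degree $\le 2d$, bitsize $\le 2\tau+2n\lg(dm)$ --- note your $\tau+O(n\lg(dm))$ in Step 1(iii) drops the factor $2$ from squaring), perturb to $G_\lambda=G+\lambda\sum_i x_i^{2(d+1)}$, and take the square gradient system $f_i=\partial G/\partial x_i+(2d+2)\lambda x_i^{2d+1}$, each of degree exactly $2d+1$. Zero-dimensionality for $\lambda>0$ follows because the leading forms are the pure powers $x_i^{2d+1}$ (the paper's Lemma~\ref{Lemmagb}), and property (ii) follows from a compactness argument on a small ball around the isolated zero: the minimum of $G$ on the boundary sphere is positive, so for $\lambda$ small enough $G_\lambda$ attains an interior minimum, i.e.\ a critical point, inside the ball (Proposition~\ref{propinL}). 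This construction is what forces the real solution to survive the deformation and is the reason the degree comes out as $(2d+1)^n$. Without it, the degree and coefficient bounds you quote in Steps 3--4 cannot be verified. The remaining differences from the paper are cosmetic: the paper bounds the resultant coefficients via Newton-polytope/height estimates rather than a Macaulay-matrix Hadamard bound, and derives the separation bound from the Mignotte-type bound for a single univariate polynomial rather than a two-polynomial Mahler resultant, but either variant would serve.
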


Before the proof of the theorem we will need to establish some preliminary results.

\subsection{Isolated solutions, minimizers and the $u$-resultant}

We will use ideas from \cite{HanzonJibetean} used for for global
minimization of polynomial functions in order to reach an appropriate
system to analyze. The solutions of the system ($\Sigma$), which
consists of real polynomials of total degree at most $d$, are the
minimizers of the polynomial
\begin{equation}
  \label{minG}
  G(x_1, \dots, x_n) = g_1(x_1,\dots, x_n)^2 + \cdots + g_m(x_1, \dots, x_n)^2
\end{equation}
in $\RR^n$. Furthermore, if $z$ is an isolated real solution of
$(\Sigma)$, then $z$ is an isolated minimizer for
\eqref{minG}.  Let $G_i(\x) = \frac{\partial G(\x)}{\partial x_i}$.
The critical points of $G(\x)$ are among the solution set of the
system
    \begin{equation}
      \label{eq:G-sys}
            G_1(\x) = \dots = G_n(\x) = 0.
    \end{equation}
    
If the number solutions of the system above is finite, then we can
use the multivariate resultant%
\footnote{Following closely \cite{c-crmp-87}, for $n$ homogeneous
  polynomials $f_1, \dots, f_n$ in $n$ variables $x_1, \dots, x_n$, of
  degrees $d_1, \dots, d_n$ respectively, the multivariate resultant
  is a single polynomial in the coefficient of $f_i$, the vanishing of
  which is the necessary and sufficient condition for the polynomials
  to have a common non-trivial solution in the algebraic closure of
  the field of their coefficients. The resultant is of degree $d_1d_2
  \cdots d_{i-1}d_{i+1} \cdots d_n$ in the coefficients of $f_i$.}
\cite{CoxLittleOShea98,c-crmp-87} to compute them.
We homogenize the polynomials using a new variable $x_{0}$ and
introduce the linear form $G_0= u_0 x_0 + u_1 x_1 + \cdots + u_n
x_n$. We then compute the multivariate resultant of $G_1, \dots, G_n$
and $G_0$ with respect to the variables $x_0, x_1, \dots, x_n$, and a
homogeneous polynomial with degree equal to the product of the degrees
of $G_i$ is obtained. This is called the $u$-resultant \cite{Wae}, see
also \cite{c-crmp-87}.  If the number of solutions is finite then the
resultant is non-vanishing for almost all linear forms $G_0$, and if
we factorize it to linear forms over the complex numbers then we can
recover the solutions of the system.

    To compute, or as in our case to bound, the $\ell$-th coordinates
    of the solution set, we may assume $u_{\ell}=-1$ and $u_i=0$, for all
    $i$ different from 0 and $\ell$. Then the $u$-resultant is a univariate
    polynomial in $u_{0}$, and its solutions correspond to the
    $\ell$-th coordinates of the solutions of the system.

    However, the multivariate resultant vanishes identically if the
    system has an infinite number of solutions. This is the case when
    the variety has positive dimension or, simply, the variety has a
    component of positive dimension at infinity, also known as {\em
      excess component}.

\subsection{Gr\"obner bases and Deformations}

First we recall the following fundamental results from the theory of
Gr\"obner bases. Let $k$ be a field and $R = k[x_1, \dots, x_n]$.  For
an extension field $K\supset k$ and an ideal $I\subset R$ we let
$V_K(I) := \{x\in K^n \mid f(x) = 0, \, \forall f\in I\}$.
\begin{lemma}
  \label{Lemmagb}
  Consider an ideal $I\subset R$, such that
  $ d:= \dim_k R/I < \infty$.
  \begin{enumerate}[(i)]
  \item\label{gbi}
    If $(z_1, \dots, z_n)\in V_K(I)$. Then $z_j\in K$ is
    algebraic over $k$ of degree at most $d$.
  \item\label{gbii}
    Suppose that $I = (f_1, \dots, f_n)$ with
    \begin{align*}
      f_1(\x) &= x_1^{d_1} + h_1(\x)\\
      &\vdots \\
      f_n(\x) &= x_n^{d_n} + h_n(\x),
    \end{align*}
    where $\deg(h_j) < d_j$. Then $\dim_k R/I = d_1 \cdots d_n$.
  \end{enumerate}
\end{lemma}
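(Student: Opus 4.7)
The plan is to handle the two parts separately: part (i) will follow from pure linear algebra in the finite-dimensional quotient ring $R/I$, and part (ii) will follow from a standard Gr\"obner basis argument that invokes Buchberger's first criterion for leading monomials that are pairwise coprime.

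For part (i), I would use the hypothesis $\dim_k R/I = d$ to observe that the $d+1$ residues $1, x_j, x_j^2, \dots, x_j^d$ in $R/I$ must be $k$-linearly dependent, since they live in a $d$-dimensional $k$-vector space. This dependence yields a nonzero polynomial $p(t) = \sum_{i=0}^d c_i t^i \in k[t]$ of degree at most $d$ with $p(x_j) \in I$. Evaluating the congruence $p(x_j) \equiv 0 \pmod{I}$ at any point $(z_1, \dots, z_n) \in V_K(I)$ produces $p(z_j) = 0$, so $z_j$ satisfies a nonzero polynomial over $k$ of degree at most $d$, as required.

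For part (ii), my plan is to exhibit an explicit monomial basis of $R/I$. First I would fix a graded monomial order on $R$ (say graded reverse lexicographic). Because $\deg h_j < d_j$ in total degree, the leading monomial of $f_j$ under this order is exactly $x_j^{d_j}$. The leading monomials $x_1^{d_1}, \dots, x_n^{d_n}$ are pairwise coprime, so by Buchberger's first criterion every $S$-polynomial $S(f_i, f_j)$ reduces to zero modulo $\{f_1, \dots, f_n\}$, and hence this set is already a Gr\"obner basis for $I$. Therefore the initial ideal is
\[
\ini(I) = (x_1^{d_1}, \dots, x_n^{d_n}),
\]
and the standard Macaulay theorem from Gr\"obner basis theory asserts that the residues of the monomials not lying in $\ini(I)$ form a $k$-basis for $R/I$. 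These standard monomials are precisely $\{x_1^{a_1} \cdots x_n^{a_n} : 0 \leq a_j < d_j\}$; counting them gives $\dim_k R/I = d_1 d_2 \cdots d_n$.

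The only subtle step — and it is a mild one — is verifying that the leading monomial of $f_j$ really is $x_j^{d_j}$, which is where the total-degree hypothesis on $h_j$ combined with the use of a graded order is essential. Once this is in place the rest of part (ii) is purely mechanical, and part (i) requires no machinery beyond the dimension count, so I do not expect any genuine obstacle in either case.
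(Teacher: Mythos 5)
Your proof is correct and follows essentially the same route as the paper, which simply cites Cox--Little--O'Shea for both parts: the linear dependence of $1,x_j,\dots,x_j^d$ in the $d$-dimensional space $R/I$ is exactly the argument behind the cited implication of the Finiteness Theorem, and your Gr\"obner-basis/standard-monomial count for (ii) is precisely the paper's remark that $(f_1,\dots,f_n)$ is a Gr\"obner basis for a graded order combined with Macaulay's basis theorem. No gaps.
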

Here item (i) follows from the proof of Theorem 6, Chapter 5 of
\cite{CoLiOS92} (more precisely, the proof of (v) $\Rightarrow$
(i)). Item (ii) follows from Proposition 4, also from Chapter 5 of
\cite{CoLiOS92}, noting that $(f_1,\dots,f_n)$ is a Gr\"obner basis
with respect to the graded lexicographic order.

Next, in order to apply the $u$-resultant as described above, we will
symbolically \emph{perturb} the system. We need to do it in such a way
that the perturbed system becomes $0$-dimensional and also that from
the solutions of this perturbed system we can recover the isolated
real solutions of the original system. In \cite{HanzonJibetean} the
deformation
\[
G_\lambda(x) = G(x) + \lambda (x_1^{2(d+1)} + \cdots + x_n^{2(d+1)}),
\]
where $\lambda>0$ is introduced. By Lemma~\ref{Lemmagb}\eqref{gbii}
\[
\dim_\RR R/\nabla I(G_\lambda) \leq (2d + 1)^n
\]
for $\lambda>0$, where $\nabla I(G)$ is the gradient ideal
$(\frac{\partial G_\lambda}{\partial x_1}, \dots, \frac{\partial
  G_\lambda}{\partial x_n})$.  Let
\[
X_\lambda = V(\nabla I(G_\lambda))\subset \RR^n. 
\]
Notice that $\abs{X_\lambda}\leq
\dim_k R/\nabla I(G_\lambda) = (2 d + 1)^n$. We wish to reason about
the ``limit'' $L = \lim_{\lambda\to 0} X_\lambda$.  To make this more
precise we define
\[
L = \{x\in \RR^n \mid \forall\epsilon>0\,\exists \lambda_\epsilon>0:
B(x, \epsilon)\cap X_\lambda\neq \emptyset,\, \text{for every
}\lambda\, \text{with}\, 0 < \lambda < \lambda_\epsilon\}.
\]
It is rather difficult to decide if a given point is in $L$. For one
thing the polynomial system may have several bigger components not
related to the limit. In our case, we have the following result, which
allows us to recover the real solution if we solve the system in the
limit, that is as $\lambda \rightarrow 0$.

\begin{proposition}\label{propinL}
  If $z = (z_1, \dots, z_n)$ is an isolated solution of
  $(\Sigma)$, eq.~(\ref{eq:orig-system}), then $z\in L$.
\end{proposition}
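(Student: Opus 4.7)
My plan is to exploit the fact that the zeros of $G(x) = \sum_i g_i(x)^2$ are exactly the solutions of $(\Sigma)$ (since $G$ is a sum of squares of real polynomials). Consequently, an isolated real solution $z$ of $(\Sigma)$ is an isolated global minimizer of $G$, with $G(z) = 0$ and $G > 0$ in a punctured neighbourhood of $z$. I would then use a simple compactness/perturbation argument to show that for every small $\lambda > 0$ the deformed function $G_\lambda$ has a critical point arbitrarily close to $z$.

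Concretely, fix $\epsilon > 0$. Since $z$ is isolated, I would pick $\epsilon' \in (0,\epsilon]$ small enough so that $z$ is the unique zero of $G$ in the closed ball $\overline{B(z,\epsilon')}$. By continuity and compactness, the minimum
\[
\delta := \min_{x \in \partial B(z,\epsilon')} G(x)
\]
is strictly positive. Let $C := \sum_{i=1}^n z_i^{2(d+1)} \geq 0$; then $G_\lambda(z) = \lambda C$, while on the boundary $G_\lambda \geq G \geq \delta$. Choosing $\lambda_\epsilon$ so that $\lambda_\epsilon C < \delta$ (any positive $\lambda_\epsilon$ works if $C = 0$), I get that for every $\lambda \in (0,\lambda_\epsilon)$ the continuous function $G_\lambda$ on the compact set $\overline{B(z,\epsilon')}$ attains its minimum at some interior point $w_\lambda \in B(z,\epsilon')$.

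Since $w_\lambda$ is an interior minimizer, $\nabla G_\lambda(w_\lambda) = 0$, so $w_\lambda \in V(\nabla I(G_\lambda)) = X_\lambda$. Thus $X_\lambda \cap B(z,\epsilon) \supseteq X_\lambda \cap B(z,\epsilon') \ni w_\lambda$ is nonempty for all $\lambda \in (0,\lambda_\epsilon)$, which is exactly the condition for $z \in L$.

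The only point that needs any care is verifying that the minimum of $G_\lambda$ on the closed ball is truly taken in the open interior rather than on $\partial B(z,\epsilon')$; this is where the separation between $G_\lambda(z) = \lambda C$ and the boundary value $\delta$ is essential, and it is the reason for restricting to sufficiently small $\lambda$. Everything else is routine compactness. Note that the argument does not assume $z$ is a nondegenerate minimum nor that the Jacobian of $g_1,\dots,g_m$ has any particular rank at $z$; it uses only that $z$ is isolated among the real zeros of $G$.
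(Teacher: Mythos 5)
Your proof is correct and is essentially identical to the paper's own argument: isolate $z$ in a small ball, note the minimum of $G$ on the boundary sphere is positive, observe that $G_\lambda(z)=\lambda\sum_i z_i^{2(d+1)}$ drops below that boundary minimum for small $\lambda$, and conclude that $G_\lambda$ attains an interior minimum, hence a critical point, in the ball. If anything, you are slightly more careful than the paper in matching the exact quantifier structure of the definition of $L$ (producing a single $\lambda_\epsilon$ that works for all $\lambda\in(0,\lambda_\epsilon)$) and in handling the degenerate case $C=0$.
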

\begin{proof}
  By the isolation of $z$ there exists $\delta>0$, such that $G(x)>0$
  for every $x\in B(z, \delta)\setminus \{z\}$. Therefore $m = \min
  \{G(x) \mid x\in \partial B(z, \delta)\} > 0$. Pick $\lambda > 0$ so
  that
  \[
  G_\lambda(z) = \lambda (z_1^{2(d+1)} + \cdots + z_n^{2(d+1)}) < m
  \]
  Since
  \[
  m
  \leq \min\{G_\lambda(x)\mid x\in \partial B(z, \delta)\},
  \]
  we know that the minimum of $G_\lambda$ on $B(z, \delta)$ is
  attained in $B(z, \delta)^\circ$.  Thus, $X_\lambda\cap B(z,
  \delta)\neq \emptyset$.
\end{proof}

\subsection{Proof of Theorem~\ref{th:isol-real-root-bd-full}}
For the proof of Theorem \ref{th:isol-real-root-bd-full} we additionally
need the following fundamental bounds.
\begin{lemma}
  \cite{BasuPollackRoy2006,Mign91,Yap2000}
  \label{lem:uni-bounds}
  Let $f \in \ZZ[x]$ of degree $d$, then for any non-zero root $\gamma$ it holds
  \begin{displaymath}
    (2 \norm{f}_{\infty})^{-1} \leq   \abs{\gamma} \leq 2 \norm{f}_{\infty} \enspace.
  \end{displaymath}
  If $\sep{f}$ is the separation bound, that is the minimum distance between the roots, then
  \begin{displaymath}
    \sep{f} = \min_{i\not=j}\abs{\gamma_i - \gamma_{j}} \geq
    d^{-(d+2)/2} \norm{f}_2^{1-d} \enspace .
  \end{displaymath}
\end{lemma}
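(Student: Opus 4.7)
The plan is to prove the three bounds separately, since the root bounds and the separation bound require rather different techniques, though both have classical roots (Cauchy for the former, Mahler--Mignotte for the latter).

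First I would tackle the upper bound $\abs{\gamma}\le 2\norm{f}_\infty$. Writing $f(x)=\sum_{i=0}^d a_i x^i$ with $a_d\ne 0$ and substituting a nonzero root $\gamma$, one rearranges to $a_d\gamma^d=-\sum_{i<d} a_i \gamma^i$. If $\abs{\gamma}>1$, the triangle inequality gives
\begin{equation*}
\abs{a_d}\abs{\gamma}^d \;\le\; \norm{f}_\infty\sum_{i=0}^{d-1}\abs{\gamma}^i \;\le\; \norm{f}_\infty\cdot\frac{\abs{\gamma}^d}{\abs{\gamma}-1},
\end{equation*}
which rearranges to $\abs{\gamma}\le 1+\norm{f}_\infty/\abs{a_d}$. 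Since $f\in\ZZ[x]$ forces $\abs{a_d}\ge 1$ and $\norm{f}_\infty\ge 1$, this collapses to $\abs{\gamma}\le 2\norm{f}_\infty$. The case $\abs{\gamma}\le 1$ is trivial given $\norm{f}_\infty\ge 1$.

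Next, for the lower bound $\abs{\gamma}\ge(2\norm{f}_\infty)^{-1}$, I would apply the preceding argument to the reverse polynomial $f^\star(x):=x^d f(1/x)\in\ZZ[x]$. Its nonzero roots are exactly the reciprocals of the nonzero roots of $f$, and $\norm{f^\star}_\infty=\norm{f}_\infty$; the upper bound applied to $f^\star$ yields $\abs{1/\gamma}\le 2\norm{f}_\infty$ as required.

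The separation bound is the main obstacle and is where the standard Mahler--Mignotte machinery enters. The plan is to reduce to the discriminant. Assuming $f$ is square-free (which is the only interesting case, since $\sep(f)$ is defined as the minimum over \emph{distinct} roots), one writes
\begin{equation*}
\mathrm{disc}(f) \;=\; a_d^{2d-2}\prod_{i<j}(\gamma_i-\gamma_j)^2 \;\in\; \ZZ\setminus\{0\},
\end{equation*}
so $\abs{\mathrm{disc}(f)}\ge 1$. Isolating the factor $(\gamma_i-\gamma_j)^2$ achieving the minimum and bounding every other factor $\abs{\gamma_k-\gamma_\ell}\le 2\max(1,\abs{\gamma_k})\max(1,\abs{\gamma_\ell})$, one obtains a bound of the form
\begin{equation*}
1 \;\le\; \abs{\mathrm{disc}(f)} \;\le\; C(d)\cdot \sep(f)^{2}\cdot M(f)^{2d-2},
\end{equation*}
where $M(f)=\abs{a_d}\prod_k \max(1,\abs{\gamma_k})$ is the Mahler measure and $C(d)$ is a combinatorial constant absorbing the number of pairs; a more careful accounting (for instance, via Hadamard's inequality on the Sylvester matrix of $f,f'$, together with the Landau inequality $M(f)\le\norm{f}_2$) yields $\sep(f)\ge d^{-(d+2)/2}\norm{f}_2^{1-d}$ after rearrangement. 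The delicate step is getting the combinatorial constant to come out to $d^{-(d+2)/2}$ rather than something larger; I would follow Mignotte's presentation (as in \cite{Mign91}, or Yap \cite{Yap2000}) which uses Hadamard on the $(2d-1)\times(2d-1)$ Sylvester matrix and invokes $M(f)\le\norm{f}_2$ to eliminate Mahler measures in favor of the stated $\ell^2$-norm.
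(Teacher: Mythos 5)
The paper does not prove this lemma at all --- it is quoted as a standard fact with citations to Basu--Pollack--Roy, Mignotte and Yap --- so there is no in-paper argument to compare against; what you give is precisely the classical proof from those references, and it is correct. The Cauchy-bound part is complete: the triangle-inequality estimate $\abs{\gamma}\le 1+\norm{f}_\infty/\abs{a_d}\le 2\norm{f}_\infty$ (using $\abs{a_d}\ge 1$ and $\norm{f}_\infty\ge 1$ for a nonzero integer polynomial) and the passage to the reverse polynomial $f^\star$ both go through, including the minor point that $f^\star$ may have lower degree when $a_0=0$, since your argument only uses that the leading coefficient is a nonzero integer. The separation part is a plan rather than a finished proof --- the ``more careful accounting'' is deferred --- but the deferred step is exactly Mahler's theorem, which for square-free $f$ gives $\sep(f)>\sqrt{3}\,d^{-(d+2)/2}M(f)^{1-d}$ via Hadamard's inequality applied to the resultant of $f$ and $f'$; combined with Landau's inequality $M(f)\le\norm{f}_2$ this yields the stated bound with room to spare. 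One small point worth making explicit in your reduction to the square-free case: the square-free part $g$ of $f$ need not satisfy $\norm{g}_2\le\norm{f}_2$, so you should run the argument with the Mahler measure (using $M(g)\le M(f)\le\norm{f}_2$, valid because every integer polynomial factor has Mahler measure at least $1$) and only convert to the $\ell^2$-norm at the end, together with the observation that $d'\mapsto d'^{-(d'+2)/2}$ and $d'\mapsto\norm{f}_2^{1-d'}$ are both decreasing, so the bound for $\deg g=d'\le d$ implies the bound as stated for $d$.
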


\begin{proof}[Proof of Theorem~\ref{th:isol-real-root-bd-full}]
  Let $\gamma_j = (\gamma_{j,1}, \cdots, \gamma_{j,n})$ be isolated
  real solutions of the system $(\Sigma)$. As
  above, we consider
\[
  G(x_1, \dots, x_n) = g_1(x_1,\dots, x_n)^2 + \cdots + g_m(x_1, \dots, x_n)^2
\]
and its pertubation
\[
G_\lambda(x) = G(x) + \lambda (x_1^{2(d+1)} + \cdots + x_n^{2(d+1)}),
\]
Form the system of partial derivatives 
\[
f_i = G_i + (2d+2)\lambda x_i^{2d+1} \enspace ,
\]
where $G_i(\x) = \frac{\partial G(\x)}{\partial x_i}$. We homogenize
the polynomials using a new variable $x_0$ and introduce the linear
form $u_0x_0 + \dots + u_nx_n$ specialized to the $l$th coordinate as
describe above. That is we add the polynomial
\[
f_0 = ux_0 - x_1
\]
Let the resulting system be $(\Sigma_0)$.

For a polynomial $f$, let $\bitsize{f}$ be the maximum coefficient
bitsize, that is $\bitsize{f} = \ceil{ \lg{ \norm{f}_{\infty}}}$. We
have $\dg{G} \leq 2d$ and $\bitsize{G}\leq 2\tau+2n\lg(d m)$. Write $G_i$ on the form
\[
  G_i(\x)  =\sum_{j=1}^{2d-1}{ c_{i, j} { \x}^{a_{i, j}}} \in \ZZ[\x],
\]
where $1 \leq i \leq n$, and let $\mathbf{c}$ be the set of all
coefficients $c_{i,j}$.  It holds that $\dg{G_i} = 2d-1$,
$\norm{G_i}_{\infty} \leq 2d \norm{G}_{\infty}$.

Let $D=(2d+1)^n$ and $D_1=(2d+1)^{n-1}$. For the system $(\Sigma_0)$
we consider the multivariate resultant $R$ in the variables $x_0, x_1,
\dots, x_n$.  It is a polynomial in the coefficients of $G$, $u$ and
$\lambda$, that is $R \in (\ZZ[\mathbf{c}, \lambda])[u]$,
\cite{CoxLittleOShea98}.  It has degree $D_1$ in the coefficients of
$G_i$, where $1 \leq i \leq n$, and degree $D$ in the coefficients of
$G_0$, which are 1 and $u$.  To be more specific, $R$ is of the form
\[
R = \dots + {\varrho}_k \, u^k \, {\widetilde
  {\mathbf{c}}_{1,k}}^{D_1} \, {\widetilde {\mathbf{c}}_{2,k}}^{D_1}
\, \cdots {\widetilde {\mathbf{c}}_{n,k}}^{D_1} \, + \dots ,
\]
where $\varrho_k\in\ZZ$, and ${\widetilde {\mathbf{c}}_{i,k}}^{D_1}$
is of the form $\lambda^{\mu} {\mathbf{c}_{i,k}}^{D_1-\mu}$ where the
second factor corresponds to a monomial in the coefficients $c_{ij}$,
of total degree $D_1-\mu$, for some $\mu$ smaller than $D_1$.

The lowest-degree nonzero coefficient of $R$, $R_u$, seen as
univariate polynomial in $\lambda$, is a projection operator: it
vanishes on the projection of any 0-dimensional component of the
algebraic set defined by $(\Sigma_0)$
\cite{c-crmp-87,de-cm-2001,emt-issac-2010}.  In our case the $\ell$-th
coordinates of the isolated solutions of (\ref{eq:G-sys}) are among
the roots of $R_u$.

It holds that $R_u \in \ZZ[\mathbf{c}][u]$, and $\dg{R_u}\leq D$.
Notice that the bound on the degree of $R_u$, that is $D=(2d+1)^{n}$,
is also an upper bound on the algebraic degree on the coordinates of
the solutions of (\ref{eq:orig-system}).  Which proves the first
assertion of the theorem.

To compute the bounds on the roots of $R_u$, and thus bounds on the
isolated solutions of (\ref{eq:G-sys}), we should bound the magnitude
of its coefficients. For the latter, it suffices to bound the
coefficients of $R$. Let
\[
\norm{R}_{\infty} \leq \max_{k} \abs{ {\varrho}_k \, {
    {\mathbf{c}}_{1,k}}^{D_1} \, { {\mathbf{c}}_{2,k}}^{D_1} \, \cdots
  { {\mathbf{c}}_{n,k}}^{D_1} } \leq \max_{k} \abs{ {\varrho}_k }
\cdot \max_{k} \abs{ { {\mathbf{c}}_{1,k}}^{D_1} \, {
    {\mathbf{c}}_{2,k}}^{D_1} \, \cdots { {\mathbf{c}}_{n,k}}^{D_1} }
= h \cdot C \enspace .
\]
To bound $\varrho_k$ we need a bound on the number of integer points
of the Newton polygons of $f_i$ \cite{sombra-ajm-2004}, which we
denote by $(\#Q_i)$. We refer to \cite{emt-issac-2010} for details.
For all $k$ we have
\[
\abs{\varrho_k} \leq h = (n+1)^D \prod_{i=1}^{n}(\#Q_i)^{D_1} \leq
2^{n D_1} D^{n D_1} \enspace .
\]
Moreover
\[
\max_{k} \abs{ { {\mathbf{c}}_{1,k}}^{D_1} \, {
    {\mathbf{c}}_{2,k}}^{D_1} \, \cdots { {\mathbf{c}}_{n,k}}^{D_1} }
= \prod_{i=1}^{n}{ \norm{G_i}_{\infty}^{D_1}} \leq \Paren{d
  \norm{G}_{\infty}}^{n D_1} = C \enspace .
\]
Hence
\[
\norm{R_u}_{\infty} \leq \norm{R}_{\infty} \leq h C = (2D d \norm{G}
)^{n D_1} \leq 2^{2n(\tau + 2n\lg(dm))(2d+1)^{n-1}} \enspace.
\]
Using Cauchy's bound (Lemma~\ref{lem:uni-bounds}) any of the non-zero
roots $\gamma_{j,i}$ of $R_u$ satisfies
\[
\abs{ \gamma_{j,i}} > \norm{R_u}_{\infty}^{-1} \geq (hC)^{-1} \geq
2^{-2n(\tau + 2n\lg(dm))(2d+1)^{n-1}} \enspace .
\]
Notice that the defining polynomial of $\gamma_{j,i}$ is the square-free part of $R_u$, 
which has bitsize at most 
$2n(\tau + 2n\lg(dm))(2d+1)^{n-1} + (2d+1)^{n-1} + 2\lg(2d+1)^{n-1}
\leq 2n(\tau + 4n\lg(dm))(2d+1)^{n-1}$.

To bound the minimum distance between the isolated roots of
$(\Sigma)$, we notice that
\[
\sqrt{n} \, \sep(\Sigma) \geq \sqrt{n} \, \min_{i \not= j}
\norm{\gamma_{i} - \gamma_{j}}_{\infty} \geq \min_{i \not= j}
\norm{\gamma_{i} - \gamma_{j}}_{2} \geq \min_{i \not= j}
\abs{\gamma_{i,\ell} - \gamma_{j,\ell}},
\]
for any $1 \leq \ell \leq n$ and where the last minimum is considered
over all $\gamma_{i,\ell} \not= \gamma_{j,\ell}$.

Using the separation bound for univariate polynomials
(Lemma~\ref{lem:uni-bounds}), we get
\[
\sep(R_u) = \min_{i \not= j} \abs{\gamma_{i,\ell} - \gamma_{j,\ell}}
\geq D^{-\frac{D+2}{2}} \norm{R_u}_2^{1-D} \geq D^{-\frac{D+2}{2}}
(\sqrt{D} \norm{R_u}_{\infty})^{1-D} ,
\]
and so
\[
\sep(R_u) = \min_{i \not= j} \abs{\gamma_{i,\ell} - \gamma_{j,\ell}}
\geq 2^{-3n(\tau + 2n\lg(dm))(2d+1)^{2n-1}} \enspace.
\]
Finally
\[
\sep(\Sigma) \geq \sep(R_u)/\sqrt{n} \geq 2^{-3n(\tau +
  2n\lg(dm))(2d+1)^{2n-1} - \frac{1}{2}\lg(n)} \enspace .
\]
This completes the proof.
\end{proof}

Better bounds should be possible for the algebraic degree of
Theorem~\ref{th:isol-real-root-bd-full}, based for example on
Oleinik-Petrovskii, Milnor-Thom's \cite{m-pams-1964,t-pinc-1965} bound
for the sum of Betti numbers of a set of real zeros of a polynomial
system, or on improved estimates by Basu \cite{b-dcg-2003} on
individual Betti numbers; see also \cite{br-jsc-2010}. This should
lead to improved separation bounds, if used in conjunction with neat
deformation techniques and bounds on parametric Gr\"obner basis,
e.g.~\cite{br-jsc-2010,jp-jsc-2010}, and/or bounds based on the
Generalized Characteristic Polynomial and sparse multivariate
resultants \cite{canny-gcp-1990,emt-issac-2010}.  Nevertheless, it is
not possible to beat the single exponential nature of the bound, and
only improvements in the constants involved are expected.

\section{Degree lower bounds for values of Shapley games}
\label{sec-low}
In this section we give a construction of a Shapley game $\Gamma_{N,m}$
with $N+1$ positions each having at most $m$ actions,
such that the algebraic degree of the value of one of the positions is
at least $m^N$. 

Previously, Etessami and Yannakakis \cite{LMCS:EY08} gave a reduction
from the so-called square-root sum problem to the quantitative
decision problem of Shapley games. In fact from this reduction one can
obtain a Shapley game with $N$ positions where the algebraic
degree of the value of one of the positions is $2^{\Omega(N)}$.

Our results below can be viewed as a considerable extension of this,
showing how the number of actions can affect the algebraic
degree. Comparing with the upper bound $m^{O(N)}$ shows that our
result is close to optimal. The idea of the game we construct is very
simple. The game consists of a dummy game position that just gives
rise to a probability distribution over the remaining $N$
positions. Each of the remaining $N$ positions are by themselves
independent Shapley games consisting of a single position with $m$
actions. We will construct these $N$ games in such a way that their
values are independent algebraic numbers each of degree $m$. Then a
suitable linear combination of these, corresponding to the probability
distribution, will cause the dummy position to have a value which is
an algebraic number of degree $m^N$.

Actually implementing this approach seems to bring significant
challenges when $m>2$. However using the powerful Hilbert's
irreducibility theorem we are able to give a simple existence proof of
a Shapley game with the properties as stated above. Next, we will also
give an explicit proof of existence using elementary but more involved
arguments.

\subsection{The single position game}

Let $\alpha_1,\dots,\alpha_m > 0$ be arbitrary positive numbers and $0
\leq \beta < 1$. Consider the Shapley game $\Gamma(\alpha,\beta)$
consisting of a single position where each player has $m$ actions, and
the payoffs are $a_{ii} = \alpha_i$ and $a_{ij} = 0$ for $i\neq j$,
and transition probabilities $p^{11}_{ii} = \beta$ and $p^{11}_{ij} =
0$ for $i \neq j$. Thus to $\Gamma(\alpha,\beta)$ corresponds the
parameterized matrix game given by the diagonal matrix $\diag(\alpha_1
+ \beta v, \dots, \alpha_m + \beta v)$. 

By Theorem~\ref{THM:ShapleyValue}, and since the game is given by a
diagonal matrix with strictly positive entries on the diagonal, we
find that the value of the game $v$ satisfies the equation
\begin{equation}\sum_{i=1}^m \frac{v}{\alpha_i + \beta v} = 1 \enspace .
\label{EQ:degreelowerbound}
\end{equation}

More precisely, consider a diagonal matrix game $\diag(a_1,\dots,a_m)$
with strictly positive entries $a_1,\dots,a_m > 0$ on the diagonal,
and let $p$ and $q$ be optimal strategies for the row and column
player, respectively, and let $v>0$ be the value of the game. Firstly,
all $p_i>0$ as otherwise the column player could ensure payoff 0 by
playing strategy $i$. Thus $v=a_i q_i$ for all $i$, and hence also
$q_i>0$ for all $i$. But then similarly we have $v=a_i p_i$ for all
$i$. Rearranging to $p_i= v/a_i$ and doing summation over $i$ gives
the claimed equation.

Define the polynomial $f_m(v) = \prod_{i=1}^m (\alpha_i + \beta v)$. Then
$f_m'(v) = \beta \sum_{i=1}^m  \prod_{j \neq i} (\alpha_j+\beta v)$. Multiplying by
$f_m(v)$ on both sides of equation~\ref{EQ:degreelowerbound} we obtain
the following.
\[
f_m(v) = v \sum_{i=1}^m \prod_{j \neq i} (\alpha_j + \beta_j v) = \frac{1}{\beta}v f_m'(v) \enspace .
\]
In the following we will specialize $\beta = 1/c$, for some $c>1$. 
We then obtain that $v$ is a root in the univariate
polynomial
\[
F_m(v) = f_m(v) - c v f'_m(v) \enspace .
\]
\subsection{Existence using Hilbert's irreducibility theorem}

We next present the simple existence proof using (a version) of
Hilbert's irreducibility theorem.

\begin{lemma}
\label{LEM:MultivariateIrreducible}
If $c > 1$ is rational, then
\[
F_m(v,\alpha_1^2,\dots,\alpha_m^2) \in \QQ[v, \alpha_1, \dots, \alpha_m]
\]
is irreducible as a multivariate polynomial in $v, \alpha_1, \dots, \alpha_m$.
\end{lemma}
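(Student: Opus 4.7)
The plan is to apply the Eisenstein criterion, viewing $F_m(v,\alpha_1^2,\ldots,\alpha_m^2)$ as a polynomial in $\alpha_1$ over the UFD $R := \QQ[v,\alpha_2,\ldots,\alpha_m]$, with respect to the prime $v \in R$. Because of the substitution $\alpha_i\mapsto\alpha_i^2$, only the powers $\alpha_1^0$ and $\alpha_1^2$ appear, so the polynomial has degree exactly $2$ in $\alpha_1$ with zero $\alpha_1^1$-coefficient. Writing $f_m(v) = (\alpha_1^2 + \beta v)\tilde f(v)$, where $\tilde f(v) := \prod_{i=2}^m(\alpha_i^2+\beta v)$ and $\tilde F(v) := \tilde f(v) - cv\tilde f'(v)$, and substituting into $F_m = f_m - cv f_m'$ after differentiating $f_m$ in $v$, yields the recursion
\[
F_m \;=\; \alpha_1^2\,\tilde F \;+\; \beta v\bigl((1-c)\tilde f - cv\tilde f'\bigr).
\]

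To apply Eisenstein at $v \in R$, I would verify: (i) the leading coefficient $\tilde F$ is not divisible by $v$, since $\tilde F|_{v=0} = \tilde f(0) = \prod_{i\ge 2}\alpha_i^2\ne 0$; (ii) the middle $\alpha_1$-coefficient is $0$ and the constant coefficient $\beta v\bigl((1-c)\tilde f - cv\tilde f'\bigr)$ is manifestly divisible by $v$; (iii) $v^2$ does not divide the constant coefficient, because $\bigl((1-c)\tilde f - cv\tilde f'\bigr)\big|_{v=0} = (1-c)\prod_{i\ge 2}\alpha_i^2 \ne 0$, where we use $c>1$. Eisenstein then gives irreducibility of $F_m$ in $\operatorname{Frac}(R)[\alpha_1]$.

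To upgrade to irreducibility in $R[\alpha_1] = \QQ[v,\alpha_1,\ldots,\alpha_m]$ via Gauss's lemma, I would check that $F_m$ is primitive in $\alpha_1$. If a prime $q \in R$ divides both the leading and constant $\alpha_1$-coefficients, then using $cv\tilde f' = \tilde f - \tilde F$ the constant coefficient equals $\beta v(\tilde F - c\tilde f)$, so $q\mid v\tilde f = v\prod_{j\ge 2}(\alpha_j^2+\beta v)$. Hence $q$ is associate to $v$ (excluded by (i)) or to some $\alpha_j^2+\beta v$ with $j\ge 2$. In the latter case, substituting $v\mapsto -\alpha_j^2/\beta$ gives $\tilde f\equiv 0$ but $\tilde f'\equiv \beta\prod_{k\ne j,\,k\ge 2}(\alpha_k^2-\alpha_j^2)\ne 0$ in $\QQ[\alpha_2,\ldots,\alpha_m]$ by algebraic independence of the $\alpha_k$, whence $\tilde F\not\equiv 0 \pmod{q}$. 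So no such $q$ exists, $F_m$ is primitive, and the proof concludes. I expect this primitivity step to be the main obstacle: Eisenstein over the fraction field is automatic once the recursion is in hand, but the upgrade to $R[\alpha_1]$ requires enumerating the primes of $R$ that could divide every $\alpha_1$-coefficient and ruling each out, which in turn rests on the independence of the $\alpha_k$ as indeterminates.
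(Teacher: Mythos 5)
Your proof is correct. The recursion you derive, $F_m = \alpha_1^2\,\tilde F + v\bigl((1/c-1)\tilde f - v\tilde f'\bigr)$, is (up to which variable is singled out) exactly the identity the paper obtains, $F_m = F_{m-1}\alpha_m^2 + v\bigl((1/c-1)f_{m-1}-vf'_{m-1}\bigr)$, and both arguments ultimately rest on the same mechanism: the constant coefficient is divisible by the prime $v$ exactly once while the leading coefficient is nonzero mod $v$, i.e., Eisenstein at $v$. The genuine difference is in how primitivity is established. The paper proceeds by induction on $m$, using the inductive irreducibility of $F_{m-1}$ to conclude that the two coefficients are coprime (the only possible failure being that they are associates, which forces a contradiction with $c$). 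You avoid induction entirely: a common prime divisor of the two coefficients must divide $v\tilde f$, hence be $v$ or one of the explicit irreducibles $\alpha_j^2+\beta v$, and you eliminate each by evaluation, using the algebraic independence of the $\alpha_k$. Both routes work; yours has two small advantages. First, it is non-inductive. Second, it makes the Eisenstein step explicit, whereas the paper's ``$\gcd=1$, therefore $F_m$ is irreducible'' is terse as stated --- primitivity alone does not preclude a primitive quadratic from splitting into two linear factors (e.g.\ $x^2-v^2$), and the missing ingredient is precisely the odd $v$-adic valuation of the constant term that you verify. Your argument also shows that only $c\notin\{0,1\}$ is needed rather than $c>1$.
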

\begin{proof}
  This uses induction on $m$. For $m = 1$ we have $F_1 = (1+1/c)v +
  \alpha_1^2$ which is irreducible in $\QQ[v, \alpha_1]$.  The
  induction step proceeds as follows.

\begin{align*}
  F_m &= f_m - c v f_m' = (\alpha_m^2 + \frac{1}{c} v) f_{m-1} - c v
  \frac{d}{d v}\left( (\alpha_m^2 + \frac{1}{c} v) f_{m-1}\right)\\
&= f_{m-1} \alpha_m^2 + \frac{1}{c} v f_{m-1} - c v(\frac{1}{c} f_{m-1} + (\alpha_m^2+ \frac{1}{c}v) f_{m-1}')\\
&= f_{m-1} \alpha_m^2 + \frac{1}{c} v f_{m-1} - v f_{m-1} - c v \alpha_m^2 f_{m-1}' - v^2 f_{m-1}'\\
&= (f_{m-1} - c v f_{m-1}')\alpha_m^2 + v((\frac{1}{c}-1) f_{m-1} - v f'_{m-1})\\
&= F_{m-1} \alpha_m^2 + v \left((1/c-1) f_{m-1} - v f'_{m-1}\right).
\end{align*}
If $F_{m-1}$ is associated to $F:=(\frac{1}{c}-1) f_{m-1} - v
f'_{m-1}$ in the polynomial ring $\QQ[v, \alpha_1, \dots,
\alpha_{m-1}]$, then we would have $(\frac{1}{c}-1) F_{m-1} = F$
leading to the contradiction $(\frac{1}{c}-1) c = 1$.  Since $F_{m-1}$
is irreducible by induction, it follows that
\[
\gcd(F_{m-1}, (1/c-1) f_{m-1} - v f'_{m-1}) = 1
\]
and therefore that 
\[
F_m=F_{m-1} \alpha_m^2 + v((1/c-1) f_{m-1} - v f'_{m-1})\in
\QQ[v, \alpha_1, \dots, \alpha_{m-1}][\alpha_m]
\]
is irreducible.
\end{proof}

We recall the following version of Hilbert's irreducibility theorem
(see \cite{FJ}, Corollary 11.7) sufficient for our purposes.

\begin{theorem}[Hilbert]
\label{THM:HilbertIT}
  Let $K$ be a finite extension field of $\QQ$ and $f\in K[x, y_1,
  \dots, y_n]$ an irreducible polynomial. Then there exists
  $(\alpha_1, \dots, \alpha_n)\in \QQ^n$, such that
\[
f(x, \alpha_1, \dots, \alpha_n)\in K[x]
\]
is an irreducible polynomial.
\end{theorem}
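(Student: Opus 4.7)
The plan is to establish this theorem in three layers following the standard framework of Hilbertian fields. Recall that a field $k$ is \emph{Hilbertian} if for every finite collection of irreducible polynomials $f_1, \ldots, f_r \in k[x, y_1, \ldots, y_n]$ nonconstant in $x$, there exists a common specialization $(\alpha_1, \ldots, \alpha_n) \in k^n$ that keeps each $f_i(x, \alpha_1, \ldots, \alpha_n)$ irreducible in $k[x]$. The goal is reached by showing (i) the multivariate Hilbert property reduces to the one-variable case by induction on $n$, (ii) coefficients in a finite extension $K$ reduce to coefficients in $\QQ$, and (iii) $\QQ$ itself is Hilbertian.

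For (i), given irreducible $h \in \QQ[x, y_1, \ldots, y_n]$, I would regard it as irreducible over the rational function field $\QQ(y_2, \ldots, y_n)[x, y_1]$; applying the one-variable Hilbert property for the field $\QQ(y_2, \ldots, y_n)$ (which inherits Hilbertness from $\QQ$ by a separate standard argument) and combining with a clearing-of-denominators step to locate a specialization value in $\QQ$ yields $\alpha_1 \in \QQ$ preserving irreducibility, after which iterating finishes the reduction. For (ii), let $L$ be the Galois closure of $K/\QQ$ with embeddings $\sigma_1, \ldots, \sigma_d \colon K \hookrightarrow L$, and form the norm $g = \prod_i \sigma_i(f) \in \QQ[x, y_1, \ldots, y_n]$. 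Factor $g = g_1 \cdots g_s$ into $\QQ$-irreducibles; the $K$-irreducibility of $f$ forces the $L$-irreducible factors of $g$ to constitute a single $\mathrm{Gal}(L/\QQ)$-orbit, so the $g_j$ are compatibly controlled by $f$ and its conjugates. Applying $\QQ$-Hilbertness simultaneously to $g_1, \ldots, g_s$ supplies $(\alpha_i) \in \QQ^n$ keeping every $g_j(x, \alpha_1, \ldots, \alpha_n)$ irreducible in $\QQ[x]$, and Galois-theoretic bookkeeping on the orbit structure deduces $K$-irreducibility of $f(x, \alpha_1, \ldots, \alpha_n)$.

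The main obstacle is (iii): for irreducible $f(x, y) \in \QQ[x, y]$ of positive degree in $x$, infinitely many $\alpha \in \QQ$ should keep $f(x, \alpha)$ irreducible over $\QQ$. The classical route is to expand the roots of $f(x, y) = 0$ as Puiseux series $x_j(y) = \sum_k c_{j,k}\, y^{k/e_j}$ convergent for $|y|$ large, with coefficients algebraic over $\QQ$. Any integer $\alpha$ for which $f(x, \alpha)$ factors nontrivially in $\QQ[x]$ must make a proper, Galois-stable subset of the evaluated roots $\{x_j(\alpha)\}$ consist of rationals, forcing certain algebraic numbers appearing in the Puiseux data to ``rationalize'' at $\alpha$. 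A counting argument of Thue--Siegel type bounds the number of such exceptional integer $\alpha \in [-N, N]$ by $O(\sqrt{N})$, while there are $2N + 1$ integer candidates available; the resulting density gap furnishes infinitely many good $\alpha$. This integer-point estimate is the technical heart of the theorem; once it is in hand, the bookkeeping in (i) and (ii) is essentially routine.
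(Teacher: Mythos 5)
The paper does not prove this statement at all: it is quoted verbatim as a classical theorem (Hilbert's irreducibility theorem) with a citation to Fried--Jarden, Corollary 11.7, and is used as a black box in the proof of Theorem~\ref{THM:DegreeLowerBound}. So there is no in-paper argument to compare yours against; what you have written is an outline of the standard textbook proof. As a roadmap it is essentially correct: the three-layer decomposition (multi-parameter to one-parameter, coefficients in $K$ to coefficients in $\QQ$ via the norm $\prod_i \sigma_i(f)$ and its $\mathrm{Gal}(L/\QQ)$-orbit of irreducible factors, and finally the Hilbertianity of $\QQ$ via Puiseux expansions at infinity plus a Dörge/Siegel-type count of exceptional integers) is exactly how the result is established in Fried--Jarden and in Lang.

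That said, as a proof rather than a plan it has gaps you should be aware of. First, the entire diophantine core of step (iii) --- that a non-polynomial algebraic Puiseux series takes integral values at only $O(\sqrt{N})$ integers in $[-N,N]$ --- is named but not proved, and it is the only genuinely hard part of the theorem. Second, one intermediate claim in (iii) is stated incorrectly: a nontrivial factorization of $f(x,\alpha)$ over $\QQ$ does \emph{not} force a proper Galois-stable subset of the roots $x_j(\alpha)$ to be rational; it forces the \emph{elementary symmetric functions} of such a subset to be rational (and, after clearing the leading coefficient, integral at integer $\alpha$), and it is to these partial symmetric-function series that the counting lemma must be applied. Third, in step (i) your route through the Hilbertianity of the function field $\QQ(y_2,\dots,y_n)$ produces a specialization of $y_1$ lying in that function field, not in $\QQ$; the ``clearing-of-denominators step'' you invoke to descend to a rational value is doing substantial unexplained work, and the cleaner standard reduction specializes one parameter at a time over $\QQ$ using a Kronecker-substitution encoding of multivariate irreducibility into univariate irreducibility. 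None of these is a wrong approach --- they are the standard approach --- but in each case the assertion stands in for an argument. Since the paper itself only cites the theorem, the appropriate fix here is either to supply the missing counting lemma in full or, as the authors do, to cite the literature.
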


We are now in position to show existence of the Shapley game
$\Gamma_{N,m}$.

\begin{theorem}
  For any $N,m \geq 1$ there exists a Shapley game with $N+1$
  positions each having $m$ actions for each player, such that the
  value position $N+1$ in the game is an algebraic number of degree $m^N$.
\label{THM:DegreeLowerBound}
\end{theorem}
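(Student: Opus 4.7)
The plan is to construct $\Gamma_{N,m}$ by using $N$ independent single-position sub-games as positions $1,\ldots,N$ and a dummy $(N+1)$-th position that randomizes over them. For each $i\in\{1,\ldots,N\}$, position $i$ will be a copy of the single-position Shapley game $\Gamma(((\alpha^{(i)}_j)^2)_j,\beta)$ introduced above, with $m$ actions per player, no transitions to positions other than $i$ itself, and value $v_i$ satisfying $F_m(v_i,(\alpha^{(i)}_1)^2,\ldots,(\alpha^{(i)}_m)^2)=0$; the $\alpha_j^2$ substitution is needed to feed into Lemma~\ref{LEM:MultivariateIrreducible}. Position $N+1$ will have an $m\times m$ payoff matrix every entry of which is identical: payoff $0$, transition probability $\lambda_l>0$ into position $l$, and stop probability $1-\sum_l\lambda_l>0$, for rationals $\lambda_l$ to be chosen with $\sum_l\lambda_l<1$. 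Because the matrix is constant, $\val(A^{N+1}(v))=\sum_l\lambda_l v_l$, so the fixed-point equation of Theorem~\ref{THM:ShapleyValue} forces $v^{N+1}=\sum_{l=1}^N\lambda_l v_l$, and both the number of actions and the strict positivity of all stop probabilities are built in.

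The payoffs of the sub-games will be chosen inductively via Hilbert's irreducibility theorem so that $\QQ(v_1,\ldots,v_N)$ attains the maximal degree $m^N$. Fix a rational $c>1$ and set $\beta=1/c$. With $K_0=\QQ$ and $K_{i-1}=\QQ(v_1,\ldots,v_{i-1})$, suppose inductively that $[K_{i-1}:\QQ]=m^{i-1}$. The proof of Lemma~\ref{LEM:MultivariateIrreducible} uses only that the purported equality $(1/c-1)F_{m-1}=F$ forces the contradiction $(1/c-1)c=1$, which remains a contradiction in any characteristic-zero field containing $c$; hence $F_m(v,\alpha_1^2,\ldots,\alpha_m^2)$ is irreducible in $K_{i-1}[v,\alpha_1,\ldots,\alpha_m]$. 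Applying Theorem~\ref{THM:HilbertIT} with base field $K=K_{i-1}$ yields positive rationals $\alpha^{(i)}_1,\ldots,\alpha^{(i)}_m$ for which $F_m(v,(\alpha^{(i)}_1)^2,\ldots,(\alpha^{(i)}_m)^2)\in K_{i-1}[v]$ is irreducible of degree $m$; the positive root of this polynomial is exactly the Shapley value $v_i$ of the sub-game. Thus $[K_i:K_{i-1}]=m$, and the tower law after $N$ steps gives $[K_N:\QQ]=m^N$.

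Next I pick the weights $\lambda_l$ so that $v^{N+1}=\sum_l\lambda_l v_l$ is a primitive element of $K_N/\QQ$. Let $\sigma_1,\ldots,\sigma_{m^N}$ be the distinct $\QQ$-embeddings of $K_N$ into $\overline{\QQ}$. The element $\sum_l\lambda_l v_l$ fails to have $m^N$ distinct conjugates precisely when $(\lambda_1,\ldots,\lambda_N)$ lies on one of the proper hyperplanes
\[
H_{s,t}=\Bigl\{\lambda\in\overline{\QQ}^N:\sum_{l=1}^N\lambda_l\bigl(\sigma_s(v_l)-\sigma_t(v_l)\bigr)=0\Bigr\},\qquad s\neq t.
\]
There are only finitely many such hyperplanes, so they cannot cover the nonempty rational box $\{\lambda\in\QQ^N:0<\lambda_l<1/(2N)\}$, and I may choose positive rational $\lambda_l$ with $\sum_l\lambda_l<1$ avoiding all of them. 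Then $v^{N+1}=\sum_l\lambda_l v_l$ has $m^N$ distinct $\QQ$-conjugates and therefore algebraic degree exactly $m^N$, completing the construction.

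The main delicate point is the iterated invocation of Hilbert's irreducibility theorem over the increasing tower of number fields $K_{i-1}$; the induction relies on multivariate irreducibility of $F_m(v,\alpha_1^2,\ldots,\alpha_m^2)$ over each $K_{i-1}$, which is why I must re-read the proof of Lemma~\ref{LEM:MultivariateIrreducible} to confirm that its single field-dependent identity survives the passage from $\QQ$ to any characteristic-zero extension. The remaining ingredients, namely the primitive-element count of conjugates and the bookkeeping that the game satisfies the Shapley stop-probability requirement, are elementary.
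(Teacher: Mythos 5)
Your proof is correct and follows essentially the same route as the paper's: single-position diagonal games whose degree-$m$ values are made independent by iterated application of Hilbert's irreducibility theorem over a growing tower of number fields, combined at a dummy position via a rational linear combination chosen to be a primitive element. Your explicit check that Lemma~\ref{LEM:MultivariateIrreducible} persists over each $K_{i-1}$ and your hyperplane-avoidance choice of the weights $\lambda_l$ are, if anything, slightly more careful than the paper's corresponding steps, which maintain a primitive element $v'=\sum_i k_i v_i$ through the induction and invoke the primitive element theorem implicitly.
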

\begin{proof}
  We shall construct the first $N$ positions as independent Shapley
  games described as above. For the base case of $N=1$, using
  Lemma~\ref{LEM:MultivariateIrreducible} we simply invoke
  Theorem~\ref{THM:HilbertIT} on the polynomial $F_m(v,\alpha_1^2,
  \dots, \alpha_m^2)$ with $c=2$, say. This gives a specialization of
  $\alpha_1,\dots,\alpha_m \in \QQ$ such that the value of the game
  $\Gamma((\alpha_1^2,\dots,\alpha_m^2),1/2)$ is an algebraic number
  $v_1$ of degree $m$.

  Now assume by induction that we have constructed $N-1$
  single-position Shapley games with values $v_1,\dots,v_{N-1}$
  together with positive integer coefficients $k_1,\dots,k_{N-1}$ such
  that $v'=k_1v_1+\dots+k_{N-1}v_{N-1}$ is an algebraic number of
  degree $m^{N-1}$. Invoke Theorem~\ref{THM:HilbertIT} on the
  polynomial $F_m(v,\alpha_1^2,\dots,\alpha_m^2)$ as before, but now
  over the extension field $\QQ(v')$. This again gives a
  specialization of $\alpha_1,\dots,\alpha_m \in \QQ$ such that the
  value of the game $\Gamma((\alpha_1^2,\dots,\alpha_m^2),1/2)$ is an
  algebraic number $v_N$ of degree $m$, but now over $\QQ(v')$. We may
  now find a positive integer $k$ such that $v'+k_Nv_N$ is an
  algebraic number of degree $m^{N-1}m=m^N$ over $\QQ$.

  Now we may construct the $N+1$ position game as follows. Let
  $K=k_1+\dots+k_N$. In position $N+1$, regardless of the players
  actions, with probability $1/2$ the game ends, and with probability
  $1/2k_i$ the play proceeds in position $i$. No payoff is
  awarded. Clearly the value of position $N+1$ is exactly
  $(k_1v_1+\dots+k_Nv_N)/2$ and is thus an algebraic number of degree $m^N$.
\end{proof}

\subsection{An explicit specialization}

Write $E_k(\alpha) = E_k(\alpha_1,\dots,\alpha_m)$ for the $k$th
elementary symmetric polynomial in $\alpha_1,\dots,a_m$ i.e.
\[
E_k(\alpha) = \sum_{1\leq i_1 < i_2 < \cdots  < i_k \leq m}{\alpha_{i_1} \cdots \alpha_{i_k}}
\]
for $1\leq k\leq m$. For notational convenience we define $E_0(\alpha)
= 1$. We have not been able to find a reference in the literature for
the following lemma. For a complete factorization of $S_k(x)$ we refer
to Lemma~\ref{lem:sym}.

\begin{lemma}\label{Lemma:IsThisKnown}
Let $S_k(x) = E_k(1, x, \dots, x^{m-1})$, where $x$ is a variable. Then
\[
\gcd(S_1(x), \dots, S_{m-1}(x)) = \Phi_m(x),
\]
where $\Phi_m$ is the $m$-th cyclotomic polynomial.
\end{lemma}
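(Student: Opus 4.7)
The plan is to exploit the generating-function identity
\[
\prod_{i=0}^{m-1}(1 + x^i t) = \sum_{k=0}^{m} S_k(x)\, t^k,
\]
which holds in $\QQ[x][t]$ by the very definition of elementary symmetric polynomials in the arguments $1, x, \dots, x^{m-1}$. Everything will follow by specializing $x$ to suitable values and reading off coefficients in $t$.

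First, for the forward inclusion $\Phi_m(x) \mid \gcd(S_1, \dots, S_{m-1})$, let $\zeta$ be a primitive $m$-th root of unity, so that $\{1, \zeta, \dots, \zeta^{m-1}\}$ is the full set of $m$-th roots of unity. Specializing the identity at $x = \zeta$ gives
\[
\sum_{k=0}^{m} S_k(\zeta)\, t^k \;=\; \prod_{\omega^m = 1}(1 + \omega t) \;=\; 1 - (-1)^m t^m,
\]
where the last equality comes from $\prod_{\omega^m = 1}(z - \omega) = z^m - 1$ upon the substitution $z = -1/t$. Comparing coefficients yields $S_k(\zeta) = 0$ for every $1 \le k \le m-1$, so $\Phi_m$ divides every $S_k$ and hence divides their gcd.

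Next, for the reverse direction, I show every common root $\alpha \in \overline{\QQ}$ of $S_1, \dots, S_{m-1}$ is a primitive $m$-th root of unity. Note $\alpha \neq 0$ since $S_1(0) = 1$. Specializing the identity at $x = \alpha$ kills all intermediate coefficients and collapses it to
\[
\prod_{i=0}^{m-1}(1 + \alpha^i t) \;=\; 1 + c\, t^m, \qquad c = \alpha^{m(m-1)/2}.
\]
Since $c \neq 0$, the right-hand side has $m$ pairwise distinct roots in $t$, forcing the linear factors on the left to be distinct as well; equivalently, $\alpha^0, \alpha^1, \dots, \alpha^{m-1}$ are all distinct, so the multiplicative order of $\alpha$ is at least $m$. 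Raising the common root $-\alpha^{-i}$ of both sides to the $m$-th power gives $(-1)^m \alpha^{-im} = -1/c$ for every $i$; taking $i = 0$ fixes $c = (-1)^{m+1}$, and then taking $i = 1$ forces $\alpha^m = 1$. Combined with the lower bound on its order, $\alpha$ has order exactly $m$.

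To finish, I use the standard factorization $S_1(x) = (x^m - 1)/(x - 1) = \prod_{d \mid m,\, d > 1} \Phi_d(x)$, which is squarefree. Hence $\gcd(S_1, \dots, S_{m-1})$ is a squarefree divisor of $S_1$, i.e.\ a product of distinct $\Phi_d$'s with $d \mid m$, $d > 1$, and the analysis above shows $\Phi_d$ contributes to this gcd if and only if $d = m$, yielding $\gcd(S_1, \dots, S_{m-1}) = \Phi_m$. The main obstacle is the converse direction: it requires combining the distinctness of the roots of $1 + c\, t^m$ with the $m$-th-power comparison to pin down the order of $\alpha$ exactly; the squarefreeness argument at the end is then essentially bookkeeping.
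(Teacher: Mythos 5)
Your proof is correct and follows essentially the same route as the paper's: both expand the generating polynomial $\prod_{i=0}^{m-1}(t \pm x^i)$ whose coefficients in $t$ are (up to sign) the $S_k$, specialize $x$, and observe that all intermediate coefficients vanish exactly when the specialization collapses to a binomial $1+ct^m$ (equivalently $t^m - c$), which happens precisely when $x$ is a primitive $m$-th root of unity. If anything, your treatment of the reverse inclusion is slightly more careful than the paper's, which asserts that $f(t,\xi)$ has multiple roots for every non-primitive $\xi$ (not literally true when $\xi$ is not a root of unity) and leaves the squarefreeness of the gcd implicit, whereas you handle both points explicitly.
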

\begin{proof}
Define
\begin{align*}
f(t, x) &= (t-1)(t-x)(t-x^2) \cdots (t-x^{m-1})\\
&= t^m - S_1(x)t^{m-1} + \cdots + (-1)^{m-1} S_{m-1}(x)t +(-1)^m S_m(x).
\end{align*}
If $\xi$ is a primitive $m$-th root of unity, then $f(t, \xi) = t^m -
1$ and therefore $S_j(\xi) = 0$ for $j = 1, \dots, m-1$. If $\xi$ is
not a primitive $m$-th root of unity, then $f(t, \xi)$ has multiple
roots showing that $S_j(\xi)\neq 0$ for some $j = 1, \dots, m-1$. Thus
the greatest common divisor of $S_1(x), \dots, S_{m-1}(x)$ is the
product of $(x-\xi)$, where $\xi$ runs through the primitive $m$-th
roots of unity. This polynomial is precisely $\Phi_m$.
\end{proof}

We now derive the following formula for $F_m$ giving the coefficients
explicitly.
\begin{lemma}
\[
F_m(v) = \sum_{k=0}^m E_{m-k}(\alpha)(1-c k) (v/c)^k \enspace .
\]
\end{lemma}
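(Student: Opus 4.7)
The plan is a direct computation based on the standard expansion of $f_m(v) = \prod_{i=1}^m(\alpha_i + \beta v)$ in terms of the elementary symmetric polynomials $E_k(\alpha)$, together with the definition $F_m(v) = f_m(v) - c v f_m'(v)$ and the specialization $\beta = 1/c$.

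First I would expand $f_m(v)$. Choosing the ``$v/c$'' factor from $k$ of the $m$ parentheses and the ``$\alpha_i$'' factor from the remaining $m-k$ contributes $E_{m-k}(\alpha)(v/c)^k$, so
\[
f_m(v) = \sum_{k=0}^m E_{m-k}(\alpha)\, (v/c)^k.
\]
Differentiating term by term (using $\frac{d}{dv}(v/c)^k = (k/c)(v/c)^{k-1}$) and multiplying by $v$ gives
\[
v f_m'(v) = \sum_{k=1}^m E_{m-k}(\alpha)\, k\, (v/c)^k,
\]
since $v\cdot (1/c)(v/c)^{k-1} = (v/c)^k$. Multiplying by $c$ then yields $c v f_m'(v) = \sum_{k=1}^m E_{m-k}(\alpha)\, c k\, (v/c)^k$.

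Subtracting, the $k=0$ term (which contributes only to $f_m$) matches the formula since $1-c\cdot 0 = 1$, and for $k\ge 1$ the coefficients combine as $E_{m-k}(\alpha)(1 - ck)$, giving
\[
F_m(v) = f_m(v) - c v f_m'(v) = \sum_{k=0}^m E_{m-k}(\alpha)(1-ck)(v/c)^k,
\]
as claimed. There is no real obstacle here; the only thing to be mildly careful about is aligning the index ranges of the two sums at $k=0$, which is handled by the observation that the $k=0$ coefficient of $c v f_m'(v)$ is zero and matches the factor $(1-ck)$ evaluated at $k=0$.
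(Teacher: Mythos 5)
Your proof is correct and follows essentially the same route as the paper: expand $f_m(v)=\prod_i(\alpha_i+v/c)$ via elementary symmetric polynomials, differentiate term by term, and combine coefficients in $F_m = f_m - cvf_m'$. The index-alignment remark at $k=0$ is a fine (if unnecessary) extra check; nothing is missing.
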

\begin{proof}
First we have 
\[
f_m(v) = \prod_{i=1}^m (a_i + v/c) = \sum_{k=0}^m E_{m-k}(\alpha) (v/c)^k \enspace ,
\]
and thus
\[
f'_m(v) = \sum_{k=0}^m E_{m-k}(\alpha) k v^{k-1}(1/c)^k \enspace .
\]
We can then conclude
\[
F_m(v) = \sum_{k=0}^m E_{m-k}(\alpha) ( (v/c)^k - c v (kv^{k-1}(1/c)^k)) =  \sum_{k=0}^m E_{m-k}(\alpha) ( (1-c k)(v/c)^k) \enspace .
\]
\end{proof}

\begin{lemma}
  The polynomial 
  \begin{equation}
    F(v) = \sum_{k=0}^m E_{m-k}(\alpha)(1 - c k) c^{m-k} v^k  = c^m  F_m(v) \enspace,
    \label{eq:F-poly}
  \end{equation}
  is irreducible for an infinite number of specializations of $\alpha$
  and $c$.
\end{lemma}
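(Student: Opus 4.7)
The plan is to combine Lemma~\ref{LEM:MultivariateIrreducible} with Hilbert's irreducibility theorem (Theorem~\ref{THM:HilbertIT}). Observe first that $F(v) = c^m F_m(v)$, so for any specialization with $c \neq 0$, the univariate polynomial $F(v) \in \QQ[v]$ is irreducible if and only if $F_m(v) \in \QQ[v]$ is irreducible. Hence it suffices to produce infinitely many specializations rendering the latter irreducible.

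Fix any rational $c > 1$. By Lemma~\ref{LEM:MultivariateIrreducible} the multivariate polynomial $F_m(v, \alpha_1^2, \ldots, \alpha_m^2) \in \QQ[v, \alpha_1, \ldots, \alpha_m]$ is irreducible. Applying Theorem~\ref{THM:HilbertIT} with $K = \QQ$ (taking $v$ as the distinguished variable and $\alpha_1, \ldots, \alpha_m$ as the parameters) then furnishes a rational tuple $(\alpha_1, \ldots, \alpha_m) \in \QQ^m$ such that the specialized univariate polynomial $F_m(v, \alpha_1^2, \ldots, \alpha_m^2) \in \QQ[v]$ remains irreducible. Since the set of rationals $c > 1$ is infinite and each choice of $c$ yields at least one such specialization of $\alpha$, the collection of pairs $(\alpha, c)$ for which $F(v)$ is irreducible is infinite (the pairs are distinct simply because $c$ varies).

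The main subtlety is that Theorem~\ref{THM:HilbertIT} as stated yields only a single good specialization of $\alpha$ for each fixed $c$, rather than an infinite family. A cleaner alternative would be to invoke the stronger form of Hilbert's irreducibility theorem asserting that the set of good $\alpha$-specializations is Zariski dense in $\QQ^m$, which already gives infinitely many specializations even for a fixed $c$. However, since the statement of the lemma allows us to vary both $\alpha$ and $c$, merely iterating the weak version of Theorem~\ref{THM:HilbertIT} over the infinite set $\QQ \cap (1, \infty)$ of admissible values of $c$ is already sufficient. One minor verification along the way is that the specialized $F(v)$ genuinely has degree $m$ in $v$ (so ``irreducible'' is non-trivial): the leading coefficient is $E_0(\alpha)(1 - cm) = 1 - cm$, which is nonzero whenever $c \neq 1/m$, and in particular whenever $c > 1$.
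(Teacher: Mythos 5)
Your proof is correct, but it takes a genuinely different route from the paper's. You reduce the lemma to Lemma~\ref{LEM:MultivariateIrreducible} plus Hilbert's irreducibility theorem (Theorem~\ref{THM:HilbertIT}), varying $c$ over $\QQ\cap(1,\infty)$ to harvest infinitely many pairs $(\alpha,c)$; the bookkeeping (irreducibility of $F$ versus $F_m$, the nonvanishing leading coefficient $1-cm$, and the fact that the specialized values are the squares $\alpha_i^2$, which a good Hilbert specialization automatically forces to be nonzero for $m\geq 2$) all checks out. The paper instead proves this lemma by an entirely elementary and \emph{explicit} construction: it passes to the reversed polynomial $G(v)=v^mF(1/v)$, specializes $\alpha_i=x^{i-1}$ so that by Lemma~\ref{Lemma:IsThisKnown} every coefficient except the leading and trailing ones is divisible by $\Phi_m(x)$, then chooses $x=\ell m$, a prime $p\mid\Phi_m(x)$, and $c$ with $1-mc=bp$, $p\nmid b$, and concludes by Eisenstein's criterion (Theorem~\ref{th:eisenstein}). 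This lemma sits in the subsection titled ``An explicit specialization,'' whose entire purpose is to replace the Hilbert-irreducibility existence argument of the preceding subsection with concrete, computable choices of $\alpha$ and $c$ (and whose method is then extended in Lemma~\ref{lem:tower-alg} via the generalized Eisenstein criterion). So while your argument establishes the literal statement, it buys only existence by recycling the non-constructive machinery the paper is deliberately trying to avoid here, whereas the paper's Eisenstein argument buys explicit specializations and sets up the technique actually needed for the subsequent lemmas.
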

\begin{proof}
  %We notice that for $F$ to be irreducible it should hold 
  %$\alpha_i \not= \alpha_j$, for all $i\not=j$.
  %\marginpar{\texttt{I have trouble understanding how $c$ is picked so
  %    that $p$ only divides $1+m c$ once. Niels.}}
  We consider the polynomial $G(v) = v^m F(1/v)$. Obviously $F(v)$ is
  irreducible if and only if $G(v)$ is.  Moreover, we let $\alpha_i=x^{i-1}$,
  for $1\leq i \leq m$, for $x \in \ZZ_+$ to be specified in the
  sequel.  By abuse of
  notation we also denote this polynomial as $G(v)$, which is
  \begin{displaymath}
    G(v) = \sum_{k=0}^{m}{ (1-(m-k)c)c^k \cdot S_{k}(x) \cdot v^{k}}.
  \end{displaymath}

  By %Cor.~\ref{cor:GCD} 
  Lemma~\ref{Lemma:IsThisKnown}
  all the coefficients of $G(v)$, except the
  leading and the trailing coefficient, have $\Phi_m(x)$ as a common
  divisor.
  Now specialize to $x = \ell m$ with $\ell \gg 0$ and $\ell\in
  \NN$. Let $p$ be a prime divisor in $\Phi_m(x)$. Then $p\nmid
  x$. There exists infinitely many $c\in \NN$, such that $p\mid 1 - m c$,
  since $p\nmid m$. By possibly replacing $c$ by $c + p$ we may assume
  that $1 - m c = b p$, where $p\nmid b$.

  With this choice of $c$, $p\nmid c$ and $p$ divides the constant
  term of $G(v)$ precisely once.  Moreover, $p$ is not a divisor
  of the leading coefficient of $G(v)$, which is $x^{m(m-1)/2} c^m$.

  We conclude using Eisenstein's criterion (Theorem~\ref{th:eisenstein})
  all but   that $G(v)$, and hence $F(v)$, is irreducible for this class of (infinite)
  specializations.
\end{proof}

\begin{lemma}
  Let $F_j(v)$ as in (\ref{eq:F-poly}), i.e.
  \begin{equation}
    F_j(v) = \sum_{k=0}^m E_{m-k}(a_{1j}, \dots, a_{mj})(1-c_j k) c_j^{m-k} v^k \enspace,
    \label{eq:F-polys}
  \end{equation}
  where $1 \leq j \leq n$.
  Let $\gamma_j$ be any root of $F_j(v)$, then 
  there is an infinite number of specializations of $a_{ij}$ and $c_j$, such that 
  \begin{displaymath}
    [\QQ(\gamma_1, \dots, \gamma_n):\QQ] = m^n.  
  \end{displaymath}
  \label{lem:tower-alg}
\end{lemma}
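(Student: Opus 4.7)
The plan is to induct on $n$. The base case $n = 1$ is precisely the preceding lemma, which supplies infinitely many specializations of $a_{i,1}$ and $c_1$ making $F_1(v)$ irreducible in $\QQ[v]$, so $[\QQ(\gamma_1):\QQ] = m$. For the inductive step, suppose specializations have been chosen so that $K_{n-1} := \QQ(\gamma_1, \ldots, \gamma_{n-1})$ satisfies $[K_{n-1}:\QQ] = m^{n-1}$, and let $\mathcal{O}_{K_{n-1}}$ be its ring of integers. It suffices to produce infinitely many specializations of $a_{i,n}$ and $c_n$ such that $F_n(v)$ is irreducible in $K_{n-1}[v]$: then $[K_{n-1}(\gamma_n):K_{n-1}] = m$ and multiplicativity of degrees gives $[K_n:\QQ] = m \cdot m^{n-1} = m^n$.

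To achieve irreducibility of $F_n$ over $K_{n-1}$, I would re-run the Eisenstein argument from the preceding lemma but over the Dedekind domain $\mathcal{O}_{K_{n-1}}$. First substitute $\alpha_{i,n} = x^{i-1}$ with $x = \ell m$ for $\ell \in \NN$ to be chosen, so that the middle coefficients of $G_n(v) := v^m F_n(1/v) \in \ZZ[v]$ are divisible in $\ZZ$ by $\Phi_m(x)$, thanks to Lemma~\ref{Lemma:IsThisKnown}. Then pick a rational prime $p$ which simultaneously divides $\Phi_m(x)$, does not divide $mx$, and is unramified in $\mathcal{O}_{K_{n-1}}$; the last condition excludes only finitely many primes in the fixed number field $K_{n-1}$, and varying $\ell$ produces arbitrarily large $\Phi_m(\ell m)$ and hence an unbounded supply of admissible $p$. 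Let $\mathfrak{p}$ be any prime ideal of $\mathcal{O}_{K_{n-1}}$ lying over $p$; unramification gives $v_{\mathfrak{p}}(p) = 1$. Finally choose $c_n \in \NN$ with $p \nmid c_n$ and $1 - mc_n = bp$, $p \nmid b$, exactly as in the preceding lemma (possible because $p \nmid m$, after replacing $c_n$ by $c_n + p$ if necessary).

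With these choices I would verify the hypotheses of the generalized Eisenstein criterion (Theorem~\ref{th:eisenstein}) at $\mathfrak{p}$: the leading coefficient of $G_n$ lies outside $\mathfrak{p}$ because its $\ZZ$-factorization involves only powers of $x$ and $c_n$, both coprime to $p$; every middle coefficient lies in $\mathfrak{p}$ because it is a $\ZZ$-multiple of $p$; and the constant term has $\mathfrak{p}$-adic valuation exactly $1$ because $v_p(1-mc_n)=1$ and unramification ensures $v_{\mathfrak{p}}=v_p$ on $\ZZ$. Hence $G_n$, and so $F_n$, is irreducible in $K_{n-1}[v]$. Since $\ell$, the prime $p$, and $c_n$ each admit infinitely many valid choices, the set of admissible specializations is infinite, completing the induction.

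The main obstacle is the Eisenstein step over $\mathcal{O}_{K_{n-1}}$: one must ensure that the arithmetic data established in $\ZZ$ (divisibility of coefficients by $p$, and the constant term being divisible by $p$ to first order) transfers faithfully to the prime $\mathfrak{p}$. This is exactly why unramification of $p$ in $K_{n-1}$ must be imposed, since otherwise $v_{\mathfrak{p}}(p) \geq 2$ would destroy the ``constant term has valuation one'' hypothesis. The cheap but crucial observation is that only finitely many primes ramify in any fixed number field, so this constraint leaves infinitely many primes available at every inductive stage.
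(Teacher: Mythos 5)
Your proof is correct, but it takes a genuinely different route from the paper's. The paper works \emph{globally and simultaneously}: it fixes a single $x=\ell m$ such that $\Phi_m(x)$ has $n$ distinct prime factors $p_1,\dots,p_n$ coprime to $m$ (Lemma~\ref{lem:get-primes}), tunes each $c_j$ so that $p_j$ divides the constant term $1-mc_j$ of $G_j$ exactly once while $p_j$ does not divide the constant terms of the other $G_i$, and then invokes the generalized Eisenstein criterion of \cite{j-jnt-1990} (Theorem~\ref{th:gen_eisen}, one non-archimedean valuation per polynomial) to read off $[\QQ(\gamma_1,\dots,\gamma_n):\QQ]=m^n$ in a single step. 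You instead induct on $n$, fixing $K_{n-1}$ first and proving $F_n$ irreducible over $K_{n-1}$ by classical Eisenstein at a prime $\mathfrak{p}$ of $\mathcal{O}_{K_{n-1}}$ above a rational prime $p\mid\Phi_m(x)$; your insistence that $p$ be unramified in $K_{n-1}$ is exactly the right hypothesis to transfer the condition $v_p(1-mc_n)=1$ to $v_{\mathfrak{p}}$, and the observation that only finitely many primes ramify keeps the supply infinite at every stage. The trade-off: your argument replaces the paper's appeal to the multi-valuation criterion of \cite{j-jnt-1990} (and its cross-conditions $p_j\nmid 1-mc_i$ for $i\neq j$) with standard algebraic number theory (Eisenstein/Newton polygons over a Dedekind domain, finiteness of the ramification locus), and it decouples the specializations so that different positions may use different values of $x$, whereas the paper's single-$x$ construction is more uniform. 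Two small points: the criterion you want to cite for the relative step is not Theorem~\ref{th:eisenstein} verbatim (that is stated over $\ZZ$) but its standard Dedekind-domain version, and like the paper you still need the non-monic fix-up of the remark following Theorem~\ref{th:gen_eisen}, since the leading coefficient of $G_n$ is $c_n^m x^{m(m-1)/2}$ rather than $1$; both are routine.
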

\begin{proof}
  We consider
  the specialization $a_{ij} = x^{i-1}$, where $1\leq i \leq m$, 
  $1\leq j \leq n$, for a $x \in \ZZ_+$ to be specified in the sequel.

  As before, we let $S_k(m) = E_k(1, x, \dots, x^{m-1})$, and we perform
  the transformations $G_j(v) = v^m F_j(1/v)$, where $1 \leq j \leq n$,
  and we obtain the polynomials
  \begin{displaymath}
    G_j(v) = \sum_{k=0}^{m}{ (1- (m-k)c_j) \, c_j^k \cdot S_{k}(m) \cdot v^{k}}.
  \end{displaymath}

  We pick a $x \in \ZZ_+$ so that $\Phi_m(x)$ has at least $n$ distinct
  prime factors, $p_1, \dots, p_n$, that are relative prime to $m$.
  For such a procedure we refer to Lemma~\ref{lem:get-primes}.
  For $1 \leq j \leq n$, we choose $c_j$ 
  so that the equation $1-m c_j = b_j p_j$ is satisfied for
  an integer $b_j$, and $p_j$ is not a divisor of $b_j$.

  All, but the leading and trailing, coefficients of $G_j$
  have $\Phi_m(x)$ as their common GCD, according to Lemma~\ref{Lemma:IsThisKnown},
  and hence they are $0 \mod p_j$, for  $1 \leq j \leq n$.

  To summarize, for the $n$ primes, $p_j$, it holds:
  \begin{itemize}
  \item None of them divides any of the leading coefficients of $G_j$.
  \item For each $j$, $p_j$ divides the constant term of
    $G_j(v)$, $p_j^2$ does not,
    and $p_j$ does not divide any of the constant term of the other polynomials.
  \item For all $G_j$, all the coefficients but the leading and the
    constant term, are divided by $p_j$.
  \end{itemize}
  Hence, according to Theorem~\ref{th:gen_eisen}, if $\gamma_j$ is a root of
  $G_j$, then 
  \begin{displaymath}
    [\QQ(\gamma_1, \dots, \gamma_n):\QQ] = m^n.  
  \end{displaymath}
\end{proof}
\begin{lemma}
  Let $F_j(v)$ as in (\ref{eq:F-polys}), and 
  let $\gamma_j$ be any root of $F_j(v)$, then 
  there is an infinite number of specializations of $a_{ij}$ and $c_j$, such that 
  for all but a finite number of $k_j \in \QQ$, it holds 
  \begin{displaymath}
    [\QQ(\gamma_1 + k_2 \gamma_2 + \cdots + k_n \gamma_n):\QQ] = m^n \enspace, 
  \end{displaymath}
  where $1 \leq j \leq n$.
\label{LEM:LinearCombinationExtension}
\end{lemma}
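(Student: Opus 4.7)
The plan is to combine Lemma~\ref{lem:tower-alg} with a classical inductive form of the primitive element theorem. By Lemma~\ref{lem:tower-alg}, there exist (infinitely many) specializations of $a_{ij}$ and $c_j$ such that the compositum $K = \QQ(\gamma_1, \ldots, \gamma_n)$ has degree exactly $m^n$ over $\QQ$; fix any such specialization. Since $\QQ$ has characteristic zero, the extension $K/\QQ$ is separable, so the usual primitive-element argument is available, and by the tower law one automatically has $[\QQ(\gamma_1,\ldots,\gamma_j):\QQ] = m^j$ for every $j \leq n$.

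First I would recall the classical fact I wish to iterate: if $L = F(\alpha,\beta)$ is a finite separable extension of a field $F$ of characteristic zero, then $F(\alpha + k\beta) = F(\alpha,\beta)$ for all but finitely many $k \in F$. The standard proof lists the distinct $F$-embeddings $\sigma_1,\ldots,\sigma_r$ of $L$ into an algebraic closure and observes that $\alpha+k\beta$ fails to be a primitive element exactly when $\sigma_i(\alpha)-\sigma_j(\alpha) + k\bigl(\sigma_i(\beta) - \sigma_j(\beta)\bigr) = 0$ for some $i \neq j$; the coefficient of $k$ cannot vanish together with the constant term (else $\sigma_i = \sigma_j$ on $F(\alpha,\beta)$), so for each pair $(i,j)$ there is at most one forbidden $k$, giving at most $\binom{r}{2}$ bad values.

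Next I would run an induction on $j$ to build the primitive element linearly in the $\gamma_j$'s. Set $\beta_1 = \gamma_1$, which generates $\QQ(\gamma_1)$ of degree $m$. Assuming that $\beta_{j-1} = \gamma_1 + k_2\gamma_2 + \cdots + k_{j-1}\gamma_{j-1}$ generates $\QQ(\gamma_1,\ldots,\gamma_{j-1})$, apply the recalled fact with $F = \QQ$, $\alpha = \beta_{j-1}$, $\beta = \gamma_j$, $L = \QQ(\beta_{j-1},\gamma_j) = \QQ(\gamma_1,\ldots,\gamma_j)$. This yields a finite set $E_j \subset \QQ$ (depending on the previously chosen $k_2,\ldots,k_{j-1}$) such that for every $k_j \in \QQ \setminus E_j$ the element $\beta_j := \beta_{j-1} + k_j\gamma_j$ generates $\QQ(\gamma_1,\ldots,\gamma_j)$. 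After $n-1$ successive applications one reaches $\beta_n = \gamma_1 + k_2\gamma_2 + \cdots + k_n\gamma_n$ with $\QQ(\beta_n) = K$, hence $[\QQ(\beta_n):\QQ] = m^n$.

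The only subtle point is the interpretation of ``all but finitely many $k_j \in \QQ$'': the argument naturally gives, for each $j$ in turn and for any admissible choice of $k_2,\ldots,k_{j-1}$, a finite exceptional set $E_j \subset \QQ$ of forbidden $k_j$. Equivalently, the bad tuples $(k_2,\ldots,k_n) \in \QQ^{n-1}$ lie in a finite union of proper affine hyperplanes, so picking the $k_j$ one coordinate at a time one need only avoid finitely many rationals at each step. I do not expect any genuine obstacle beyond phrasing this sequential ``generic choice'' cleanly.
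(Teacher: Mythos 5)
Your proposal is correct and follows essentially the same route as the paper: invoke Lemma~\ref{lem:tower-alg} for the degree of the compositum, then build the primitive element inductively as $\beta_j = \beta_{j-1} + k_j\gamma_j$, avoiding at each step the finitely many $k_j$ for which two embeddings would collide. The paper's version additionally counts the forbidden values explicitly (quotients of differences of conjugates) to exhibit concrete integer choices for the $k_j$, but the underlying argument is identical.
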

\begin{proof}
  The existence of $k_i$ is guaranteed from the existence of primitive
  element \cite{Wae}.
  That is for all but a finite number of values  of $k_j \in \QQ$ it holds 
  $\QQ(\gamma_1, \dots, \gamma_n) = \QQ(\gamma_1 + k_2 \gamma_2 + \cdots + k_n \gamma_n)$,
  and from Lemma~\ref{lem:tower-alg} we conclude for the degree.
  
  To find explicit values for $k_i$ we modify slightly the proof of the existence of
  primitive element \cite{Wae}.  Let $\gamma_{ji}$ be all the roots of
  $F_j(v)$, where $1 \leq i \leq m$. It is without loss of generality
  to assume that $\gamma_i = \gamma_{j1}$.

  Let $\beta_2 = \gamma_1 + k_2\gamma_2$.
  For $\QQ(\beta_2) = \QQ(\gamma_1, \gamma_2)$ to hold, it should be 
  \begin{displaymath}
    k_2 \not= \frac{\gamma_{11} -\gamma_{1i}}{\gamma_{2 \ell} -\gamma_{21}} \enspace,
  \end{displaymath}
  for all $1\leq i \leq m$ and $1<\ell\leq m$, and hence 
  there are at most $(m-1)m$ forbidden values for $k_2$.
  This means that there is at least one positive integer between $0$ and $m^2$, 
  that we can assign $k_2$ to, so that
  $\QQ(\gamma_1 + k_2 \gamma_2) = \QQ(\beta_2) = \QQ(\gamma_1, \gamma_2)$.

  If we let $\beta_3 = \beta_2 + k_3 \gamma_3 = \gamma_1 + k_2 \gamma_2 + k_3\gamma_3$, then 
  for $\QQ(\beta_3) = \QQ(\gamma_1, \gamma_2, \gamma_3)$ to hold, it should be 
  \begin{displaymath}
    k_3 \not= \frac{\beta_{21} -\beta_{2i}}{\gamma_{3 \ell} -\gamma_{31}}=
    \frac{(\gamma_{11} -\gamma_{1,i_1}) + k_2(\gamma_{21} -\gamma_{2,i_2})}{\gamma_{3 \ell} -\gamma_{31}}
    \enspace,
  \end{displaymath}
  for all $1\leq i \leq m^2$, $1<i_1\leq m$, $1<i_2\leq m$ and
  $1<\ell\leq m$.  Hence there are at most $(m-1) m^2)$ forbidden
  values for $k_3$, and so there at least two integers between 0 and
  $(m-1)m + (m-1)m^2 = (m-1)^2 m$, that $k_2$ and $k_3$ could be
  assign to, so that $\QQ(\beta_3) = \QQ(\gamma_1, \gamma_2, \gamma_3)$.
  
  We continue similarly, and eventually we let 
  \begin{displaymath}
    \beta = \beta_n = \beta_{n-1} + k_n \gamma_n =\gamma_1 + k_2 \gamma_2 + \cdots + k_n \gamma_n
    \enspace.
 \end{displaymath}
 We consider
 \begin{displaymath}
   k_n \not= \frac{\beta_{n-1,1} -\beta_{n-1,i}}{\gamma_{n \ell} -\gamma_{n1}}=
    \frac{(\gamma_{11} -\gamma_{1,i_1}) + k_2(\gamma_{21} -\gamma_{2,i_2})
    + \cdots + k_{n-1}(\gamma_{n-1,1} -\gamma_{n-1,i_{n-1}}) }{\gamma_{n \ell} -\gamma_{n1}}
    \enspace,
  \end{displaymath}
  for all $1\leq i \leq m^{n-1}$,
  $1 \leq i_{\ell} \leq m$, and $1<\ell\leq m$.
  There are at $m^{n-1}(m-1)$ forbidden values for $k_n$.
  
  Overall, there is at least $n-1$ integers between
  $0$ and $m(m^{n-1} -1) \sim m^n$ 
  that $k_2, \dots, k_n$, could be assigned to, so that
  \begin{displaymath}
    \QQ(\gamma_1, \dots, \gamma_n) = \QQ(\gamma_1 + k_2 \gamma_2 + \cdots + k_n \gamma_n) \enspace .
  \end{displaymath}
  Using the previous lemma we conclude for the degree.
\end{proof}

Now combining Lemma~\ref{lem:tower-alg} and
Lemma~\ref{LEM:LinearCombinationExtension} we can immediately turn the
proof of Theorem~\ref{THM:DegreeLowerBound} into an explicit proof of
existence.

\subsubsection{Auxiliary results}

% \begin{corollary}
%   \label{cor:GCD}
%   $ \gcd( S_1(n), S_2(n), \dots, S_{n-1}(n)) = \Phi_n(x)$.
% \end{corollary}
% \begin{proof}
%   The claim follows directly from the following lemma.
% \end{proof}

A similar lemma appears in \cite{s-b2-1999}.
\begin{lemma}
  \label{lem:sym}
  Let $E_k$ be the elementary symmetric polynomials in $n$ variables
  $a_1, \dots,  a_n$, where $0 \leq k \leq n$, that is 
  $E_k(a_1, \dots, a_n) = \sum_{1\leq i_1 < i_2 < \cdots  < i_k \leq n}{a_{i_1} \cdots a_{i_k}}$.

  Let $S_k(n) = E_k(1, x, \dots, x^{n-1})$, 
  where $1\leq k \leq n$, then it holds that
  % \begin{displaymath}
  %   \begin{aligned}
  %     S_1(n) &= \frac{x^n-1}{x-1}= \prod_{i|n \atop i\not=1}{\Phi_i(x)}, \\
  %     S_2(n) &= x \frac{(x^n-1)(x^{n-1}-1)}{(x-1)^2(x+1)} = 
  %     \Paren{\prod_{i|n \atop i\not=1}{\Phi_i(x)} \prod_{j|(n-1) \atop j\not=1}{\Phi_j(x)}} /\Phi_2(x),\\
  %   \end{aligned}
  % \end{displaymath}
  % and in general
  \begin{displaymath}
    \begin{aligned}
      S_k(n) & = x^{k(k-1)/2} \prod_{\ell=1}^{k} (x^{n-\ell+1} -1) / \Phi_{\ell}^{\floor{k/\ell}}(x) \\
&=x^{k(k-1)/2} \prod_{i_1|n}{\Phi_{i_1}(x)} \prod_{i_2|(n-1)}{\Phi_{i_2}(x)} \cdots \prod_{i_k|(n-k+1)}{\Phi_{i_k}(x)} /
  \Paren{ \prod_{\ell=1}^{k} \Phi_{\ell}^{\floor{k/\ell}}(x) },
    \end
{aligned}
  \end{displaymath}
  where $\Phi_{\ell} = \Phi_{\ell}(x)$ is the $\ell$-th cyclotomic polynomial.
\end{lemma}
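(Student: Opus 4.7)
The plan is to recognize $S_k(n) = E_k(1, x, \ldots, x^{n-1})$, up to a monomial factor, as a Gaussian binomial coefficient, and then factor the denominator of the latter via cyclotomic polynomials. First, I would establish the classical $q$-binomial theorem
\[
\prod_{i=0}^{n-1}(1 + t x^i) = \sum_{k=0}^{n} x^{k(k-1)/2} \binom{n}{k}_{x} t^{k},
\]
where $\binom{n}{k}_{x} := \prod_{\ell=1}^{k} (x^{n-\ell+1}-1)/(x^\ell - 1)$ is the Gaussian binomial coefficient. This follows by a short induction on $n$ using the recursion $\binom{n+1}{k}_{x} = \binom{n}{k}_{x} + x^{n-k+1}\binom{n}{k-1}_{x}$. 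Comparing the coefficient of $t^{k}$ on the left-hand side with the definition of $E_{k}$ immediately yields
\[
S_{k}(n) \;=\; x^{k(k-1)/2}\binom{n}{k}_{x} \;=\; x^{k(k-1)/2}\cdot\frac{\prod_{\ell=1}^{k}(x^{n-\ell+1}-1)}{\prod_{\ell=1}^{k}(x^{\ell}-1)}.
\]

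Next, I would rewrite the denominator using the cyclotomic factorization $x^{j} - 1 = \prod_{d \mid j} \Phi_{d}(x)$. Swapping the order of the resulting double product gives
\[
\prod_{\ell=1}^{k}(x^{\ell}-1) \;=\; \prod_{\ell=1}^{k}\prod_{d \mid \ell}\Phi_{d}(x) \;=\; \prod_{d=1}^{k}\Phi_{d}(x)^{\lfloor k/d \rfloor},
\]
since $\Phi_{d}$ contributes one factor for each multiple of $d$ in $\{1,\dots,k\}$, and there are exactly $\lfloor k/d\rfloor$ such multiples. Substituting this back into the formula for $S_k(n)$ yields the first form stated in the lemma. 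The second form is then obtained by applying the same cyclotomic factorization to each numerator factor $x^{n-\ell+1}-1 = \prod_{i_{\ell}\mid(n-\ell+1)}\Phi_{i_{\ell}}(x)$.

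There is no serious obstacle; the proof is a short chain of standard $q$-analog identities together with one elementary combinatorial count. The only step requiring a little care is the multiplicity count $\lfloor k/\ell \rfloor$ in the denominator, which is just the number of multiples of $\ell$ in $\{1,\dots,k\}$. If one prefers to avoid invoking the $q$-binomial theorem as a black box, the identity $S_k(n) = x^{k(k-1)/2}\binom{n}{k}_{x}$ admits a direct inductive proof via the recurrence $S_{k}(n+1) = S_{k}(n) + x^{n} S_{k-1}(n)$, obtained by partitioning the index sets defining $S_{k}(n+1)$ according to whether the largest index $n$ is included in the chosen subset.
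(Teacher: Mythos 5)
Your proof is correct, and it reaches the result by a different route than the paper. The paper proves the cyclotomic-product formula directly by double induction: it verifies the base cases $S_1(n)$ and $S_k(k)$, uses the recurrence $S_k(n) = S_k(n-1) + x^{n-1}S_{k-1}(n-1)$, and then manipulates the cyclotomic factors inside the inductive step, invoking $x^k-1=\prod_{d\mid k}\Phi_d(x)$ along the way. You instead first identify $S_k(n)=x^{k(k-1)/2}\binom{n}{k}_x$ via the $q$-binomial theorem (or, as you note, via the very same recurrence the paper uses), and only afterwards convert to the cyclotomic form by the purely multiplicative observation that $\prod_{\ell=1}^{k}(x^{\ell}-1)=\prod_{d=1}^{k}\Phi_d(x)^{\lfloor k/d\rfloor}$, since $\Phi_d$ divides $x^{\ell}-1$ exactly when $d\mid\ell$ and there are $\lfloor k/d\rfloor$ such $\ell$ in $\{1,\dots,k\}$. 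The underlying recursion is thus shared, but your organization is more modular: the inductive work is confined to a classical identity quoted (or reproved) in its clean Gaussian-binomial form, the passage to cyclotomic polynomials becomes a one-line counting argument rather than part of the induction, and the identification of $S_k(n)$ as a $q$-binomial coefficient makes its polynomiality and the exponents $\lfloor k/\ell\rfloor$ transparent. The paper's version is self-contained and avoids citing the $q$-binomial theorem, at the cost of a more intricate inductive computation. Both arguments are complete; your multiplicity count for the denominator, which is the only step needing care, is handled correctly.
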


\begin{proof}
  We prove the formula using double induction.

  Evidently the formula holds for $S_1(1)$, 
  and we can easily prove that it holds for $S_1(n)$, for every $n$.  
  It also holds for $S_k(k)$ for all $k$.

  For the definition of the elementary symmetric polynomials it holds that 
  $E_k(a_1, \dots, a_n) = E_k(a_1, \dots, a_{n-1}) + a_n E_{k-1}(a_1, \dots, a_{n-1})$, 
  and hence
  $S_k(n) = S_k(n-1) + x^{n-1} S_{k-}(n-1)$.
  We assume that the formula holds for $S_k(n-1)$ and $S_{k-1}(n-1)$
  and we prove that it holds for $S_k(n)$.

  \begin{displaymath}
    \begin{aligned}
      S_k(n) &= S_k(n-1) + x^{n-1}S_{k-1}(n-1) \\
      &=x^{k(k-1)/2} \prod_{\lambda=1}^{k} \frac{x^{n-\lambda}-1}{\Phi_{\lambda}^{\floor{k/\lambda}}} 
      + x^{n-1} \cdot x^{(k-1)(k-2)/2} \prod_{\mu=1}^{k-1} \frac{x^{n-\mu}-1}{\Phi_{\mu}^{\floor{(k-1)/\mu}}} \\
      &=x^{k(k-1)/2} \frac{\prod_{\lambda=1}^{k-1}(x^{n-\lambda}-1)}{\prod_{\lambda=1}^{k}\Phi_{\lambda}^{\floor{k/\lambda}}}
      (x^{n-k} - 1 + x^{n-k} \prod_{\mu|k}{\Phi_{\mu}}) \\
      &=x^{k(k-1)/2} \frac{\prod_{\lambda=1}^{k-1}(x^{n-\lambda}-1)}{\prod_{\lambda=1}^{k}\Phi_{\lambda}^{\floor{k/\lambda}}}
      (x^{n-k} - 1 + x^{n-k} (x^k -1)) \\
      &= x^{k(k-1)/2} \prod_{\lambda=1}^{k} \frac{(x^{n-\lambda+1} -1)}{\Phi_{\lambda}^{\floor{k/\lambda}}}.
    \end{aligned}
  \end{displaymath}
  
  The formula follows if we also consider that 
  $x^{n}-1 = \prod_{\ell|n}\Phi_{\ell}$.
\end{proof}

\begin{theorem}[Eisenstein's criterion]
  \label{th:eisenstein}
  Let $f(x) = a_n x^n + \cdots + a_1 x + a_0$ be a polynomial with
  integer coefficients.
  Let $p$ be a prime such that (i) $p$ divides each $a_i$ for 
  $1 \leq i < n$, (ii) $p$ does not divide $a_n$, and  (iii) $p^2$ does
  not divide $a_0$, then $f$ is irreducible over the rational numbers. 
\end{theorem}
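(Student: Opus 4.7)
The plan is to carry out the classical argument by contradiction, assuming $f$ factors nontrivially over $\QQ$ and then deriving an impossible congruence. First I would reduce to factorizations over $\ZZ[x]$: by Gauss's lemma, if $f$ is reducible over $\QQ$ then it is reducible over $\ZZ$, so I may assume $f = g h$ with $g(x) = b_r x^r + \cdots + b_0$ and $h(x) = c_s x^s + \cdots + c_0$ in $\ZZ[x]$ where $r, s \geq 1$ and $r + s = n$.

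Next I would analyze the constant and leading terms. From $a_0 = b_0 c_0$, the hypothesis $p \mid a_0$ together with $p^2 \nmid a_0$ forces $p$ to divide exactly one of $b_0, c_0$; without loss of generality say $p \mid b_0$ and $p \nmid c_0$. From $a_n = b_r c_s$ and $p \nmid a_n$, I get $p \nmid b_r$ (and $p \nmid c_s$). So inside the coefficient sequence $b_0, b_1, \ldots, b_r$ there is a smallest index $k$ with $p \nmid b_k$, and this index satisfies $1 \leq k \leq r < n$.

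Now I would extract a contradiction by inspecting $a_k$. From $f = gh$ we have
\[
a_k = b_0 c_k + b_1 c_{k-1} + \cdots + b_{k-1} c_1 + b_k c_0,
\]
with the convention that $c_j = 0$ for $j > s$. By the minimality of $k$, every term except $b_k c_0$ is divisible by $p$, so $a_k \equiv b_k c_0 \pmod{p}$. Since $p \nmid b_k$ and $p \nmid c_0$, this gives $p \nmid a_k$. But $1 \leq k < n$, so hypothesis (i) says $p \mid a_k$, a contradiction. Hence no such factorization exists and $f$ is irreducible over $\QQ$.

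The argument is entirely routine and I expect no real obstacle; the only point requiring mild care is the invocation of Gauss's lemma to pass from a $\QQ[x]$-factorization to a $\ZZ[x]$-factorization, and the bookkeeping that $k \leq r < n$ so that hypothesis (i) actually applies to $a_k$. (I am reading the hypothesis (i) in the statement as $p \mid a_i$ for $0 \leq i < n$, since condition (iii) combined with the standard formulation uses $p \mid a_0$; if only $1 \leq i < n$ is assumed, then $p \mid a_0$ still follows from the factorization via $a_0 = b_0 c_0$ together with the divisibility chain, but the cleanest reading is the inclusive one.)
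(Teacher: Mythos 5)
Your proof is correct and is the standard textbook argument for Eisenstein's criterion (Gauss's lemma to reduce to a $\ZZ[x]$-factorization, then the minimal index $k$ with $p \nmid b_k$ yielding $p \nmid a_k$). The paper states this theorem as a classical auxiliary fact and gives no proof at all, so there is nothing to compare against; your write-up simply supplies the missing standard argument.

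One small caution about your final parenthetical: you are right that hypothesis (i) must be read as $p \mid a_i$ for $0 \leq i < n$ (the range ``$1 \leq i < n$'' in the paper's statement is a typo --- witness $x^2 - 1$, which satisfies the literal hypotheses for any odd $p$ yet is reducible). However, your fallback claim that ``$p \mid a_0$ still follows from the factorization'' is not right: without $p \mid a_0$ as a hypothesis you cannot conclude that $p$ divides one of $b_0, c_0$, and the argument (and the theorem) fails. Since your main proof uses the correct inclusive reading, this does not affect its validity.
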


\begin{theorem}[Generalized Eisenstein's criterion]
  \cite{j-jnt-1990}
  \label{th:gen_eisen}
  Let 
  \begin{displaymath}
    f_i(x) = x^{n_i} + a_{i,1}x^{n_i-1} + \cdots + a_{i, n_i},
  \end{displaymath}
  where $1 \leq i \leq s$ and all the coefficients of all the polynomials belong to $O$.

  If there exists non-archimedean valuations $v_1, v_2, \dots, v_s$ of $K$ such that 
  $t(v_1) = p_1, \dots, t(v_s)=p_s$ are distinct primes, and that
  \begin{displaymath}
    v_i(a_{i, n_i}) = 1, \quad v_i(a_{j, n_j}) = 0, \text{ and } v_i(a_{k,r}) \geq 1,
  \end{displaymath}
  where $1\leq i,j,k\leq s$, $i \not= j$, $1\leq r \leq n_k-1$, then,
  for any choice of the roots of $f_i(x)$, say $\gamma_i$, $1 \leq i
  \leq s$, we have
  \begin{displaymath}
    [K(\gamma_1, \dots, \gamma_s):K] = n_1 n_2 \cdots n_s.
  \end{displaymath}
\end{theorem}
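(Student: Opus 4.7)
The plan is a tower-of-extensions argument in which each polynomial $f_i$ is shown to remain irreducible when base-changed to the compositum of the other $\gamma_j$'s, by exploiting the local behaviour at the prime $p_i$. Since the trivial upper bound $[K(\gamma_1,\ldots,\gamma_s):K]\le n_1\cdots n_s$ follows from multiplicativity of degrees along any tower, the task reduces to proving the matching lower bound.

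First I would observe that the hypotheses on $v_i$ applied to the coefficients of $f_i$ are exactly the Eisenstein conditions: the leading coefficient is a $v_i$-unit, the intermediate coefficients have $v_i$-valuation at least $1$, and the constant term has $v_i$-valuation exactly $1$. Classical Eisenstein then gives that $f_i$ is irreducible over $K$ and that $K(\gamma_i)/K$ is totally ramified of degree $n_i$ at $v_i$. Next, for $j\ne i$ the hypotheses $v_i(a_{j,n_j})=0$ and $v_i(a_{j,r})\ge 1$ for $1\le r\le n_j-1$ force the Newton polygon of $f_j$ with respect to $v_i$ to be a single flat segment of slope $0$, so every root of $f_j$ in $\bar K$ is a $v_i$-unit, regardless of which extension of $v_i$ one chooses.

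The heart of the argument is to prove, by induction on $s$, that $f_i$ stays irreducible over $M_i := K(\gamma_1,\ldots,\widehat{\gamma_i},\ldots,\gamma_s)$, yielding $[M_i(\gamma_i):M_i]=n_i$ and hence $[K(\gamma_1,\ldots,\gamma_s):K]\ge n_1\cdots n_s$. I would pass to the $v_i$-adic completion $K_{v_i}$ and use that every adjoined $\gamma_j$ ($j\ne i$) is a $v_i$-unit, so adjoining them to $K_{v_i}$ cannot produce an element whose $w$-valuation is smaller than $w(a_{i,n_i})=1$ in a way that destroys Eisenstein for $f_i$. Concretely, one checks that for some chosen extension $w$ of $v_i$ to $M_i$, $f_i$ remains Eisenstein with respect to $w$, producing a fresh totally ramified factor of degree $n_i$.

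The main obstacle is that the extension of $v_i$ to $M_i$ need \emph{not} be unramified: for instance $v_2$ already ramifies in $\QQ(\sqrt 3)$ even though $\sqrt 3$ is a $v_2$-unit, so one cannot naively claim that $v_i$ stays a uniformiser valuation after adjoining the other $\gamma_j$'s. The remedy is to work with the local decomposition $L\otimes_K K_{v_i}\cong \prod_w L_w$ and observe that each local factor $L_w$ contains the totally ramified extension $K_{v_i}(\gamma_i)$ of degree $n_i$; this forces $n_i\mid e(w/v_i)$ for every prime $w$ of $L$ above $v_i$, and hence $n_i \mid [L:K]$. One then needs to combine these divisibilities across the $s$ distinct primes $p_1,\ldots,p_s$: because the $p_i$ are pairwise distinct, the ramification-index contributions at different $v_i$ are multiplicatively independent of the $v_j$-structure for $j\ne i$, and a careful bookkeeping of the local degrees $[L_w:K_{v_i}]$ upgrades the individual divisibilities $n_i\mid [L:K]$ to the desired $\prod n_i \mid [L:K]$.
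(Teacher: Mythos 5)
The paper offers no proof of this statement: it is quoted verbatim from the cited reference, so your argument has to stand entirely on its own. Much of it does. The identification of the hypotheses on $f_i$ at $v_i$ as the Eisenstein conditions (so $f_i$ is irreducible over the completion $K_{v_i}$ and $K_{v_i}(\gamma_i)/K_{v_i}$ is totally ramified of degree $n_i$), the flat Newton polygon of $f_j$ at $v_i$ for $j\neq i$, your correct diagnosis of why "Eisenstein persists over $M_i$" fails (the $\QQ(\sqrt{3})/\QQ$ at $2$ example is exactly right), and the deduction via $L\otimes_K K_{v_i}\cong\prod_w L_w$ that $n_i\mid e(w/v_i)$ for every $w\mid v_i$, hence $n_i\mid [L:K]$ -- all of this is sound.

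The gap is the final sentence. No bookkeeping of the quantities you have on the page can upgrade the individual divisibilities $n_i\mid [L:K]$ to $\prod_i n_i\mid [L:K]$, because the $n_i$ need not be pairwise coprime; in the paper's application (Lemma~\ref{lem:tower-alg}) all $n_i$ equal the same $m$, so your conclusion yields only $m\mid [L:K]$, not $m^s$. Worse, the local facts you have derived are provably insufficient: for $f_1=x^2-2$, $f_2=x^2-3$ over $\QQ$, every constraint you establish ($2\mid e(w/v_2)$ and $2\mid e(w/v_3)$ for all relevant places, roots of $f_2$ being $2$-adic units, and so on) would be satisfied just as well if $\QQ(\sqrt{2},\sqrt{3})$ had degree $2$, since $2$ and $3$ are already totally ramified in a single quadratic field. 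So the statement "the contributions at distinct $p_i$ are multiplicatively independent" is precisely what has to be proved, and it does not follow from ramification indices alone. A genuinely new ingredient is required -- for instance an induction showing that $f_s$ stays irreducible over each completion of $K(\gamma_1,\dots,\gamma_{s-1})$ at a place above $v_s$, which presumably must exploit the full hypothesis that \emph{all} intermediate coefficients of \emph{all} the $f_k$ vanish modulo $v_i$ (so each $f_j$ reduces to the pure polynomial $x^{n_j}+\bar a_{j,n_j}$ mod $v_i$), a fact strictly stronger than the Newton-polygon statement and one your argument never uses.
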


\begin{remark}[A note on the leading coefficient]
  Theorem~\ref{th:gen_eisen} is a generalization of Eisenstein's
  criterion.  It assumes that all the polynomials are monic. However,
  it is without loss of generality to assume that the corresponding
  primes do not divide the leading coefficients.  This is so because
  we can transform a non-monic polynomial to a monic one, such
  that the theorem holds, as follows:
  Given a polynomial
  \begin{displaymath}
    g(x) = a_n x^n + a_{n-1}x^{n-1} + \cdots + a_1 x  + a_0,
  \end{displaymath}
  where $a_n \not=1$, we multiply all the coefficients by $a_n^{n-1}$ then 
  \begin{displaymath}
    g_1(x) 
    = a_n^{n} x^n + a_{n-1}a_n^{n-1}x^{n-1} + \cdots + a_n^{n-1}a_1 x  + a_n^{n-1}a_0.
  \end{displaymath}
  If we set $y = a_n x$, then 
  \begin{displaymath}
    g_2(y) = y^n + a_{n-1}y^{n-1} + \cdots + a_n^{n-2}a_1 y  + a_n^{n-1}a_0
    = y^n + c_{n-1} y^{n-1} + \cdots + c_1 y  + c_0
  \end{displaymath}
  where in order Eisenstein's criterion, or its generalization, to hold,
  it suffices a prime $p$ not to divide the leading coefficient of
  $g$, $a_{n}$. 
  If the roots of $g_2$ are $\beta_i$, then the roots of $g$ are
  $\gamma_i = \beta_i/a_n$.
\end{remark}

\begin{lemma}
  \label{lem:get-primes}
  Let $f\in \ZZ[x]$ be a non-constant polynomial and $a$ an integer,
  such that $f(a)$ has the prime divisors $p_1, \dots, p_k$ with
  $k\geq 1$.  Then there exists an integer $b$, such that $f(b)$ is
  divisible by at least $k+1$ primes.
\end{lemma}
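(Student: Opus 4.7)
The plan is to construct $b$ explicitly in the form $b = a + f(a)\cdot t$ for a well-chosen integer $t$, using the elementary identity that shifting the argument of $f$ by a multiple of $f(a)$ produces a value divisible by $f(a)$. Concretely, writing $f(x) - f(a) = (x-a)g(x)$ for some $g\in\ZZ[x]$ and substituting $x = a + f(a)t$ yields
\[
f\bigl(a + f(a)t\bigr) \;=\; f(a)\bigl(1 + t\cdot g(a+f(a)t)\bigr) \;=\; f(a)\cdot h(t),
\]
where $h(t) := 1 + t\cdot g(a+f(a)t)$ lies in $\ZZ[t]$. Since $f$ is non-constant and $f(a)\neq 0$ (the hypothesis $k\geq 1$ guarantees $f(a)$ has a prime divisor), the polynomial $h$ has the same degree as $f$ in the variable $t$, hence is non-constant, and satisfies $h(0)=1$.

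Next I would pick $t$ to be a nonzero multiple of $M := p_1 p_2\cdots p_k$, say $t = mM$. The crucial observation is that $h(t)\equiv h(0) = 1\pmod{p_i}$ for every $i\in\{1,\dots,k\}$ whenever $M\mid t$, so $h(t)$ is coprime to each $p_i$. Because $h$ is non-constant, $|h(mM)|\to\infty$ as $m\to\infty$, so for $m$ sufficiently large $|h(t)| > 1$, and then $h(t)$ must have at least one prime divisor $q$. By construction $q\notin\{p_1,\dots,p_k\}$.

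Setting $b := a + f(a)t$, the factorization $f(b) = f(a)\cdot h(t)$ shows that each $p_i$ divides $f(b)$ (through the factor $f(a)$) and the new prime $q$ divides $f(b)$ (through the factor $h(t)$), so $f(b)$ has at least $k+1$ distinct prime divisors. There is no real obstacle; the only points needing a brief check are that $h\in\ZZ[t]$ and that it is non-constant, both of which follow at once from the telescoping identity $(a+f(a)t)^{i}-a^{i} = f(a)t\cdot Q_i(t)$ with $Q_i\in\ZZ[t]$, together with the fact that the leading term of $h$ in $t$ inherits a nonzero coefficient from the leading coefficient of $f$ and a power of $f(a)$.
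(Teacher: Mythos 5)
Your argument is correct and is essentially the paper's proof: both shift the argument of $f$ affinely so that $f(a)$ factors out and the cofactor is $\equiv 1$ modulo each $p_i$, then use non-constancy of that cofactor to extract a new prime. The only cosmetic difference is that the paper substitutes $x\mapsto f(a)^2x+a$, making the cofactor $1+f(a)xh(x)$ automatically coprime to every $p_i$ for all integers $x$, whereas you substitute $x\mapsto a+f(a)t$ and arrange the congruence by restricting $t$ to multiples of $p_1\cdots p_k$.
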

\begin{proof}
We consider the polynomial
\[
g(x) := f(f(a)^2 x + a) = f(a) + f(a)^2 x h(x) = f(a)(1 + f(a) x h(x)),
\]
where $h(x)\in \ZZ[x]$ is non-zero. Notice that $g(x)$ is divisible by
$p_1, \dots, p_k$ for every $x\in \ZZ$.
Now the result follows, since a
prime dividing $1 + f(a) x h(x)$ for $x\in \ZZ$ cannot be among 
$p_1, \dots, p_k$.
\end{proof}

\section{Upper bounds for value and strategy iteration for concurrent reachability games }
\label{sec-si}

In this section we explain how the techniques of Section~\ref{sec-upp}
as used for Everett games, also yields an improved analysis of the
strategy improvement algorithm for concurrent reachability games.

Let $\Gamma$ be an Everett game, with $N$ positions. Assume that in
position $k$, the two players have $m_k \leq m$ and $n_k \leq m$ actions
available. Assume further that all payoffs and probabilities in
$\Gamma$ are rational numbers with numerators and denominators of
bitsize at most $\tau$. Further, let $\sigma$ be a fixed positive integer.

From Lemma~\ref{LEM:QFreeC1C2} we get the following statement.
\begin{lemma}
\label{LEM:C1C2point}
  There is a quantifier free formula with $2N$ free variables $v_1$
  and $v_2$ that expresses $v_1 \in C_1(\Gamma), v_2 \in
  C_2(\Gamma)$, and $\norm{v_1-v_2}^2 \leq 2^{-\sigma}$.
  
  The formula uses at most $(2N+1)+2(m+2)\sum_{k=1}^N
  \binom{n_k+m_k}{m_k}$ different polynomials, each of degree at most
  $m+2$ and having coefficients of bitsize at most $\max(\sigma,2(N+1)(m+2)\tau)$,
  where $m = \max_{k=1}^N\left(\min(n_k,m_k)\right)$.
\end{lemma}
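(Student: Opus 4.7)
The plan is to assemble the formula as a conjunction of three pieces and count contributions from each. First, I would apply Lemma~\ref{LEM:QFreeC1C2} directly to obtain a quantifier-free formula $\varphi_1(v_1)$ expressing $v_1 \in C_1(\Gamma)$, using at most $\sum_{k=1}^{N}\bigl(1+(m+2)\binom{n_k+m_k}{m_k}\bigr)$ distinct polynomials of degree at most $m+2$ with coefficient bitsize at most $2(N+1)(m+2)\bit(m)\tau$. Next I would give the analogous formula $\varphi_2(v_2)$ for $v_2 \in C_2(\Gamma)$: its construction is symmetric to that of $C_1$ (the definitions of $\dotgeq$ and $\dotleq$ are conjugate, and the predicate $\val(A^k(v_2)) \leq v_{2k}$ is handled via exactly the same case analysis over potential basis sets $B^k$ with the roles of the $F^{A^k+}_{B^k}$ and $F^{A^k-}_{B^k}$ inequalities swapped). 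So $\varphi_2$ contributes the same number of polynomials with the same degree and bitsize bounds as $\varphi_1$.

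The third conjunct is the single polynomial inequality
\[
\sum_{i=1}^{N} (v_{1,i} - v_{2,i})^2 - 2^{-\sigma} \leq 0,
\]
which introduces exactly one new polynomial, of degree $2 \leq m+2$, whose only non-unit coefficient is $2^{-\sigma}$; after clearing the denominator $2^{\sigma}$ we get integer coefficients of bitsize at most $\sigma$. Taking the conjunction $\varphi_1(v_1) \wedge \varphi_2(v_2) \wedge (\text{third conjunct})$ yields a quantifier-free formula in the $2N$ free variables $(v_1,v_2)$ as required.

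For the counts: summing the two $C_i$-contributions and the single norm polynomial gives at most
\[
2\sum_{k=1}^{N}\Bigl(1+(m+2)\binom{n_k+m_k}{m_k}\Bigr) + 1 = (2N+1) + 2(m+2)\sum_{k=1}^{N}\binom{n_k+m_k}{m_k},
\]
matching the stated bound. The maximum degree is $\max(m+2,2)=m+2$, and the maximum coefficient bitsize is $\max\bigl(\sigma,\,2(N+1)(m+2)\bit(m)\tau\bigr)$, which is what the lemma asserts. There is essentially no obstacle here; the only subtlety is checking that the $C_2$ case really does reuse the same polynomials (up to sign) as $C_1$ so that the doubling in the count is tight, and that the single $\ell_2$-distance polynomial does not inflate the degree bound, both of which are immediate.
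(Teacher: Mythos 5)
Your proposal is correct and is exactly the paper's route: the paper derives this lemma in one line from Lemma~\ref{LEM:QFreeC1C2} (two symmetric copies, one for each $C_i$, plus the single polynomial for $\norm{v_1-v_2}^2 \leq 2^{-\sigma}$), and your count $2\sum_{k}\bigl(1+(m+2)\binom{n_k+m_k}{m_k}\bigr)+1$ reproduces the stated polynomial count and degree bound. The only wrinkle is bookkeeping on coefficient bitsize: Lemma~\ref{LEM:QFreeC1C2} states $2(N+1)(m+2)^2\bit(m)\tau$, you quote $2(N+1)(m+2)\bit(m)\tau$, and the target lemma asserts $2(N+1)(m+2)\tau$ --- these three constants do not agree, but the discrepancy originates in the paper's own statements rather than in your argument, and it is immaterial to how the lemma is used downstream.
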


\begin{theorem}
  \label{th:e-strategy-prob}
  Let $\Gamma$ and $\sigma$ be as above. Let $\epsilon=2^{-\sigma}$. Then
  there exists $\epsilon$-optimal strategy of $\Gamma$ where each probability
  is a real algebraic number, defined by a polynomial of
  degree $m^{O(N)}$ and maximum coefficient bitsize $\max(\sigma,\tau) m^{O(N)}$.
\end{theorem}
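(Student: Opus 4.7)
The plan is to apply the semi-algebraic sampling machinery directly to the quantifier-free formula of Lemma~\ref{LEM:C1C2point}, avoiding any quantifier-elimination step; this is precisely what keeps the exponent at $O(N)$ rather than the $O(N^2)$ that appears in Proposition~\ref{PROP:sepEve}. First, I would invoke Theorem~13.11 (Sampling) of~\cite{BasuPollackRoy2006} on that formula, which has $2N$ free variables and uses polynomials of degree at most $m+2$ and coefficient bitsize at most $\max(\sigma,\,2(N+1)(m+2)\tau)$. This yields a univariate representation $(f, g_0, g_1, \dots, g_{2N}) \in \ZZ[x]^{2N+2}$ with $\gcd(f,g_0)=1$, such that for some real root $t$ of $f$ the point $(v_1, v_2) := (g_1(t)/g_0(t),\dots,g_{2N}(t)/g_0(t))$ satisfies $v_1 \in C_1(\Gamma)$, $v_2 \in C_2(\Gamma)$, and $\norm{v_1-v_2}^2 \leq \epsilon$. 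Since sampling on a quantifier-free formula in $2N$ free variables with polynomial degree $O(m)$ produces output polynomials of degree $m^{O(N)}$ and coefficient bitsize $\max(\sigma,\tau)\,m^{O(N)}$, the polynomials $f, g_0, \dots, g_{2N}$ lie within these bounds.

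Next, by Theorem~\ref{THM:CriticalVectorExistence}, the stationary strategy $x$ that plays an optimal strategy of the matrix game $A^k(v_1)$ in each position $k$ guarantees Player~I expected payoff at least $v_{1,k}$ when starting in position $k$. Since the same theorem asserts $v_1 \leq v^* \leq v_2$, we have $\norm{v_1 - v^*}_\infty \leq \norm{v_1 - v_2}_\infty \leq \norm{v_1 - v_2}_2 \leq \epsilon$, so $x$ is $\epsilon$-optimal for Player~I. An analogous construction using $v_2 \in C_2(\Gamma)$ gives an $\epsilon$-optimal strategy for Player~II.

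It remains to bound the algebraic degree and coefficient bitsize of each probability of $x$. By Lemma~\ref{LEM:LPbfs} and Cramer's rule, the probabilities in a chosen optimal basic feasible solution of $A^k(v_1)$ are ratios of the form $\det((M^{A^k(v_1)}_B)_i)/\det(M^{A^k(v_1)}_B)$. By Proposition~\ref{PROP:DetPolyBound} both determinants are polynomials of degree $O(m)$ in the coordinates of $v_1$, with coefficient bitsize polynomial in $m$, $N$ and $\tau$. Substituting $v_{1,k} = g_k(t)/g_0(t)$ and clearing denominators expresses each probability as $p(t)/q(t)$, where $p,q \in \ZZ[x]$ have degree $m^{O(N)}$ and coefficient bitsize $\max(\sigma,\tau)\,m^{O(N)}$. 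Applying Lemma~\ref{LEM:convert} to this rational function then yields a defining polynomial of the same asymptotic degree and bitsize for the algebraic number represented by each probability.

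The main obstacle I anticipate is the careful bookkeeping required to verify that the compositional blow-up across the three steps---sampling, Cramer-rule evaluation, and extraction of the minimal polynomial via Lemma~\ref{LEM:convert}---remains inside the $m^{O(N)}$ envelope and does not silently inflate the exponent to $O(N^2)$. Since each $O(m)$-degree polynomial substitution multiplies degrees by at most an $O(m)$ factor and each step multiplies bitsize by a polynomial factor, this combines with the $O(m)^{O(N)}$ bound delivered by the $2N$-variable sampling theorem to preserve a single $m^{O(N)}$ regime throughout.
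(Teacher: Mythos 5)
Your proposal follows essentially the same route as the paper's proof: apply the sampling theorem (Theorem 13.11 of Basu--Pollack--Roy) to the quantifier-free formula of Lemma~\ref{LEM:C1C2point} to get a univariate representation of $(v_1,v_2)$, then express the optimal strategies of $A^k(v_1)$ via Lemma~\ref{LEM:LPbfs}, Cramer's rule and Proposition~\ref{PROP:DetPolyBound}, and finally invoke Lemma~\ref{LEM:convert} to obtain the defining polynomials, so the approaches coincide. The only (harmless) slip is that the formula bounds $\norm{v_1-v_2}^2$ by $2^{-\sigma}$, so the $2$-norm is only bounded by $2^{-\sigma/2}$; this costs a factor of two in $\sigma$ for the optimality guarantee but does not affect the claimed degree and bitsize bounds.
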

\begin{proof}
  We use Theorem 13.11 of \cite{BasuPollackRoy2006} to find a
  univariate representation of the pair $(v_1,v_2)$ satisfying the
  formula from Lemma~\ref{LEM:C1C2point}. That is we have polynomials
  $f,g_0,\dots,g_{2N}$, with $f$ and $g_0$ coprime, such that the
  points $(v_1,v_2)$ are given as
  $(g_1(t)/g_0(t),\dots,g_{2N}(t)/g_0(t))$, where $t$ is a root of
  $f$. These polynomials are of degree $m^{O(N)}$ and their maximum
  coefficient bitsize is $\max(\sigma,\tau) m^{O(N)}$.
  
  Now consider the matrix games $A^k(v_1)$ for all positions $k$. We
  find optimal strategies $p^1,\dots,p^N$ that correspond to basic
  feasible solutions of the linear program LP~(\ref{EQ:MatrixLP}).
  Notice that the elements of these matrix games are rational
  polynomial functions in $g_0,\dots,g_{N}$. By Lemma~\ref{LEM:LPbfs}
  we have $p^k_i =\det((M^{A^k}_{B^k})_i)/\det(M^{A^k}_{B^k})$ for
  some potential basis sets $B^1,\dots,B^k$. Using
  Lemma~\ref{PROP:DetPolyBound}, each $p^k_i$ is a rational polynomial
  function in $g_0,\dots,g_{N}$ of degree $m^{O(N)}$ and maximum
  coefficient bitsize $\max(\sigma,\tau) m^{O(N)}$. Substituting the
  root $t$ of $f$ using Lemma~\ref{LEM:convert} we obtain the
  statement.
\end{proof}

Using Lemma~\ref{lem:uni-bounds} we deduce:
\begin{corollary}
  An Everett game with coefficient bitsize bounded by $\tau$ has
  a $2^{-\sigma}$ optimal strategy where the probabilities
  are either zero or bounded from below by $2^{-\max(\sigma,\tau) m^{O(N)}}$.
\end{corollary}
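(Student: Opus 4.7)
The plan is to combine Theorem~\ref{th:e-strategy-prob} with the univariate Cauchy root bound from Lemma~\ref{lem:uni-bounds}. The theorem already does the heavy lifting by producing a $2^{-\sigma}$-optimal strategy in which every probability is a real algebraic number whose minimal polynomial $f \in \ZZ[x]$ has $\deg(f) = m^{O(N)}$ and maximum coefficient bitsize $\max(\sigma,\tau)\,m^{O(N)}$. In particular $\|f\|_\infty \leq 2^{\max(\sigma,\tau)\,m^{O(N)}}$.

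First I would fix such a strategy and consider any probability $p^k_i$ appearing in it. If $p^k_i = 0$ there is nothing to show, so assume $p^k_i \neq 0$. Then $p^k_i$ is a \emph{nonzero} root of the defining polynomial $f$ supplied by Theorem~\ref{th:e-strategy-prob}.

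Next, I would apply the lower bound part of Lemma~\ref{lem:uni-bounds}, which states that every nonzero root $\gamma$ of $f \in \ZZ[x]$ satisfies
\[
|\gamma| \geq (2\|f\|_\infty)^{-1}.
\]
Substituting the bound on $\|f\|_\infty$ from Theorem~\ref{th:e-strategy-prob} gives
\[
|p^k_i| \geq 2^{-\max(\sigma,\tau)\,m^{O(N)} - 1} = 2^{-\max(\sigma,\tau)\,m^{O(N)}},
\]
where the additive $+1$ is absorbed into the $O(N)$ exponent (for any sufficiently large choice of the implicit constant in the big-O). This is exactly the claimed bound, so the corollary follows.

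There is no real obstacle here: the proof is a direct one-line application of the Cauchy bound to the univariate defining polynomials produced by Theorem~\ref{th:e-strategy-prob}. The only thing worth remarking on is that one must observe that the additive constant $1$ coming from the $(2\|f\|_\infty)^{-1}$ in Lemma~\ref{lem:uni-bounds} is negligible compared to the $m^{O(N)}$ factor in the exponent, and can be swept into the big-O.
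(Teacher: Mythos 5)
Your proposal is correct and is exactly the paper's argument: the paper derives the corollary from Theorem~\ref{th:e-strategy-prob} by a direct application of the Cauchy-type lower bound in Lemma~\ref{lem:uni-bounds} to the defining polynomials of the probabilities, just as you do. The handling of the additive constant by absorbing it into the $m^{O(N)}$ exponent is also the intended (implicit) step.
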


We now apply Lemma 3 of Hansen, Ibsen-Jensen and Miltersen \cite{CRGIteration} and conclude that
value iteration and strategy iteration on a deterministic concurrent
reachability game (where $\tau = O(1)$) will compute an
$\epsilon$-optimal
strategy after at most $(\frac{1}{\epsilon})^{m^{O(N)}}$ iterations.
This matches the lower bound obtained by Hansen, Ibsen-Jensen and Miltersen \cite{CRGIteration}.

% \begin{lemma}
%   Let $\alpha$ be a root of $f \in \ZZ[x]$, which is of degree $d$ and
%   maximum coefficient bitsize at most $\tau$.  Moreover, let
%   $g(\alpha) = p(\alpha)/q(\alpha)$ where $p, q \in \ZZ[x]$ are of degree
%   at most $d$, have maximum coefficient bitsize at most $\tau$, 
%   and $q(\alpha) \not= 0$.
%   The minimal polynomial of
%   $g(\alpha)$ is a univariate polynomial of degree at most $d$ and
%   maximum coefficient bitsize at most $2 d \tau + 5 d \lg{d}$.
% \end{lemma}
% \begin{proof}
%   The minimal polynomial of $g(\alpha)$ is among the square-free
%   factors of the following (univariate) resultant with respect to $y$,
%   e.g.~\cite{Yap2000,Mign91},
%   \begin{displaymath}
%     r(x) = \mathrm{res}_y( f(y), q(y)x - p(y)) \in \ZZ[x].
%   \end{displaymath}
%   The degree of $r$ is bounded by $d$ and its maximum coefficient
%   bitsize is at most $2 d \tau + 5 d \lg{d}$
%   \cite[Proposition~8.50]{BasuPollackRoy2006}.
% \end{proof}

\bibliographystyle{plain}
\bibliography{DSGAlgorithm}
\end{document}